\documentclass[envcountsame,cleveref]{lipics-v2021}
\usepackage{complexity}
\usepackage{amsmath}

\usepackage{thmtools}
\usepackage{thm-restate}

\usepackage{amssymb}
\usepackage{multirow}
\usepackage{tikz-network}
\usepackage{tikz}
\usetikzlibrary{arrows,automata,positioning,calc}
\usepackage{comment}

\usepackage{hyperref}

\usepackage{pdflscape}



\hideLIPIcs
\nolinenumbers

\usepackage{ifpdf}
\ifpdf
\fi

\newtheorem{claim}{Claim}{\itshape}{\rmfamily}

\usepackage{todonotes}

\usepackage{environ}

\newcommand{\repeattheorem}[1]{%
	\begingroup
	\renewcommand{\thetheorem}{\ref{#1}}%
	\expandafter\expandafter\expandafter\theorem
	\csname reptheorem@#1\endcsname
	\endtheorem
	\endgroup
}

\NewEnviron{reptheorem}[1]{%
	\global\expandafter\xdef\csname reptheorem@#1\endcsname{%
		\unexpanded\expandafter{\BODY}%
	}%
	\expandafter\theorem\BODY\unskip\label{#1}\endtheorem
}

\newcommand{\repeatlemma}[1]{%
	\begingroup
	\renewcommand{\thetheorem}{\ref{#1}}%
	\expandafter\expandafter\expandafter\lemma
	\csname replemma@#1\endcsname
	\endlemma
	\endgroup
}

\NewEnviron{replemma}[1]{%
	\global\expandafter\xdef\csname replemma@#1\endcsname{%
		\unexpanded\expandafter{\BODY}%
	}%
	\expandafter\lemma\BODY\unskip\label{#1}\endlemma
}

\newcommand{\repeatproposition}[1]{%
	\begingroup
	\renewcommand{\thetheorem}{\ref{#1}}%
	\expandafter\expandafter\expandafter\proposition
	\csname repproposition@#1\endcsname
	\endproposition
	\endgroup
}

\NewEnviron{repproposition}[1]{%
	\global\expandafter\xdef\csname repproposition@#1\endcsname{%
		\unexpanded\expandafter{\BODY}%
	}%
	\expandafter\proposition\BODY\unskip\label{#1}\endproposition
}

\newcommand{\repeatcorollary}[1]{%
	\begingroup
	\renewcommand{\thetheorem}{\ref{#1}}%
	\expandafter\expandafter\expandafter\corollary
	\csname repcorollary@#1\endcsname
	\endcorollary
	\endgroup
}

\NewEnviron{repcorollary}[1]{%
	\global\expandafter\xdef\csname repcorollary@#1\endcsname{%
		\unexpanded\expandafter{\BODY}%
	}%
	\expandafter\corollary\BODY\unskip\label{#1}\endcorollary
}

\crefname{lemma}{Lemma}{Lemmata}
\Crefname{lemma}{Lemma}{Lemmata}

\usepackage{etoolbox}

\newif\ifwithappendix
\withappendixfalse

\newcommand{\collectedappendix}{}

\NewEnviron{appendixmaterial}{%
	\ifwithappendix
	\expandafter\gappto
	\expandafter\collectedappendix
	\expandafter{\BODY}%
	\else
	\renewcommand{\repeattheorem}[1]{}%
	\renewcommand{\repeatlemma}[1]{}%
	\renewcommand{\repeatproposition}[1]{}%
	\renewcommand{\repeatcorollary}[1]{}%
	\BODY
	\fi
}

\newcommand{\printcollectedappendix}{%
	\ifwithappendix
	\appendix
	\section{Additional proofs}
	\label{app:collected-proofs}
	\collectedappendix
	\fi
}

\newtheorem{mialc}{Claim}

\newcommand{\resp}{respectively}
\newcommand{\dfa}{DFA}

\newcommand{\N}{\ensuremath{\mathbb{N}}}

\DeclareMathOperator{\sd}{\bf sd}

\renewcommand{\L}{\ComplexityFont{L}}

\renewcommand{\epsilon}{\varepsilon}

\usepackage{graphicx}

\makeatletter
\newcommand{\precneq}{\mathrel{\text{\prec@eq}}}
\newcommand{\prec@eq}{%
	\oalign{%
		\hidewidth$\m@th\prec$\hidewidth\cr
		\noalign{\nointerlineskip\kern1ex}%
		$\m@th\smash{\Neq}$\cr
		\noalign{\nointerlineskip\kern-.5ex}%
	}%
}
\newcommand{\Neq}{\raisebox{0.65ex}{\rotatebox{90}{\scalebox{1}[-1]{$\nshortmid$}}}}
\makeatother

\DeclareMathOperator{\lcm}{lcm}
\DeclareMathOperator{\mpc}{\bf mpc}

\DeclareMathOperator{\mpl}{{\bf mpl}}

\newcommand{\dsc}{\texttt{sc}}

\def\ta{\mathtt{a}}
\def\tb{\mathtt{b}}
\def\tc{\mathtt{c}}

\DeclareMathOperator\ScatFact{ScatFact}

\DeclareMathOperator{\Fix}{Fix}
\newcommand{\Pump}{\operatorname{Pump}}
\newcommand{\NonRep}{\operatorname{NonRep}}

\def\nth#1{#1$^{\text{th}}$}

\newif\ifpaper
\paperfalse 

\newif\iflipics
\lipicstrue 

\definecolor{lime}{HTML}{A6CE39}
\DeclareRobustCommand{\orcidicon}{%
	\begin{tikzpicture}
		\draw[lime, fill=lime] (0,0) 
		circle [radius=0.16] 
		node[white] {{\fontfamily{qag}\selectfont \tiny ID}};
		\draw[white, fill=white] (-0.0625,0.095) 
		circle [radius=0.007];
	\end{tikzpicture}
	\hspace{-2mm}
}

\foreach \x in {A, ..., Z}{%
	\expandafter\xdef\csname orcid\x\endcsname{\noexpand\href{https://orcid.org/\csname orcidauthor\x\endcsname}{\noexpand\orcidicon}}
}

\definecolor{DarkGreen}{RGB}{0,100,0}

\newcommand{\orcidauthorA}{0000-0002-1531-7970}
\newcommand{\orcidauthorB}{0009-0005-1145-5806}
\newcommand{\orcidauthorC}{0000-0003-1008-4642}

\title{Witness Languages for Pumping Lemmata}

\titlerunning{Witness Languages for Pumping Lemmata}

\author{Pamela Fleischmann}
{Kiel University, Germany}
{fpa@informatik.uni-kiel.de}
{\orcidauthorA}{}

\author{Annika Huch}
{Kiel University, Germany}
{ahu@informatik.uni-kiel.de}
{\orcidauthorB}{}

\author{Christian Rauch}
{School of Electrical Engineering and Computer Science, University of Kassel, Germany} 
{christian.rauch@uni-kassel.de}
{\orcidauthorC}{}

\EventEditors{John Q. Open and Joan R. Acces}
\EventNoEds{2}
\EventLongTitle{42nd Conference on Very Important Topics (CVIT 2016)}
\EventShortTitle{CVIT 2016}
\EventAcronym{CVIT}
\EventYear{2016}
\EventDate{December 24--27, 2016}
\EventLocation{Little Whinging, United Kingdom}
\EventLogo{}
\SeriesVolume{42}
\ArticleNo{23}

\ccsdesc[500]{Theory of computation~Regular languages} 

\keywords{formal languages, pumping lemma, witness language, scattered factor distance} 

\Copyright{Pamela Fleischmann, Annika Huch, Christian Rauch} 

\withappendixfalse

\begin{document}
	
	\maketitle
	
	\begin{abstract}
	In 2022 Dassow and Jecker started investigations on the operational complexity of the pumping constant w.r.t.~two pumping lemmata. Since then, the research on pumping lemmata emerged to several variants for regular as well as context-free languages. Within this work, we further investigate the \emph{witnesses} of regular languages (firstly introduced by Rauch) w.r.t.~different pumping lemmata: For a language $L$ and a minimal pumping constant $K$, a \emph{witness} of a language is a word that cannot be \emph{pumped} by $K$ w.r.t.~the given pumping lemma. Here, the pumping constant is minimal if for all smaller numbers, the pumping lemma condition is not satisfied. Alongside combinatorial investigations on (unary) witness languages, we introduce and study the measure of scattered factor distance, which is related to minimal pumping 
	constants.
	We present bounds on minimal pumping constants by the scattered factor distance and show that deciding, given two regular languages, whether one of them is equal to the witness language of the second is \coNP-complete.
	
\end{abstract}

	\newpage
	
	\section{Introduction}
	In parallel to the beginning of automata and formal language theory, researchers developed pumping and iteration properties as analytical tools for classifying formal languages
with the aim of obtaining deeper insights into the computational complexity and expressive power of various types of mechanisms accepting or generating languages.
In fact, the literature on pumping properties
is extensive and covers a broad spectrum of  applications beyond mere language classification; see, for instance, \cite{deLuVa99,EhPaRo81,Ho78}, where language families are defined \textit{via} pumping
properties. To give some examples, in \cite{So80} the focus is on pumping with the additional
constraint that the repetition of a factor, i.e., a contiguous substring of a word, is permitted only if it is done a
minimal number of times. In \cite{Ka14}, the authors examined regular
pumping of Turing machine languages and learnability, while
\cite{MPPS20,PMPPS21} studied the connections between restarting automata, pumping patterns, and formal neural networks.
Recent developments on pumping properties for register and weighted automata on infinite alphabets can be found in \cite{DBLP:conf/stacs/Danieli26,DBLP:conf/stacs/MazowieckiPS26}.
One of the classical pumping lemmata for regular
languages was proven by Kozen in~\cite[page~70, Theorem~11.1]{Ko97}. For an annotated bibliography on pumping lemmata, we refer to~\cite{Ni82}.

\begin{lemma}[Kozen's Pumping Lemma]\label[lemma]{lem:pumping}\label[lemma]{lem:mpc}
	Let~$L\subseteq\Sigma^{\ast}$ be a regular language. Then there is a
	constant~$p$ (depending on~$L$) such that for all
	$w\in L$ with $|w| \geq p$ there exist $x,y,z\in \Sigma^*$,\
	such that $w = xyz$, $|y|>0$, and
	$xy^tz \in L$ for all~$t\geq 0$. We say that $y$ can be
	\emph{pumped} in~$w$.
\end{lemma}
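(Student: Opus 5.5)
The plan is the standard automaton argument. Since $L$ is regular, fix a deterministic finite automaton $A=(Q,\Sigma,\delta,q_0,F)$ accepting $L$, and set $p=|Q|$. This choice depends only on $L$, as the statement requires. We will show that every $w\in L$ with $|w|\ge p$ has a nonempty factor that can be pumped.

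Take such a $w=a_1\cdots a_n$ with $n\ge p$. Consider its run $q_0,q_1,\dots,q_n$, where $q_i=\delta(q_{i-1},a_i)$. Look only at the first $p+1$ states $q_0,\dots,q_p$. Since $Q$ has just $p$ elements, the pigeonhole principle gives indices $0\le i<j\le p$ with $q_i=q_j$. Now set
\[
x=a_1\cdots a_i,\qquad y=a_{i+1}\cdots a_j,\qquad z=a_{j+1}\cdots a_n .
\]
Then $w=xyz$, and $|y|=j-i>0$. As a bonus, $|xy|\le p$, though Kozen's formulation does not require this.

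It remains to check that $xy^tz\in L$ for all $t\ge 0$. We have $\hat\delta(q_0,x)=q_i$ and $\hat\delta(q_i,y)=q_j=q_i$. By induction on $t$, this gives $\hat\delta(q_i,y^t)=q_i$ for every $t\ge0$; the case $t=0$ is the empty word. Therefore
\[
\hat\delta(q_0,xy^tz)=\hat\delta(q_i,z)=\hat\delta(q_j,z)=q_n\in F,
\]
where the last membership holds because $w\in L$. So $xy^tz\in L$.

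The only real point needing care is the index bookkeeping. We need $p+1$ states to force a repetition, which is exactly why $|w|\ge p$ suffices. We also need the loop to be nonempty, which holds because $i<j$. Everything else is routine use of the extended transition function.
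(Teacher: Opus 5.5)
Your proof is correct. It is the standard pigeonhole argument on the run of a DFA, and the paper does not reprove the lemma: it cites Kozen and remarks only that $p$ may be chosen as the number of states of an automaton accepting $L$, which is exactly your choice. Your extra observation $|xy|\le p$ is precisely what yields the paper's stronger variant of the lemma and the bound of $\mpl(L)$ by the state complexity of $L$.
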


Given a regular language~$L$, the value of~$p$ in the aforementioned
pumping lemma can always be chosen to be the number of states of some
finite automaton, regardless of whether it is deterministic or
non-deterministic, that accepts~$L$. 
Consider the regular language
$L_1 = \{(\ta \tb)^n \ta \ta \mid n \geq 1 \}\cup\{\ta\}$ (see \Cref{fig:automaton-abnaa} for a DFA accepting~$L$):
\begin{figure}
	\centering
	\begin{tikzpicture}[->,>=stealth',shorten >=1pt,auto,node distance=1.1cm,
		semithick,scale=0.7, every node/.style={scale=0.7},initial text=]
		\tikzstyle{every state}=[fill={rgb:black,1;white,10},ellipse]
		
		\node[state,initial] (q0) [thick] {$q_0$};	
		\node[state,right=of q0,accepting] (q1) [thick] {$q_1$};
		\node[state,right=of q1] (q2) [thick] {$q_2$};	
		\node[state,right=of q2] (q3) [thick] {$q_3$};	
		\node[state,accepting,right=of q3] (q4) [thick] {$q_4$};	
		\node[state,right=of q4] (q5) [thick] {$q_5$};	
		
		\path [-stealth, thick]
		(q0) edge node {$\ta$}   (q1)
		(q0) edge[bend right] node {$\tb$}   (q5)
		(q1) edge  node {$\tb$} (q2)
		(q1) edge[bend right]  node {$\ta$}(q5)
		(q2) edge node  {$\ta$} (q3)
		(q2) edge[bend right]  node {$\tb$}(q5)
		(q3) edge  node {$\ta$} (q4)
		(q3) edge[bend right] node[above] {$\tb$}(q2)
		(q4) edge  node {$\ta,\tb$} (q5)
		(q5) edge [loop right]  node {$\ta,\tb$}(q5)
;
	\end{tikzpicture}

	\caption{A \dfa\ accepting $\{(\ta \tb)^n \ta \ta \mid n \geq 1\}\cup\{\ta\}$.}
	\label{fig:automaton-abnaa}
\end{figure}
each word in~$L$ that contains the factor~$(\ta \tb)^2$ can be pumped by the factor~$\ta \tb$. 
However, the word~$\mathtt{abaa}$ cannot be pumped: the lemma forces us to also pump by $t=0$, i.e., deleting a factor, and the only shorter word in~$L$ is~$\ta$ but $(\mathtt{aba})^2\ta$ as well as~$\ta(\mathtt{baa})^2$ are not in~$L$ which implies that~$\mathtt{abaa}$ cannot be pumped by any of its factors.
Consequently, all values $0\leq p\le 4$ do
\emph{not} satisfy the property of the pumping lemma, but $p=5$
does.
A closer look at some exemplary languages reveals that sometimes a
much smaller value for $p$ than the number of states suffices. For instance, consider the language~$L_2=\ta^* + \ta^*\tb\tb^*$,
given as a regular expression.
Indeed, the language~$L_2$ is accepted by a (minimal) DFA with three states, the sink state included.
Nevertheless, the statement of the pumping lemma is already satisfied for $p=1$ which is minimal since $p=0$ is equivalent to $L=\emptyset$~\cite{DaJe22}.

This motivates the notion of a \emph{minimal pumping constant} for a language~$L$ with respect to 
a particular pumping lemma, which is the smallest number~$p$ such that
the pumping lemma under consideration for the language~$L$ is
satisfied. 
Formally, the smallest pumping constant~$p$ satisfying the above pumping lemma
for a (not necessarily regular) language~$L$ is denoted by~$\mpc(L)$, where we define~$\mpc(L)=\infty$
if there is no natural number~$p$ such that the pumping lemma is fulfilled.
In recent years, these minimal pumping constants have attracted attention. Starting with an investigation of the operational complexity of the aforementioned pumping constants~\cite{DaJe22}, subsequent work has focused on their descriptional complexity~\cite{HoRa23a,HoRa23d}.
In addition, the computational complexity of the associated minimisation problem (\textsc{Pumping-Problem}), which asks for the smallest value~$p$ satisfying a given pumping lemma, has been analysed for both regular~\cite{GHR23} and non-regular language families~\cite{GrHoRa24b,HoRa25}.
Most recently, in a broader area of this topic, the relation between the size of a given grammar and its minimal pumping constant has been studied in~\cite{GrHo25,MaMaZe26}.

To demonstrate that a language does \emph{not} exhibit a particular pumping property, one typically identifies a word that cannot be pumped with respect to any value smaller than the minimal pumping constant.
We refer to such a word as a \emph{witness} of the language (a notion introduced in \cite{Ra25}) and to the set of all witnesses of a given regular language as its \emph{witness language}.
By the prior example, every non-empty word in $L_2$ can be pumped by its first letter; thus the only word in the witness language of~$L_2$ is~$\lambda$ which cannot be pumped.
Note that the words contained in a witness language may be shorter than the minimal pumping constant, e.g. $L_1$ (see \Cref{fig:automaton-abnaa} for a DFA accepting~$L_1$).
As already shown, every word in~$L_1$ containing at least~$5$ letters can be pumped, whereas~$\mathtt{abaa}$ cannot be pumped. Moreover, the word~$\ta\in L_1$ cannot be pumped because it is the shortest word in~$L_1$.
Consequently, the witness language of~$L_1$ is given by~$\{\ta, \mathtt{abaa}\}$.

%
%

\textbf{Own Contribution.}
We start with an investigation on the smallest powers that can be used for a non-regular language to show its non-regularity using one of the pumping lemmata. Here, we relate the fields of pumping lemmata to the one of scattered factors by the scattered factor distance.
Afterwards, we move the focus back to regular languages by investigating the set of words of a regular language that are not pumpable by some fixed power $t$. Here, we start with general results on witness languages and then move on to unary regular languages. Moreover, we investigate the \textsc{Pumping-Witness-Language-Problem} which is to decide whether a regular language is the witness language of a second given regular language and show the problem's \coNP-completeness.

\textbf{Structure of the Work.}
After recalling the relevant Pumping Lemmata and fixing the necessary notation for the study of witness languages, we first inspect a witness property for non-regular languages and their connection to the scattered-factor distance within a language in \Cref{sec:smallest-pumping-non-reg}. In \Cref{sec:non-unary-witness} we present general results on witness languages of regular languages. Since the situation for general regular languages gets more involved, \Cref{sec:unary-witness} narrows the investigation down to unary regular languages. Lastly, \Cref{sec:witness-problem} is dedicated to the Pumping-Witness-Language-Problem which is to decide for two languages and a pumping lemma whether one language is the witness language of the second one. 

Further, \Cref{sec:witness-language-operator} contains an AI-assisted study of the witness language operator. Note that the usage of AI is restricted to \Cref{sec:witness-language-operator} only.


	\section{Preliminaries}
	Let $\mathbb{N}$ be the set of all natural numbers, $\mathbb{N}_0 = \mathbb{N} \cup \{0\}$, $[n] = \{1,\ldots,n\}$, $[i,j]=[j]\backslash[i-1]$, $[n]_0 = [n] \cup \{0\}$, and $[n]_{>i} = \{i+1,\ldots, n\}$ for some $i\leq j <n$. 
An \emph{alphabet} $\Sigma$ is a non empty, finite set whose elements are called \emph{letters} or {\em symbols}.
A \emph{word} is a finite sequence of letters from $\Sigma$. Let $\Sigma^*$ be 
the set of all finite words over $\Sigma$ with concatenation and the empty word $\lambda$ as 
neutral element.
Set $\Sigma^+ = \Sigma^* \setminus \{\lambda\}$. Let $w\in\Sigma^{\ast}$. 
For all $n\in\N_0$, inductively define $w^0=\lambda$ and $w^n=ww^{n-1}$.
The {\em length} $|w|$ of $w$ is the number of $w$'s letters; thus $|\lambda| = 0$. 
For all $k \in \mathbb{N}_0$ set $\Sigma^k = \{w \in \Sigma^* \mid |w| = k\}$ (define $\Sigma^{\leq k}$, $\Sigma^{\geq k}$ analogously). We denote $w$'s \nth{$i$} letter by $w[i]$ 
and set $w[i..j]=w[i]\cdots w[j]$ if $i < j$,  and $\lambda$ if $ i > j$ for all $i,j \in [\vert w \vert]$. 
For each $\ta \in \Sigma$ set $|w|_{\ta} = |\{i \in [|w|] \mid w[i]=\ta \}|$. The word $u \in 
\Sigma^*$ is called a \emph{factor} of $w$ if there exist $x,y \in \Sigma^*$ such that $w = xuy$.
In the case $x=\lambda$, we call $u$ a \emph{prefix} of $w$ and \emph{suffix} if $y = \lambda$.
Further, for $w = xy \in \Sigma^*$ we define $x^{-1}w = y$ and $w y^{-1} = x$. 
By $w^R$ we denote the reversal of $w$, i.e., $w^R = w[|w|] \cdots w[1]$.
A word $u\in\Sigma^{\ast}$ is called a {\em scattered factor} of a word $w\in\Sigma^{\ast}$
if there exist possibly empty words $v_1,\ldots,v_{|u|+1}\in\Sigma^{\ast}$ such that $w=v_1u[1]v_2u[2]\cdots v_{|u|}u[|u|]v_{|u|+1}$. We denote the set of scattered factors of a word $w$ by $\ScatFact(w)$.
A {\em (formal) language} $L$ is a subset of $\Sigma^{\ast}$.
A language~$L\subseteq\Sigma^*$ is \emph{suffix closed} if $L=\{\,x \mid \exists y\in\Sigma^{\ast}:\, yx\in L\}$.
We denote by $\mathcal{FIN}_\Sigma$ the set of all finite languages over $\Sigma$ (if $\Sigma$ is known from the context, the subscript is omitted). For abbreviation, we set $L_{\geq p}=L\cap\Sigma^{\geq p}$ for $L\subseteq \Sigma^{\ast}$ and $p\in\N$. Further, $L^R = \{w^R \mid w \in L\}$.


Next we fix some definitions on finite automata (cf.~\cite{Ha78}). 
A \emph{deterministic finite automaton} (\dfa)
is a quintuple $A=(Q,\Sigma,{}\cdot{},s,F)$, where~$Q$ is the finite set
of \textit{states}, $\Sigma$ is the finite set of \textit{input
	symbols}, $s\in Q$ is the \textit{initial state}, $F\subseteq Q$ is
the set of \textit{accepting states}, and the \textit{transition function}~$\cdot$ maps~$Q\times\Sigma$ to~$Q$; written infix and intuitively recursively  extended to $\Sigma^{\ast}$.
The \emph{language accepted}
by the \dfa~$A$ is defined as $L(A) =\{\,w\in \Sigma^*\mid\,s\cdot w\in F\}$.
A \dfa\ is \textit{unary}, 
if $|\Sigma|=1$.
The \emph{state complexity} of a language~$L$ accepted by a \dfa\ is denoted by
$$\dsc(L)=\min\{\,\dsc(A)\mid\mbox{$A$ is a \dfa\ with $L=L(A)$}\,\},$$
where $\dsc(A)$ refers to the number of states of the
automaton~$A$.
%
%
A finite automaton is \emph{minimal} if its number of states is
minimal w.r.t.~the accepted language. 
The syntactic monoid for a given language $L\subseteq \Sigma^*$, is
defined by the \emph{syntactic congruence}~$\equiv_L$ over~$\Sigma^*$
where $v_1\equiv_L v_2$ iff $uv_1w\in L\iff uv_2w\in L$ for
every $u,w\in\Sigma^*$.  Then the \emph{syntactic monoid} is the
quotient monoid $M(L)=\Sigma^*/\!\equiv_L$ where the concatenation of
equivalence classes $[u]_{\equiv_L}\cdot [v]_{\equiv_L}=[uv]_{\equiv_L}$
serves as the monoid operation. 
The syntactic monoid of a regular language~$L$ is the smallest monoid
recognising the language under consideration (w.r.t.~the
division relation) and it is isomorphic to the transformation monoid
of the minimal deterministic finite automaton
accepting~$L$. 
A language $L\subseteq\Sigma^*$ is \emph{recognisable} if and only if
there exists a finite monoid~$M$, a morphism
$\varphi_L:\Sigma^*\rightarrow M$, and a subset $N\subseteq M$ such
that $L=\varphi_L^{-1}(N)$, which in turn is equivalent to the
regularity (acceptance by a finite state machine) of~$L$.
For an $n$-state \dfa\ the size of the
transformation monoid is at most~$n^n$.


Regular languages satisfy a variety of different pumping lemmata---for
a comprehensive list of pumping or iteration lemmata we refer
to~\cite{Ni82}.
%

The next lemma is a stronger variant of Lemma~\ref{lem:mpc}, which can be found in e.g.~\cite[p.~119, Lemma~8]{RaSc59},
\cite[p.~252, Folgerung~5.4.10]{Br84}, and~\cite[p.~56,
Lemma~6.1]{HoUl79}, and reads as follows:

\begin{lemma}\label[lemma]{lem:pumping-stronger-version}\label[lemma]{lem:mpl}
	Let~$L\subseteq\Sigma^{\ast}$ be a regular language. Then there is a
	constant~$p \in \N_0$ (depending on~$L$) such that for all
	$w\in L$ with $|w| \geq p$ there exist $x,y,z\in \Sigma^*$,\
	such that $w = xyz$, $|y|>0$, $|xy|\leq p$ and
	$xy^tz \in L$ for all~$t\geq 0$.
	We again say that~$y$ can be \emph{pumped} in~$w$.
\end{lemma}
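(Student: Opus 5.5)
We prove the statement with the classical pigeonhole argument on a deterministic finite automaton for $L$. Since $L$ is regular, fix a \dfa\ $A=(Q,\Sigma,\cdot,s,F)$ with $L(A)=L$ and set $p=|Q|=\dsc(A)$. We will show that this value of $p$ satisfies the condition of the lemma. In particular, the minimal constant for this lemma is at most $\dsc(L)$.

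Let $w\in L$ with $|w|\geq p$ and write $n=|w|$. Consider the sequence of states visited while reading the prefix of length $p$,
\[
q_i = s\cdot w[1..i], \qquad i\in[p]_0 ,
\]
where $q_0=s$. This sequence consists of $p+1$ states from a set of only $p$ states. By the pigeonhole principle there are indices $0\le i<j\le p$ with $q_i=q_j$.

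We set
\[
x=w[1..i],\qquad y=w[i+1..j],\qquad z=w[j+1..n].
\]
Then $w=xyz$, $|y|=j-i>0$, and $|xy|=j\le p$. Since $s\cdot x=q_i$ and $q_i\cdot y=q_j=q_i$, an induction on $t$ gives $s\cdot xy^t=q_i$ for every $t\ge 0$; the case $t=0$ is simply $s\cdot x=q_i$. Consequently
\[
s\cdot xy^tz = q_i\cdot z = q_j\cdot z = s\cdot w\in F,
\]
so $xy^tz\in L$ for all $t\ge 0$.

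There is no real obstacle in this argument. The only points that need care are boundary cases. If $p=0$, which can only happen when one allows an empty state set, the premise $|w|\ge 0$ would need some $w\in L$; but an automaton without states accepts nothing, so $L=\emptyset$ and the statement holds vacuously. For every nonempty regular language the \dfa\ has $p\ge1$, and the pigeonhole step works as described. The key ingredient is considering the first $p+1$ states of the run instead of all $n+1$ states; this is exactly what yields the extra condition $|xy|\le p$ compared with \Cref{lem:pumping}.
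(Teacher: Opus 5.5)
Your proof is correct. It is the classical pigeonhole argument on the first $p+1$ states of a run of a \dfa{} with $p$ states, which is the proof in the references the paper cites for this lemma; the paper itself gives no proof, only the remark that $p$ may be taken as the number of states of an automaton for $L$, whence $\mpl(L)\le\dsc(L)$.
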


Note that Lemma~\ref{lem:pumping-stronger-version} implies Kozen's pumping
lemma as the condition $|xy|\leq p$ is \emph{not} required. 

%

\begin{definition}
The smallest pumping constant~$p$ satisfying Kozen's pumping lemma
for a (not necessarily regular) language~$L$ is denoted by~$\mpc(L)$.
Analogously, for each language~$L$ let~$\mpl(L)$ denote the smallest value~$p$ that satisfies Lemma~\ref{lem:mpl}.
Set $\mpc(L)$ and $\mpl(L)$ resp. to $\infty$, if there is no natural number~$p$ such that the resp. pumping lemma is fulfilled.
\end{definition}

\begin{remark}
Pumping lemmata were considered in~\cite{DaJe22}, where 
besides some immediate facts such as \\
- $\mpc(L)=0$ iff $\mpl(L)=0$ iff $L=\emptyset$,\\
- $\mpc(L)=\mpl(L)= 1+\max\{\,|w|\mid w\in L\,\}$ for finite language $L\neq\emptyset$,\\
- $\mpc(L)=1$ implies $\lambda\in L$, and\\
- if $\mpl(L)=1$, then~$L$ is suffix closed,\\
also the inequalities $\mpc(L)\leq\mpl(L)\leq\dsc(L)$
and results on the operational complexity w.r.t.\ these minimal
pumping constants were shown.
\end{remark}

\begin{definition}
Let~$L \subseteq\Sigma^*$ and $K\in\{\mpc,\mpl\}$.
For a word~$w\in L$ we say that~$w$ can be \emph{~$K$-pumped} w.r.t. the value~$p \in \N$ and the language $L$ if there is a decomposition of the word~$w$ that fulfils the conditions of the pumping lemma associated to~$K$ w.r.t~$p$ and $L$. Then~$w$ is \emph{~$K$-pumpable}.
If the pumping lemma (the language) is clear from the context, we omit the prefix~$K$ (the additional phrase w.r.t.~$L$).
\end{definition}

The following definition captures the set of our interest containing witnesses for being not less than $K$-pumpable, i.e., the {\em witness language}. Note, that this definition is a more relaxed version of the definition of witnesses in~\cite{Ra25}. Indeed, there are pumping lemmata such that the pumping property also takes the complement of~$L$ into account, e.g., see~\cite{Ja78}. Then we could simply allow the word~$w$ to behave complementary as well. Since we are only dealing with pumping properties regarding the language-and not its complement-the definition is restricted to the case~$w\in L$.

\begin{definition}
Let~$L$ be a language over the alphabet~$\Sigma$ and let $K\in\{\mpc,\mpl\}$.
The word~$w\in L$ is called a \emph{$K$-witness} if there is no $p < K(L)$ such that $w$ is $K$-pumpable w.r.t. $p,L$. If $K$ is clear from the context, we call $w$ just a \emph{witness}.
For a given regular language~$L$, let $W^{K}(L)$ denote the \emph{witness language} defined as the set of $K$-witnesses of~$L$. 
If $K$ is clear from the context, we write~$W(L)$ instead.
\end{definition}

It is clear that the witness language depends on the minimal pumping constant of the underlying language. Hence, we want to examine which witness languages can be obtained for a fixed minimal pumping constant.

\begin{definition}
Let $K\in\{\mpc,\mpl\}$, and~$p \in \N$. Then we set
$$\mathcal{W}_p^K:=\{\,W^K(L)\mid L\subseteq \Sigma^* \mbox{is a regular language with } K(L)=p\}$$
and
$$\mathcal{W}_p^{K,u}:=\{\,W^K(L)\mid L\subseteq \Sigma^* \mbox{is a unary regular language with } K(L)=p\}$$
to be the sets of (unary) witness languages for the value~$p$.
\end{definition}

Further, we introduce the measure {\em scattered factor distance} that will be compared later to the pumping complexities.
Let
\(
\overline{\N}_0=\N_0\cup\{\infty\}
\)
 be equipped with its natural order.
For a language
$L\subseteq\Sigma^*$, we define its \emph{scattered-factor
	distance} by
\[
\sd(L)
=
\sup_{w\in L}
\min
\left\{
|w|-|w'|
\;\middle|\;
w'\in
\{\lambda\}
\cup
\Bigl(
(\ScatFact(w)\setminus\{w\})\cap L
\Bigr)
\right\}.
\]
The supremum is taken in~$\overline{\N}_0$; in particular, we
use the convention
\(
\sup\emptyset=0.
\)
%
Note that we explicitly allow~$\sd(L)$ to be equal to~$\infty$.
A language~$L$ is called \emph{(scattered factor) dense} if~$\sd(L)<\infty$.

	\section{The Witness Property within non-regular Languages}
	\label{sec:smallest-pumping-non-reg}
	To prove that a given language~$L$ is not contained in a certain language class~$\mathcal{L}$ one can try to find a witness~$w \in L$ which contradicts a pumping property of~$\mathcal{L}$.
In terms of Lemma~\ref{lem:mpc} one has to show that for each decomposition~$w=xyz$ with~$y\ne\lambda$ there is an integer~$t\ge 0$ such that~$xy^tz\notin L$.
We set $\mathcal{L}_{\mpc} \subset 2^{\Sigma^*} \setminus REG$ ($\mathcal{L}_{\mpl} \subset 2^{\Sigma^*} \setminus REG$, \resp) to be the set of non-regular languages that do not satisfy the Pumping Lemma~\ref{lem:mpc} (\ref{lem:mpl}, \resp), where~$2^{\Sigma^*}$ denotes the power set of~$\Sigma^*$.
In this section, we will study the languages which can be proven to be in~$\mathcal{L}_{\mpc}$ or in~$\mathcal{L}_{\mpl}$ by using a given value~$t$. 
Thus, given a $t\in\N$, we investigate  $L \in \mathcal{L}_{\mpc}$ (or $L\in \mathcal{L}_{\mpl}$, \resp) that fulfil the following property: for all constants $p$, there exists $w \in {L}_{\geq p}$ such that for all $x,y,z \in \Sigma^*$ with $w= xyz$ we have $xy^t z \notin L$. Note that this immediately implies for $L \in \mathcal{L}_{K}$ that $K(L) = \infty$ for $K \in \{\mpc,\mpl\}$. The connection we establish here to the scattered factor distance is useful in the later sections.

\begin{definition}\label[definition]{deftpowerclose}
	A language $L$ is called \emph{$t$-power closed} for $t \in \N_0$ if for all $w \in L$ and any factorisation $w = xyz$ and $x,y,z \in \Sigma^*$ with $|y| \geq 1$ we have $xy^tz \in L$.
\end{definition}

\begin{remark}\label[remark]{higman}
	Every $0$-power closed language is a scattered factor closed language. By the famous result of Higman \cite{Hi52} we can conclude that this language is regular.
\end{remark}

Next, we establish the following subset relation between~$\mathcal{L}_{\mpc}$ and $\mathcal{L}_{\mpl}$ to obtain the results 
that directly hold for both sets.

\ifpaper
\begin{proposition}\label{lem:Lmpc-subsetneq-Lmpl}
	We have~$\mathcal{L}_{\mpc}\subsetneq \mathcal{L}_{\mpl}$.
\end{proposition}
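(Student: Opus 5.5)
The inclusion is just the observation that Lemma~\ref{lem:mpl} is the stronger pumping property; for strictness I will exhibit a non-regular language that satisfies Kozen's lemma but violates Lemma~\ref{lem:mpl}.

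\emph{Inclusion.} Let $L\in\mathcal{L}_{\mpc}$, so $L$ is non-regular and fails Lemma~\ref{lem:mpc}. Suppose $L$ satisfied Lemma~\ref{lem:mpl} with some constant $p$. Every decomposition $w=xyz$ witnessing Lemma~\ref{lem:mpl} also witnesses Lemma~\ref{lem:mpc} for the same $p$, since the latter only drops the requirement $|xy|\le p$. Hence $L$ would satisfy Lemma~\ref{lem:mpc}, a contradiction. Therefore $L\in\mathcal{L}_{\mpl}$. Equivalently, $\mpc(L)\le\mpl(L)$ holds for every language, so $\mpc(L)=\infty$ forces $\mpl(L)=\infty$.

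\emph{Strictness.} I propose the language
\[
L=\{\,\tb^n\tc^n\ta^m \mid n\ge 0,\ m\ge 1\,\}\cup\{\,\tb^i\tc^j\mid i,j\ge 0\,\}.
\]
\begin{itemize}
\item \emph{Non-regular.} We have $L\cap \tb^*\tc^*\ta=\{\tb^n\tc^n\ta\mid n\ge 0\}$, which is not regular. Since regular languages are closed under intersection, $L$ is not regular.
\item \emph{Satisfies Lemma~\ref{lem:mpc} with $p=1$.} A word $\tb^i\tc^j$ with $i+j\ge 1$ can be pumped by its first letter, and all resulting words stay in $\tb^*\tc^*\subseteq L$. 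A word $\tb^n\tc^n\ta^m$ with $m\ge1$ can be pumped by its last letter $y=\ta$. For $t\ge1$ we obtain $\tb^n\tc^n\ta^{m+t-1}\in L$. For $t=0$ we obtain $\tb^n\tc^n\ta^{m-1}$, which lies in the first part of $L$ if $m\ge 2$ and in $\tb^*\tc^*$ if $m=1$. Hence $\mpc(L)\le 1$, and $L\notin\mathcal{L}_{\mpc}$.
\item \emph{Fails Lemma~\ref{lem:mpl}.} Let $p\ge1$ be arbitrary and take $w=\tb^p\tc^p\ta\in L$, which has length $2p+1\ge p$. Any decomposition $w=xyz$ with $|xy|\le p$ and $|y|>0$ has $y=\tb^k$ for some $k\ge1$. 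Pumping with $t=0$ gives $\tb^{p-k}\tc^p\ta$. This word is not in $\tb^*\tc^*$ because it contains $\ta$, and it is not of the form $\tb^n\tc^n\ta^m$ because $p-k\ne p$. So no $p$ works, and $L\in\mathcal{L}_{\mpl}\setminus\mathcal{L}_{\mpc}$.
\end{itemize}

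The main obstacle is choosing this separating language. A first attempt without the filler part $\tb^*\tc^*$ fails, because the unpumpable words $\tb^n\tc^n$ would also break Kozen's lemma. Adding $\tb^*\tc^*$ makes every word Kozen-pumpable via its first or last letter, while the prefix restriction $|xy|\le p$ still confines $y$ to the non-regular counting part.
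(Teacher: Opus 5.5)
Your proof is correct. The inclusion is argued as in the paper, via $\mpc(L)\le\mpl(L)$, but your separating language is genuinely different from the paper's.

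The paper uses the palindromes $PAL$ over $\{\ta,\tb\}$. There, Kozen-pumpability with constant $1$ comes for free, since every palindrome can be pumped as a whole ($w^k$ is again a palindrome). The failure of Lemma~\ref{lem:mpl}, however, needs the purpose-built word $uu^R$ with $u=\ta\tb\ta\tb^2\ta\cdots\tb^{3p-1}\ta$. It also needs a combinatorial argument: factors of $u$ containing two $\ta$'s occur only once, so pumping inside the length-$p$ prefix destroys the prefix $u$ while the suffix $u^R$ survives.

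Your language is ad hoc, but both halves reduce to one-line checks. Kozen-pumping uses the first or last letter, with the filler $\tb^*\tc^*$ absorbing the pump-down of $\tb^n\tc^n\ta$. The prefix bound confines $y$ to the leading $\tb$-block of $\tb^p\tc^p\ta$, so already $t=0$ breaks the count.

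The paper's choice buys a natural language over a binary alphabet; yours buys transparency at the cost of a third letter. The third letter is inessential: $\ta^*\tb^*\cup\{\ta^n\tb^n(\ta\tb)^m\mid n\ge 0,\ m\ge 1\}$ works the same way over $\{\ta,\tb\}$. There you pump the last $\ta\tb$, resp.\ the first letter, and pump down $\ta^p\tb^p\ta\tb$.

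Two minor points. First, your separate non-regularity check is redundant: every regular language satisfies Lemma~\ref{lem:mpl}, so $\mpl(L)=\infty$ already forces non-regularity. Second, you skip the constant $p=0$, but it fails trivially, since $\lambda\in L$ has no decomposition with $y\ne\lambda$ and $|xy|\le 0$.
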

\begin{proof}
	Since Lemma~\ref{lem:mpl} implies Lemma~\ref{lem:mpc} we have~$\mpc(L)\le \mpl(L)$
	for all languages~$L$, i.e., $\mpc(L)=\infty$ directly implies~$\mpl(L)=\infty$ and thus~$\mathcal{L}_{\mpc}\subseteq \mathcal{L}_{\mpl}$. 
	
	For showing that the inclusion is proper let~$\Sigma=\{\ta,\tb\}$. We set
	$$PAL=\{ww^R,w \tc w^R\mid w\in \Sigma^*, \tc\in\Sigma\}$$ to be the language of palindromes over~$\Sigma$.
	Observe that each word~$w\in PAL$ can be pumped by~$w$ since~$w^k\in PAL$ for each~$k\ge 0$.
	Hence, we have~$\mpc(PAL)=1$.
	On the other hand, assume that~$\mpl(PAL)=p<\infty$ for a constant~$p$.
	We observe that the word~$w=uu^R$ with~$u=\ta\tb\ta\tb^2\ta\tb^3\ta\cdots \tb^{3p-1}\ta$
	cannot be pumped by any factor of its prefix of length~$p$ which can be seen as follows.
	Assume we have~$w=xyz$ with~$|xy|\le p$ and~$y\ne\lambda$. Since~$|u|=1+\sum_{i=1}^{3p-1}(i+1)=(3p+1)3p/2>p$ the words~$xy$ cannot contain letters from~$u$ and~$u^R$.
	Therefore, we have that the suffix of~$xy^3z$ is~$u^R$ while~$u$ is not the prefix of~$xy^{p+1}z$ anymore which can be seen as follows.
	First observe that each infix of~$u$ containing at least two $\ta$'s appears exactly once in~$u$. Indeed, if~$y$ would contain the letter~$\ta$ at least once the word~$xy^3$ contains the factor~$yy$ twice which contains the letter~$\ta$ twice.
	Hence, $xy^3$ cannot be a prefix of~$u$ and thus~$u$ is not a prefix of~$xy^{p+1}z$.
	On the other hand, if~$y$ contains no $\ta$ then we have~$xy^2=\ta\tb\ta\tb\tb\ta\cdots\ta\tb^j\ta\tb^{j+1-|y|+2|y|}$ which is not a prefix of~$u$ and thus again~$u$ is not a prefix of~$xy^{p+1}z$.
	Therefore, the word~$w$ cannot be pumped w.r.t. $p$ and we have~$\mpl(PAL)>p$ which is a contradiction.\iflipics\else\qed\fi
\end{proof}
\else

\fi

As a consequence, we will state our claims just for~$\mathcal{L}_{\mpl}$.

\begin{remark}
	Let $L \in \mathcal{L}_{\mpl}$ with the variables presented above. Directly by Definition~\ref{deftpowerclose} we have that if  $x y^tz \notin L$ for some $t \in \N_0$ then the language~$L$ is not $t$-power closed.
\end{remark}

Next we want to study the non-regular languages~$L$ which have this property.

\begin{definition}
	Let~$L\subseteq \Sigma^*$  and~$t\in \N_0$. The language~$L$ is \emph{$t$-witnessable} w.r.t. Lemma~\ref{lem:mpc} (Lemma~\ref{lem:mpl}, resp.) if
	for each $p\in\N$ there exists $w \in L_{\ge p}$ such that for each decomposition~$w=xyz$ with~$y\ne \lambda$ (and~$|xy|\le p$, resp.) we have~$xy^tz\notin L$.
\end{definition}

Roughly speaking, the exponent~$t$ suffices to show that the language~$L$ is not regular.
We start by studying a non-regular language that is not $0$-witnessable.

%
%

\begin{example}\label[example]{exa:mnl-language}
	Let~$L=\{\ta^m \tb^n \tc^\ell\mid 0\le m\le n\le \ell\}$.
	For the sake of contradiction,
	assume that~$L$ is regular and thus~$\mpl(L)=p<\infty$ for $p \in \N$.
	Since~$L$ is not the empty language we know that~$p>0$ due to the basic facts from the preliminaries.
	Then the word~$w=\ta^p \tb^p \tc^p$ fulfilling~$|w|=3p>p$ can be pumped such that~$w=xyz$ with~$|xy|\le p$ and~$|y|>1$.
	Since~$|\tb^p|=p$, the factor~$y$ cannot contain both an~$\ta$ and a~$\tc$.
	If~$y$ contains no~$\tc$ then we have~$|xy^2z|_{\ta}> |xy^2z|_{\tc}$ or $|xy^2z|_{\tb}> |xy^2z|_{\tc}$ which implies that~$xy^2z$ is not in~$L$.
	On the other hand, the word~$xy^0z$ fulfils~$|xy^0z|_{\ta}> |xy^0z|_{\tb}$ if~$y$ contains a $\tb$ and no~$\ta$ or~$|xy^0z|_{\tb}> |xy^0z|_{\tc}$ if~$y$ contains neither a~$\tb$ nor an~$\ta$.
	Hence, we conclude that~$\mpl(L)=\infty$ and thus~$L\in\mathcal{L}_{\mpl}$.
	
	On the other hand, to show non $0$-witnessability of $L$ we need to show that there exists a $p \in \N$ such that for all $w \in L_{\geq p}$ there exists a decomposition $w = xyz$ with $x \neq \lambda$ and $|xy| \leq p$ we have $xy^0 z \in L$. Consider some $p \in \N$. Every word in $L_{\geq p}$ is of one of the following forms: $\ta^{\ell_1}\tb^{\ell_2}\tc^{\ell_3}$ or $\tb^{\ell_2}\tc^{\ell_3}$ or $\tc^{\ell_3}$ for $0 \leq \ell_1 \leq \ell_2 \leq \ell_3$ and $\ell_1 + \ell_2 + \ell_3 \geq p$. For the word~$w=\ta^{\ell_1}\tb^{\ell_2}\tc^{\ell_3}$ with~$\ell_1\ge 1$ we observe for the decomposition~$w=xyz$ with~$x=\lambda$, $y=\ta$, and~$z=y^{-1}w$ the word~$xy^0z=\ta^{\ell_1-1}\tb^{\ell_2}\tc^{\ell_3}$ is in~$L$. For the word~$w=\tb^{\ell_2}\tc^{\ell_3}$ with~$\ell_2\ge 1, \ell_2 + \ell_3 \geq p$ we observe for the decomposition~$w=xyz$ with~$x=\lambda$, $y=\tb$, and~$z=y^{-1}w$ the word~$xy^0z=\tb^{\ell_2-1}\tc^{\ell_3}$ is in~$L$. And last, for the word~$w=\tc^{\ell_3}$ with~$\ell_3\ge p$ we observe for the decomposition~$w=xyz$ with~$x=\lambda$, $y=\tc$, and~$z=y^{-1}w$ the word~$xy^0z=\tc^{\ell_3-1}$ is in~$L$. Hence, the language~$L$ is \emph{not} $0$-witnessable w.r.t. Lemma~\ref{lem:mpl} and thus it is \emph{not} $0$-witnessable w.r.t. Lemma~\ref{lem:mpc}.
\end{example}

Next we observe that the property of being $0$-witnessable is related to the property of not being (scattered-factor) dense.

\ifpaper

\begin{repproposition}{lem:sd-infty-implies-zero-witness}
	Let~$L\in\mathcal{L}_{\mpl}$. If~$\sd(L)=\infty$ then~$L$ is~$0$-witnessable.
\end{repproposition}
\begin{proof}
	Let~$\sd(L)=\infty$
	and $p\in \N$. We want to show that~$L$ contains a word~$w$ whose factorisations~$w=xyz$ with $|y| \geq 1$ and $|xy| \leq p$ fulfil $xy^0 z \notin L$.
	Since~$\sd(L)=\infty$ we obtain that there is a word~$w\in L$
	such that the longest scattered factor~$w'\in L$ of~$w$ has length at most equal to~$|w|-p-1$, i.e., 
	\[(\ScatFact(w)\cap L)\cap (\Sigma^{\ge |w|-p}\cap \Sigma^{\le |w|-1})=\emptyset.
	\]
	We observe that each decomposition~$w=xyz$ with~$|xy|\le p-1$ and~$|y|\ge 1$  fulfils~$|xz|\ge |xyz|-p\ge|w|-p$.
	Since~$xy^0z=xz$ is a scattered factor of~$w=xyz$ we thus obtain that~$xy^0z$ is not in~$L$.
	\iflipics\else\qed\fi
\end{proof}

\else

\fi

The following lemma is an immediate consequence of the definition of~$\sd$ and will be helpful to show that the converse statement of Lemma~\ref{lem:sd-infty-implies-zero-witness} does not hold.

\begin{lemma}\label[lemma]{lem:sd-reversed}
	For a given language~$L$  we have~$\sd(L)=\sd(L^R)$.
\end{lemma}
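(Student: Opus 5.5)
The plan is to show that the reversal map $w\mapsto w^R$ is a length-preserving bijection between $L$ and $L^R$ that preserves every quantity in the definition of $\sd$. Equality of the suprema then follows directly.

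The key step is the observation that $u\in\ScatFact(w)$ if and only if $u^R\in\ScatFact(w^R)$. If $w=v_1u[1]v_2\cdots v_{|u|}u[|u|]v_{|u|+1}$, then reversing gives
\[
w^R=v_{|u|+1}^R\,u[|u|]\,v_{|u|}^R\cdots v_2^R\,u[1]\,v_1^R,
\]
which exhibits $u^R$ as a scattered factor of $w^R$. The converse follows by applying the same argument to $w^R$ and $u^R$, since $(w^R)^R=w$. Together with the facts that $u=w$ if and only if $u^R=w^R$, and that $u\in L$ if and only if $u^R\in L^R$, we obtain
\[
(\ScatFact(w^R)\setminus\{w^R\})\cap L^R=\bigl\{u^R \bigm| u\in(\ScatFact(w)\setminus\{w\})\cap L\bigr\}.
\]
Since $\lambda^R=\lambda$, the set $\{\lambda\}$ is also mapped to itself.

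Because reversal preserves length, the set of differences $\{|w|-|w'|\}$ attached to $w\in L$ coincides with the set attached to $w^R\in L^R$. Hence the inner minima agree for each pair $w$ and $w^R$. As $w$ ranges over $L$, $w^R$ ranges over all of $L^R$, so the two suprema are taken over the same multiset of values in $\overline{\N}_0$. This covers the conventions too: if $L=\emptyset$ then $L^R=\emptyset$ and both values are $0$, and if one supremum is $\infty$ then so is the other.

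No step is a real obstacle here. The only point needing care is checking the scattered factor correspondence in both directions and confirming that the exclusion of $w$ itself and the inclusion of $\lambda$ are both compatible with reversal.
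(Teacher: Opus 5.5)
Your proposal is correct and matches the paper's approach. The paper gives no proof and simply calls the lemma an immediate consequence of the definition of $\sd$. You have written out that immediate argument: reversal is a length-preserving involution that respects the scattered-factor relation, membership in $L$, exclusion of $w$ itself, and the inclusion of $\lambda$.
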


\begin{example}
	Indeed, by reinspecting Example~\ref{exa:mnl-language}, we observe that~$L$ contains the suffix of length~$|w|-1$ for each word~$w\in L$.
	Hence, we have~$\sd(L)=1$. 
	Therefore, we have that~$\sd(L^R)=\sd(\{\tc^\ell \tb^n \ta^m\mid 0\le m\le n\le \ell\})=1$ due to Lemma~\ref{lem:sd-reversed}.
	On the other hand, we have that the word~$w=\tc^p \tb^p \ta^p$ fulfils that 
	$|w| \geq p$ and for every factorisation $w =xyz$ with $|y| \geq 1$ and $|xy| \leq p$, we have $xy^0 z \notin L$ because~$y\in \ScatFact(\tc^p)$ and~$|xy^0z|_{\tc}\le p-1<p=|xy^0z|_{\tb}$. 
	This implies that~$L$ is $0$-witnessable w.r.t.~\Cref{lem:mpl}, but not to \Cref{lem:mpc}. Hence, the converse of \Cref{lem:sd-infty-implies-zero-witness} does not hold.
\end{example}

Now we show that the measure~$\sd$ gives a general lower bound for the pumping constant~$\mpl$.

\ifpaper
\begin{theorem}\label{lem:sd-lower-bound-for-mpl}
	For every language~$L$ we have~$\sd(L)\le \mpl(L)$.
\end{theorem}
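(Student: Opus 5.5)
The plan is to bound, for every single word $w\in L$, the inner minimum in the definition of $\sd(L)$ by $\mpl(L)$, and then take the supremum. If $\mpl(L)=\infty$ there is nothing to show, since $\infty$ is the largest element of $\overline{\N}_0$. So I assume $p=\mpl(L)<\infty$, meaning Lemma~\ref{lem:mpl} holds for $L$ with constant $p$. If $L=\emptyset$ (equivalently $p=0$ by the facts recalled from~\cite{DaJe22}), then $\sd(L)=\sup\emptyset=0\le p$. Hence I may assume $L\neq\emptyset$ and fix an arbitrary $w\in L$.

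I split into two cases according to the length of $w$.

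\textbf{Case $|w|<p$.} Here I use the candidate $w'=\lambda$, which the definition of $\sd$ always allows. It gives $|w|-|w'|=|w|\le p-1<p$, so the minimum for $w$ is at most $p$.

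\textbf{Case $|w|\ge p$.} Here I invoke Lemma~\ref{lem:mpl} with constant $p$. It yields a factorisation $w=xyz$ with $|y|\ge 1$, $|xy|\le p$ and $xy^tz\in L$ for all $t\ge 0$. Taking $t=0$ gives $xz\in L$. The word $xz$ is a scattered factor of $w$ (delete the factor $y$), and it is distinct from $w$ because $|xz|=|w|-|y|<|w|$. Hence $xz\in(\ScatFact(w)\setminus\{w\})\cap L$ is admissible in the minimum. Its contribution is $|w|-|xz|=|y|\le|xy|\le p$.

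In both cases the inner minimum for $w$ is at most $p$. Taking the supremum over $w\in L$ gives $\sd(L)\le p=\mpl(L)$.

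There is no real obstacle here; the argument is a direct unwinding of the definitions. The only points needing care are the edge cases: words shorter than $p$, which the pumping lemma does not cover and which are handled by the $\lambda$ option in the definition of $\sd$; and the empty language together with the infinite case, which are handled by the conventions on $\overline{\N}_0$. It is also worth noting that the argument only uses $t=0$ and the bound $|y|\le p$. In fact it only uses $|y|\le p$, not the full strength of $|xy|\le p$, which suggests why the analogous bound for $\mpc$ may fail.
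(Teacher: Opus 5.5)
Your proof is correct and rests on the same key step as the paper's: pumping $w$ down with $t=0$ gives $xz\in(\ScatFact(w)\setminus\{w\})\cap L$ with $|w|-|xz|=|y|\le\mpl(L)$. The paper argues by contradiction from a word $w$ and a scattered factor $w'$ realising a distance larger than $\mpl(L)$, whereas you bound each word directly and handle the short words explicitly via the $\lambda$ option; this is a cleaner presentation of the same argument, not a different route.
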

\begin{proof}
	Since the statement is trivially fulfilled for~$\mpl(L)=\infty$ we assume for the rest of the proof that~$\mpl(L)<\infty$.
	For the sake of contradiction assume that~$\mpl(L)<\sd(L)$.
	Given two words~$w,w'\in L$ such that~$w' \in \ScatFact(w)$, $\sd(L)\ge |w|-|w'|>\mpl(L)$ and there is no word~$\tilde{w}\in L\setminus\{w,w'\}$ with~$w' \in \ScatFact(\tilde{w})$ and $\tilde{w}\in \ScatFact(w)$.
	Observe that~$|w|\ge |w|-|w'|>\mpl(L)$ which implies that~$w$ is $\mpl$-pumpable.
	Hence, there are words~$x,y,z\in \Sigma^*$ with~$w=xyz$, $y\ne \lambda$, $|xy|\le \mpl(L)$, and~$xy^iz\in L$ for each $i \in \N$.
	In particular, we have~$xy^0z\in L$. On the other hand, we see that~$|xy^0z|=|xz|<|w|$. Since~$w'$ is the longest scattered factor of~$w$ in~$L$ we derive that~$|xz|\le |w'|$. 
	Therefore, we obtain that~$|y|=|w|-|xz|\ge |w|-|w'|>\mpl(L)$ which contradicts the assumption~$|y|\le \mpl(L)$ and therefore proves the stated claim.
	\iflipics\else\qed\fi
\end{proof}
\else

\fi

Since~$\mpl(L)<\infty$ for all regular languages~$L$, we derive that~$\sd(L)<\infty$ for all regular languages~$L$, too.
Although we have~$\mpc(L)\le \mpl(L)$ and~$\sd(L)\le \mpl(L)$ for all languages~$L$ the measures~$\mpc$ and~$\sd$ are incomparable (even for regular languages).

\ifpaper
\begin{theorem}{lem:sd-mpc-incomparable}
	Let $p_1, p_2 \in \N$. There is a regular language~$L$ such that~$\sd(L)=p_1$ and~$\mpc(L)=p_2$.
\end{theorem}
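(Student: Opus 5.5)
We build $L$ as a union of two languages over disjoint letters. One part has large scattered-factor distance but $\mpc$ equal to $1$; the other has $\sd$ equal to $1$ but large $\mpc$. Let $\Sigma=\{\ta,\tb\}$, let $p_1,p_2\in\N$, and set
\[
L \;=\; (\ta^{p_1})^{*}\;\cup\;\{\tb^{i}\mid 0\le i\le p_2-1\}.
\]
This language is clearly regular. Since $\tb$ does not occur in $\ta^{p_1}$, a word of $L$ over $\{\ta\}$ can only have scattered factors and pumped words over $\{\ta\}$, and symmetrically for $\tb$. So both measures should reduce to the maximum of the values on the two parts, as long as the shared word $\lambda$ is handled correctly.

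\textbf{Computing $\sd(L)=p_1$.} For $w=\ta^{kp_1}$ with $k\ge1$, the scattered factors of $w$ in $L$ other than $w$ itself are $\ta^{jp_1}$ with $j<k$, together with $\lambda$. The longest of these is $\ta^{(k-1)p_1}$, so the inner minimum equals $p_1$. For $w=\tb^{i}$ with $1\le i\le p_2-1$, the word $\tb^{i-1}$ lies in $L$, so the inner minimum is $1$. For $w=\lambda$ the inner minimum is $0$. Taking the supremum gives $\sd(L)=\max\{p_1,1\}=p_1$, using $p_1\ge1$.

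\textbf{Computing $\mpc(L)=p_2$.} First, $p_2$ suffices. Every word of $L$ of length at least $p_2$ is of the form $\ta^{kp_1}$ with $k\ge1$, since all $\tb$-words are shorter than $p_2$. Such a word can be pumped with $x=\lambda$, $y=\ta^{p_1}$, because $(\ta^{p_1})^{t}\ta^{(k-1)p_1}\in L$ for every $t\ge0$. Second, $p_2-1$ does not suffice when $p_2\ge2$. The word $\tb^{p_2-1}\in L$ has length at least $p_2-1$, and for any decomposition with $y=\tb^{j}$, $j\ge1$, pumping with $t=2$ gives $\tb^{p_2-1+j}\notin L$. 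Hence $\tb^{p_2-1}$ is not pumpable. When $p_2=1$ we only need $\mpc(L)\ne0$, which holds because $L\neq\emptyset$ (by the remark following the definition of $\mpc$ and $\mpl$).

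The only delicate point is to check carefully that the two alphabet components do not interact. No scattered factor of an $\ta$-word lies in the $\tb$-part except $\lambda$, and pumping a nonempty factor never moves a word into the other part. Finally, the edge cases $p_1=1$ and $p_2=1$ must be checked separately, which the argument above does.
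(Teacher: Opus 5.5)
Your proof is correct, and it is in fact more uniform than the paper's. The paper splits into two cases. For $p_1\ge p_2$ it uses the unary language $(\ta^{p_1})^*\ta^{p_2-1}$. For $p_2>p_1$ it uses exactly your language $(\ta^{p_1})^*\cup\tb^{\le p_2-1}$.

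The split is forced by the unary construction. Its shortest word $\ta^{p_2-1}$ has only $\lambda$ as a proper scattered factor in the language. That word therefore contributes $p_2-1$ to the supremum, and $\sd$ of that language is $\max\{p_1,p_2-1\}$. In your language the non-pumpable word $\tb^{p_2-1}$ sits at the top of the chain $\lambda,\tb,\dots,\tb^{p_2-1}$. So the $\tb$-part adds at most $1\le p_1$ to $\sd$, while still forcing $\mpc(L)\ge p_2$.

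Your verification thus shows that the case distinction is unnecessary. What the paper's first case buys is only a unary example in the regime $p_1\ge p_2$.

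Two small points. When $p_2=1$ the $\tb$-part contributes $0$ rather than $1$, so the supremum is $\max\{p_1,0\}$; the conclusion is unchanged. Also, the step from ``$p_2$ works and $p_2-1$ fails'' to $\mpc(L)=p_2$ uses that valid constants for Kozen's lemma are upward closed. This is immediate, but it is the fact that rules out every smaller constant.
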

\begin{proof}
	We distinguish the cases~$p_1\ge p_2$ and~$p_2> p_1$.
	\begin{itemize}
		\item If $p_1\ge p_2$ we define~$L=(\ta^{p_1})^* \ta^{p_2-1}$.
		Observe that each word in~$L$ is $\mpc$-pumpable except the word~$\ta^{p_2-1}$ and thus~$\mpc(L)=|\ta^{p_2-1}|+1=p_2-1+1=p_2$. 
		
		On the other hand, we observe that each word~$w$ in~$L$ is equal to~$\ta^{p_2-1+p_1\cdot j}$ for a natural number~$j$.
		Hence, the set of the scattered factors of~$w$ which are in~$L$ is
		$$\ScatFact(w)\cap L=\{\ta^{p_2-1+p_1\cdot k}\mid 0\le k\le j\}$$
		and thus~$\sd(L)=p_1$.
		
		\item In the case $p_2> p_1$ set~$L=(\ta^{p_1})^*+\tb^{\le p_2-1}$.
		The reader may confirm that the word~$\tb^{p_2-1}$ is not $\mpc$-pumpable while all words of length at least~$p_2$ are in~$(\ta^{p_1})^+\subset L$ and are pumpable by their prefix~$\ta^{p_1}$.
		Therefore, we have~$\mpc(L)=p_2$.
		
		In addition, for each word~$w=(\ta^{p_1})^j \in (\ta^{p_1})^+\subset L$ we have
		$$\ScatFact(w)\cap L=\{\ta^{k\cdot p_1}\mid 0\le k\le j\}$$
		which implies~$\sd(\ScatFact(w)\cap L)=p_1$. 
		All non-empty words in~$\tb^{\le p_2-1}$ contain a scattered factor in~$L$ that contains exactly one letter less.
		Thus, we obtain that~$\sd(L)=p_1$.\iflipics\else\qed\fi
	\end{itemize}	
\end{proof}
\else

\fi

Besides the aforementioned structural properties derived for non-regular languages which do \emph{not} fulfill Lemma~\ref{lem:mpc} or \ref{lem:mpl}, we obtain properties of their witness languages.

\begin{lemma}
	Let~$L$ be a language and $K \in \{\mpl,\mpc\}$. If~$W^{K}(L)=\{\lambda\}$, then either
	$L\in\mathcal{L}_{K}$ or~$K(L)=1$. Additionally, $W^{K}(L)=\emptyset$
	if and only if~$L=\emptyset$.
\end{lemma}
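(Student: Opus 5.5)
The plan is to unwind the definition of $K$-witness and do a case analysis on the value $K(L)\in\overline{\N}_0$. Recall that $w\in L$ is a $K$-witness if there is no $p<K(L)$ such that $w$ is $K$-pumpable w.r.t.\ $p$ and $L$.

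\textbf{The second statement.} This is the easier one, so I handle it first. If $L=\emptyset$, then $W^K(L)\subseteq L=\emptyset$ trivially.

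Conversely, let $L\neq\emptyset$. I will produce a witness by taking a shortest word $w\in L$. Suppose $w$ were $K$-pumpable w.r.t.\ some $p$. Then there is a decomposition $w=xyz$ with $y\neq\lambda$ and $xy^tz\in L$ for all $t\ge0$. In particular $xz\in L$, and $|xz|<|w|$, contradicting minimality. So $w$ is not pumpable w.r.t.\ any $p$, hence certainly not for any $p<K(L)$. Thus $w\in W^K(L)\neq\emptyset$.

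The same argument gives a fact I will reuse: every shortest word of $L$ is a witness. Also, if $\lambda\in L$, then $\lambda$ is a witness, since $\lambda$ has no decomposition with $|y|>0$.

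\textbf{The first statement.} Assume $W^K(L)=\{\lambda\}$. Then $L\neq\emptyset$, so $K(L)\ge1$, because $K(L)=0$ iff $L=\emptyset$ (Remark after the definition of $\mpc$ and $\mpl$). If $K(L)=\infty$, then $L\in\mathcal{L}_K$ (this is the membership in $\mathcal{L}_K$ when $L$ fails the lemma, presumably non-regular; I will read $\mathcal{L}_K$ as the languages with $K(L)=\infty$, as the paper effectively does). So it remains to show that finite $K(L)=q\ge2$ is impossible.

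Suppose, for contradiction, that $2\le q<\infty$. By minimality of $q$, the value $p=q-1\ge1$ does not satisfy the pumping lemma. Hence some $w\in L$ with $|w|\ge q-1$ admits no valid decomposition w.r.t.\ $q-1$.

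\textbf{Main obstacle.} I must check that this $w$ is a witness, i.e.\ that it is not pumpable for any smaller $p'<q-1$ either. This follows from monotonicity. A decomposition valid for $p'$ is also valid for $q-1$: this is immediate for $\mpc$, and for $\mpl$ the condition $|xy|\le p'\le q-1$ carries over. Hence $w$ is pumpable for no $p<q$, so $w\in W^K(L)$. Since $|w|\ge q-1\ge1$, we have $w\neq\lambda$, contradicting $W^K(L)=\{\lambda\}$.

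Therefore $K(L)=1$ or $L\in\mathcal{L}_K$, as claimed.
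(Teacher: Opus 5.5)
Your proof is correct and follows the paper's route. A shortest word of $L$ can never be pumped down, which gives the second statement. For the first statement, minimality of $K(L)=q\ge 2$ produces a word of length at least $q-1\ge 1$ that is not pumpable, and hence a non-empty witness. Taking the failing word at $q-1$ and using monotonicity in the prefix bound is a cleaner version of the paper's informal $\sup_{w\in L}p_w$ step.
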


\begin{proof}
	We observe that the shortest word in~$L$ cannot be pumped w.r.t. any constant which directly implies the second statement.
	
	The property $W^{K}(L)=\{\lambda\}$ implies that each non-empty word~$w\in L$ can be pumped w.r.t. a constant~$p_w$ with~$p_w<K(L)$.
	If there are no non-empty words in~$L$ we have that~$L=\{\lambda\}=W^{K}(L)$ and thus we assume that~$L$ contains at least one non-empty word.
	Since~$K(L)=\infty$ implies~$L\in\mathcal{L}_{K}$, we assume that~$K(L)=p<\infty$. 
	For the sake of contradiction, assume that~$p>1$.
	Since~$p=\sup_{w\in L} \{p_w\}$ we derive that there must exist~$w\in L$ with $p_w>1$ implying that~$w$ is not pumpable w.r.t. $K(L)$. Therefore, we have $w\in W^{K}(L)$ which contradicts~$W^{K}(L)=\{\lambda\}$.
\end{proof}

Each word~$w=\ta^n \tb^n$ in the language~$L=\bigcup_{n\ge 0}\{\ta^n \tb^n\}^*$ can be pumped w.r.t. $2n$ by pumping its infix~$w$ but it cannot be pumped w.r.t. any smaller value. Hence, $\mpl(L)=\infty$ and thus~$L\in\mathcal{L}_{\mpl}$. Since for each~$n\ge 1$ we have that~$2n<\mpl(L)$ we obtain that each non-empty word is not a witness in $L$ and thus~$W^{\mpl}(L)=\{\lambda\}$.

	\section{Non-unary Witness Languages}\label{sec:non-unary-witness}
	Within this section, we start the study of witness languages, i.e., the words that are not $K$-pumpable w.r.t. $K \in \{\mpc, \mpl\}$ for some given non-unary regular language $L$. 
As an example, consider $L = \{w \in \{\ta,\tb\}^* \mid |w|_{\ta}\mbox{ even}\}$. 
Since we have the minimal pumping constant $\mpc(L) = 1$ we obtain $W(L) = \{\lambda\}$.

First, we are going to remark some facts on witness languages w.r.t. \Cref{lem:mpc} and $\mpc$. 

\ifpaper
\begin{lemma}\label{lem:witness-language-finite}
	Every witness language $W^{\mpc}(L)$ for a regular language $L$ is regular, actually finite, with no word being longer than $\mpc(L)-1$.
\end{lemma}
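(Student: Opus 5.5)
The plan is to show directly that every $\mpc$-witness has length at most $\mpc(L)-1$. Finiteness and regularity then follow at once, since a subset of $\Sigma^{\le \mpc(L)-1}$ is finite and every finite language is regular.

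Let $L$ be regular and set $p=\mpc(L)$. By Kozen's Pumping Lemma (\Cref{lem:mpc}) we have $p<\infty$. If $L=\emptyset$ then $W^{\mpc}(L)=\emptyset$ and there is nothing to show, so assume $L\neq\emptyset$ and hence $p\ge 1$. Take any $w\in L$ with $|w|\ge p$; we show $w\notin W^{\mpc}(L)$. The key observation is that the condition in Kozen's lemma, for a fixed word, does not depend on the constant once the word is long enough: being $\mpc$-pumpable w.r.t.\ a value $q$ with $|w|\ge q$ only asks for a decomposition $w=xyz$ with $y\neq\lambda$ and $xy^tz\in L$ for all $t\ge 0$. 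There is no constraint linking $|xy|$ to $q$.

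Since $p$ satisfies the lemma and $|w|\ge p$, such a decomposition of $w$ exists. The same decomposition therefore witnesses pumpability w.r.t.\ the value $q=p-1<p=\mpc(L)$, because $|w|\ge p>q$. For $p=1$ this gives $q=0$; the paper's definition of pumpable w.r.t.\ $p\in\N$ might exclude $q=0$, but the pumping condition itself makes sense there and is satisfied, so I would read $q$ as ranging over $\N_0$ here. Under that reading $w$ is pumpable w.r.t.\ some $q<\mpc(L)$, so $w$ is not a witness. Consequently $W^{\mpc}(L)\subseteq L\cap\Sigma^{\le p-1}$.

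The only delicate point is interpretational rather than technical. One must make sure that the witness definition (``no $p<K(L)$ such that $w$ is pumpable w.r.t.\ $p$'') is applied with the decomposition-based reading above, including the edge case $p=1$ where the smaller constant is $0$. Once that is settled, the bound, finiteness and regularity are immediate.
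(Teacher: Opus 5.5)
Your proof is correct and takes essentially the same route as the paper's: every word of length at least $\mpc(L)$ has a Kozen decomposition, and because Kozen's condition does not bound $|xy|$, that decomposition already gives pumpability w.r.t.\ a value below $\mpc(L)$, so all witnesses lie in $\Sigma^{\le \mpc(L)-1}$. The paper's proof does not address the case $\mpc(L)=1$ that you handle by allowing the value $0$, and your reading agrees with how the paper applies the definition elsewhere, e.g.\ $W(L)=\{\lambda\}$ whenever $\mpc(L)=1$.
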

\begin{proof}
	We are going to show the claim by proving that every word within $W^{\mpc}(L)$ is of length strictly smaller than $\mpc$.
	Suppose that there exists an element that is larger than $\mpc$. Since $L$ is regular this implies that every word $w \in L$ of length $|w| \geq \mpc(L)$ is $\mpc$-pumpable meaning that $w$ cannot be a witness of $L$.
	Thus, $W^{\mpc}(L) \subseteq \Sigma^{\mpc(L)}$, i.e., $W^{\mpc}(L)$ is a finite and thus a regular language.\iflipics\else\qed\fi
\end{proof}
\else

\fi

Next, we show that every finite language is a witness language w.r.t.~$\mpc$ and~$\mpl$.

\ifpaper
\begin{replemma}{lem:finite-language-its-own-witness}
	Let $K \in \{\mpl, \mpc\}$. For finite languages $L$ we have $W^K(L) = L$.
\end{replemma}
\else

\begin{proof}
	Since we have $w \in L$ for all $w \in W^K(L)$ by definition, the direction $W^K(L) \subseteq L$ is immediate.
	
	Further, let $w \in L$. Since $L$ is finite and $\mpc(L) = \mpl(L) = \max\{|w| \mid w \in L\} +1$ we have that no $w \in L$ is $K$-pumpable. Thus, $w \in W^K(L)$.\iflipics\else\qed\fi
\end{proof}
\fi

The definition of~$\mpc$ implies that each regular language~$L$ contains a word of length $\mpc(L)-1$ that cannot be pumped as well as that all words in $L$ of length at least $\mpc(L)$ are pumpable (which is not necessarily the case for $\mpl$). As a direct consequence, we obtain the following.

\begin{corollary}\label{cor:mpc-WL-leq-mpc-L}
	For each regular language~$L$ we have~$\mpc(W^{\mpc}(L))= \mpc(L)$.\qed
\end{corollary}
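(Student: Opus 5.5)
The plan is to reduce the statement to the finite-language case. First I will show that $W^{\mpc}(L)$ is finite and nonempty whenever $L\neq\emptyset$. Then I apply the formula $\mpc(F)=1+\max\{|w|\mid w\in F\}$ for nonempty finite $F$, quoted from \cite{DaJe22} in the preliminaries. The empty case is immediate: $L=\emptyset$ gives $W^{\mpc}(L)=\emptyset$ by the earlier lemma, and both sides equal $0$. From now on let $L\neq\emptyset$ be regular, so $p:=\mpc(L)$ is a natural number with $p\geq 1$.

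\textbf{Upper bound on witness lengths.} Let $w\in L$ with $|w|\geq p$. By definition of $\mpc$, the word $w$ is $\mpc$-pumpable w.r.t.\ $p$. Kozen's condition does not depend on $p$ beyond the length requirement. Hence the same decomposition $w=xyz$ also witnesses pumpability w.r.t.\ the value $|w|$. I will use the smaller value $p'=p-1<p$ wherever $|w|\geq p-1$ is enough to apply the condition. This gives $w\notin W^{\mpc}(L)$, in line with Lemma~\ref{lem:witness-language-finite}, and shows that every witness has length at most $p-1$.

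\textbf{A witness of length exactly $p-1$.} Since $p$ is minimal, the value $p-1$ fails the pumping property. So there is a word $w\in L$ with $|w|\geq p-1$ that is not pumpable. Such a $w$ must have length exactly $p-1$, because words of length at least $p$ are pumpable. Now $w$ is pumpable w.r.t.\ no $q<p$. For $q\leq p-1$ the length condition holds, but the word admits no good decomposition. Therefore $w\in W^{\mpc}(L)$ and $|w|=p-1$.

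\textbf{Conclusion.} Combining the two steps, $W^{\mpc}(L)$ is a nonempty finite language whose longest word has length exactly $p-1$. The formula for finite languages then gives $\mpc(W^{\mpc}(L))=1+(p-1)=p$.

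The main subtlety I expect is the edge case $p=1$, where the witness is $\lambda$. In that case the failing value is $0$, and failure means $L\neq\emptyset$. I need to check that "not pumpable" is still meaningful there. It is, because $\lambda$ never admits a decomposition with $y\neq\lambda$, so $\lambda$ is a witness. Beyond that, the argument only needs a careful reading that the decomposition condition in Kozen's lemma is independent of $p$.
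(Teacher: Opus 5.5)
Your proposal is correct and follows essentially the same route as the paper: words of length at least $\mpc(L)$ are pumpable and hence not witnesses, the failure of the value $\mpc(L)-1$ gives an unpumpable word of length exactly $\mpc(L)-1$, and the formula $\mpc(F)=1+\max\{|w|\mid w\in F\}$ for the finite witness language then gives the equality. You also handle $L=\emptyset$ and the edge case $\mpc(L)=1$, where the witness is $\lambda$, explicitly; the paper leaves both implicit.
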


The following lemma shows that it is not possible to prove a similar or at least weaker variant of the previous corollary for~$\mpl$.

\ifpaper
\begin{replemma}{lem:mpl-distance-arbitrarily-large}
	Let $k,p \in \N$ with $k\ge p\ge 2$. There is a unary regular language~$L$ such that~$\mpl(L)=p$ and~$\mpl(W^{\mpl}(L))=k$.
\end{replemma}
\else

\begin{proof}
%
	
	Suppose first that~$k=p$, and let~$L=\{\ta^{p-1}\}$.
	Since~$L$ is finite and its longest word has length~$p-1$, we have~$\mpl(L)=p$.
	Moreover, Lemma~\ref{lem:finite-language-its-own-witness} yields
	$W^{\mpl}(L)=L$, and hence~$\mpl\bigl(W^{\mpl}(L)\bigr)=p=k$.
	
	For the remainder of the proof, assume that~$k>p$. Set
	$
	r=(k-1)\bmod p$ and
	$m=\left\lfloor\frac{k-1}{p}\right\rfloor.
	$
	Then~$
	k-1=r+mp$, $0\leq r<p$, and
	$m\geq 1.$
	Consider the unary regular language
	\[
	L
	=
	\ta^r(\ta^p)^*
	\cup
	\{\ta^n\mid n\geq k\}.
	\]
	Thus, the lengths of the words in~$L$ that are smaller than~$k$
	are precisely
	\[
	r,r+p,\ldots,r+mp=k-1.
	\]
	We observe that each word shorter than~$k$ and longer than~$r<p$ can be pumped by~$\ta^p$ while all words in~$\ta^{\ge k}$ can be pumped by~$\ta$. Therefore, $\mpl(L)\le p$.
	On the other side, we see that all words shorter than~$k$ cannot be pumped by a factor shorter than~$\ta^p$ and thus~$\mpl(L)=p$.
	Moreover, we obtain that
	\(
	W^{\mpl}(L)
	=
	\{\ta^{r+jp}\mid 0\leq j\leq m\}
	=
	\ta^r(\ta^p)^{\leq m}.
	\)
	Finally, $W^{\mpl}(L)$ is finite, and its longest word has length~$r+mp=k-1$ and thus
	$$\mpl(W(L))=1+\max\{\,|w|\mid w\in W(L)\,\}=1+(k-1)=k,$$
	which proves the stated claim.
	\iflipics\else\qed\fi
\end{proof}
\fi

It is not possible to omit the condition~$k\ge p$ in the previous lemma because for each language~$L$ and each word~$w \in W^{\mpl}(L)$ we know that~$w$ is not $\mpl$-pumpable in~$L$ w.r.t. each value that is strictly less than~$p$. Since~$W^{\mpl}(L)\subseteq L$, the same property must also apply to~$W^{\mpl}(L)$.

From Lemmata~\ref{lem:witness-language-finite} and~\ref{lem:finite-language-its-own-witness} we obtain that~$W^{\mpc}(L)= W^{\mpc}(W^{\mpc}(L))$ for all regular languages~$L$.
This is not true if we investigate witness languages w.r.t.~$\mpl$.

\ifpaper
\begin{replemma}{lem:mpl-W-L-interval-2mpl-L}
	There is a regular language~$L$ such that~$W^{\mpl}(L)\ne W^{\mpl}(W^{\mpl}(L))$.
\end{replemma}
\else

\begin{proof}
	We set~$L=\mathtt{a}^*\cup (\mathtt{b}^p)^+\mathtt{a}^{k-p-1}\cup(\mathtt{c}^{p})^*$ for some $k \geq p \geq 2$. Observe that each word in~$\mathtt{a}^+$ can be pumped by
	its first letter
	while 
	the words in the language $(\tb^p)^+\mathtt{a}^{k-p-1}\cup(\mathtt{c}^{p})^*$ are pumpable by their prefix~$\mathtt{b}^p$ and~$\mathtt{c}^p$, \resp, because~$(\mathtt{b}^p)^0\mathtt{a}^{k-p-1}=\mathtt{a}^{k-p-1}$ is in~$L$ as well.
	Additionally, none of these words can be pumped by shorter factors.
	Since~$L=\mathtt{a}^+\cup (\mathtt{b}^p)^+\mathtt{a}^{k-p-1}\cup(\mathtt{c}^{p})^*$ the value~$\mpl(L)$ is equal to the length of the longest of the aforementioned factors.
	Thus, we have~$\mpl(L)=p$.
	Since~$p\ge 2$ none of the words in the language~$\mathtt{a}^+$ is in~$W^{\mpl}(L)$ and thus~$W^{\mpl}(L)=(\mathtt{b}^p)^+\mathtt{a}^{k-p-1}\cup(\mathtt{c}^{p})^*$.
	Again all words in~$(\mathtt{b}^p)^{\ge 2}\mathtt{a}^{k-p-1}\cup(\mathtt{c}^{p})^*\subset W^{\mpl}(L)$ are pumpable by their prefixes~$\mathtt{b}^p$ and~$\mathtt{c}^p$, \resp. In turn, this does not hold for~$\mathtt{b}^p\mathtt{a}^{k-p-1}\in W^{\mpl}(L)$ because none of its proper factors is in~$W^{\mpl}(L)$ except~$\lambda$ and we have~$(\mathtt{b}^p\mathtt{a}^{k-p-1})^2\notin W^{\mpl}(L)$ which implies that the word~$\mathtt{b}^p\mathtt{a}^{k-p-1}$ cannot be pumped by any of its factors.
	Hence, we have~$\mpl(W^{\mpl}(L))=k+p-p-1+1=k$ and thus~$W^{\mpl}(W^{\mpl}(L))=\mathtt{b}^p\mathtt{a}^{k-p-1}\cup \{\lambda\}$.
	\iflipics\else\qed\fi
\end{proof}
\fi

\ifpaper
The language used to prove the previous lemma is~$L=\ta^*\cup (\tb^{p})^+\ta^{k-p-1}\cup(\tc^{p})^*$ which fulfils~$W^{\mpl}(L)=(\tb^p)^+\mathtt{a}^{k-p-1}\cup(\tc^{p})^*$, $\mpl(L)=p$ and $W^{\mpl}(W^{\mpl}(L))=\mathtt{b}^p\mathtt{a}^{k-p-1}\cup \{\lambda\}$.
We observe that
$$\ScatFact(\tb^p\ta^{k-p-1})\cap W^{\mpl}(L)=\{\tb^p\ta^{k-p-1},\lambda\}$$ while each other word~$w \in W^{\mpl}(L)$ contains a factor which is~$p$ letters shorter than~$w$. If $k \ge p$, we have~$\sd(W^{\mpl}(L))=k-1$ and thus by changing~$k$ accordingly we obtain the following corollary.

\begin{repcorollary}{cor:mpl-sd-infty}
	Let $k,p \in \N$ with $k\ge p\ge 2$, there is a regular language~$L$ such that~$\mpl(L)=p$ and~$\sd(W^{\mpl}(L))=k$.\qed
\end{repcorollary}
\else
For the language~$L$ used in the previous proof, we observe that~$\ScatFact(\tb^{p-1}\ta^{k-p-1})\cap W^{\mpl}(L)=\{\lambda\}$ while each other word~$w$ in~$W^{\mpl}(L)$ contains a factor that is either~$p-1$ or~$p$ letters shorter than~$w$. Hence, we have~$\sd(W^{\mpl}(L))=k$ and we obtain the following corollary.

\fi

The following proposition classifies the family of witness languages w.r.t. $\mpc$ completely.

\ifpaper
\begin{proposition}\label{prop:every-set-is-witnesslanguage}
	For every $p \in \N_0$ we have $2^{\Sigma^{\le p-1}}\setminus\emptyset = \mathcal{W}_p^{\mpc,u}$ as well as $2^{\Sigma^{\le p-1}}\setminus\emptyset = \mathcal{W}_p^{\mpc}$.
\end{proposition}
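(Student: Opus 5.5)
The plan is to prove the two inclusions separately, using the finiteness lemma for $\mpc$-witness languages in one direction and explicit finite languages in the other. Throughout I read $2^{\Sigma^{\le p-1}}\setminus\emptyset$ as the family of non-empty subsets of $\Sigma^{\le p-1}$. Non-emptiness is forced on the right-hand side: a regular $L$ with $\mpc(L)=p\ge 1$ is non-empty, and its shortest word is never pumpable, so $W^{\mpc}(L)\neq\emptyset$. For $p=0$ both sides are degenerate and are checked directly.

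\textbf{Inclusion $\mathcal{W}_p^{\mpc}\subseteq 2^{\Sigma^{\le p-1}}\setminus\emptyset$.} Let $L$ be regular with $\mpc(L)=p$. By Lemma~\ref{lem:witness-language-finite}, every word of $W^{\mpc}(L)$ has length at most $p-1$, so $W^{\mpc}(L)\subseteq\Sigma^{\le p-1}$. Non-emptiness holds by the shortest-word argument above. The same argument in the unary case gives $\mathcal{W}_p^{\mpc,u}$ contained in the unary instance of the left-hand side.

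\textbf{Inclusion $2^{\Sigma^{\le p-1}}\setminus\emptyset\subseteq\mathcal{W}_p^{\mpc}$.} Fix a non-empty $F\subseteq\Sigma^{\le p-1}$ and take $L=F$. By Lemma~\ref{lem:finite-language-its-own-witness} we get $W^{\mpc}(F)=F$. From the preliminaries, $\mpc(F)=1+\max\{|w|\mid w\in F\}$. Hence $\mpc(F)=p$, and so $F\in\mathcal{W}_p^{\mpc}$, exactly when $F$ contains a word of length $p-1$. The unary family $\mathcal{W}_p^{\mpc,u}$ is handled identically with $F\subseteq\ta^{\le p-1}$.

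\textbf{Main obstacle: sets $F$ with all words shorter than $p-1$.} Here one must enlarge $F$ to a regular $L\supseteq F$ with $\mpc(L)=p$ and $W^{\mpc}(L)=F$. The first attempt would add an infinite part, for example $\ta^{q}(\ta^{p})^{+}$ or a fresh-letter block $(\tb^{p})^{+}$, chosen so that every added word is pumpable and every word of $F$ stays unpumpable. However, Corollary~\ref{cor:mpc-WL-leq-mpc-L} says $\mpc(W^{\mpc}(L))=\mpc(L)=p$. This forces $W^{\mpc}(L)$ to contain a word of length exactly $p-1$. So no such enlargement can exist, and the inclusion holds only for those $F$ that contain a word of length $p-1$.

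\textbf{Consequence for the statement.} Read literally, the second inclusion, and hence the stated equality, fails whenever $p\ge 2$: for instance $F=\{\lambda\}$ is a non-empty subset of $\Sigma^{\le p-1}$ but is not in $\mathcal{W}_p^{\mpc}$. The argument therefore establishes the equality with the left-hand side restricted to the non-empty subsets of $\Sigma^{\le p-1}$ that contain a word of length $p-1$, in both the unary and the general case.
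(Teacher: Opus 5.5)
Your analysis is correct: the defect you found lies in the statement and in the paper's proof, not in your argument. The paper proves the reverse inclusion exactly by your first attempt. It takes $L'=L$ and uses that a finite language is its own $\mpc$-witness language. It then concludes $L\in\mathcal{W}_p^{\mpc}$ without ever checking $\mpc(L)=p$. Since $\mpc(L)=1+\max\{|w|\mid w\in L\}$, that holds only when $L$ contains a word of length $p-1$.

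Your appeal to the corollary $\mpc(W^{\mpc}(L))=\mpc(L)$ shows that no larger $L'$ can repair this. Equivalently, because $p-1$ is not a pumping constant, some unpumpable word of length exactly $p-1$ always lies in $W^{\mpc}(L)$. So $\{\lambda\}$ is a genuine counterexample for every $p\ge 2$, in both the unary and the general case. Your corrected characterisation is the right statement for $p\ge 1$: the non-empty subsets of $\Sigma^{\le p-1}$ that contain a word of length $p-1$, over a one-letter alphabet in the unary version.

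Two smaller points. First, your non-emptiness argument via a shortest word is the correct one. The paper instead asserts $\lambda\in W^{\mpc}(L)$, which is false in general; for example $W^{\mpc}(\{\ta\})=\{\ta\}$. Second, your remark that $p=0$ is ``checked directly'' should actually be carried out, because the check fails. Under your reading, the left-hand side is the empty family for $p=0$. But the only language with $\mpc=0$ is $\emptyset$, and $W^{\mpc}(\emptyset)=\emptyset$, so $\mathcal{W}_0^{\mpc}=\{\emptyset\}$ once the definition (stated in the paper only for $p\in\N$) is extended. Your corrected version should therefore be restricted to $p\ge 1$, and under the intended reading the original equality holds only for $p=1$.
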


\begin{proof}
	For the first direction let $L \in \mathcal{W}^{\mpc}_p$. Thus, $L$ is a witness language for $p$ w.r.t. $\mpc$. By \Cref{lem:witness-language-finite} we obtain that $L \in 2^{\Sigma^{\leq p-1}}$. Further, we know that $L \neq \emptyset$ since there does not exist any factorization for $\lambda$, i.e., $\lambda \in L$. 
	
	For the other direction, let $L \in 2^{\Sigma^{\leq p-1}} \setminus \emptyset$. To show that $L$ is a witness language we show that there exists a regular language $L'$ such that $W^{\mpc}(L') = L$. By \Cref{lem:finite-language-its-own-witness} we know that $W^{\mpc}(L') = L' = L$, i.e., $L \in \mathcal{W}^{\mpc}_p$. The claim follows.\iflipics\else\qed\fi
\end{proof}
\else

\fi

Although the situation for $\mpc$ was easy, we cannot obtain the same result for $\mpl$ (\Cref{lem:mpl}). For $\mpl$ the witness language does not need to be finite, e.g., the language~$L=(\ta^n)^*$ is equal to its witness language because none of the words in~$L$ can be pumped w.r.t. Lemma~\ref{lem:mpl} for any value less than~$\mpl(L)=n$.
However, the fact that all finite languages~$L$ fulfil~$\mpc(L)=\mpl(L)$ implies the following result.

\begin{lemma}\label{lem:mpl-finite-languages-bounded}
	For every $p\in\N$ we have~$2^{\Sigma^{\le p-1}}\setminus\emptyset \subset \mathcal{W}_p^{\mpl}.$\qed
\end{lemma}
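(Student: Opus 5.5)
The approach is to realise each admissible finite set as its own witness language, using Lemma~\ref{lem:finite-language-its-own-witness}, and then to exhibit one infinite witness language to make the inclusion proper.

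\textbf{Step 1 (finite sets).} Fix $p\in\N$ and a non-empty $F\subseteq\Sigma^{\le p-1}$. I would take $L=F$. By Lemma~\ref{lem:finite-language-its-own-witness}, $W^{\mpl}(F)=F$. By the facts recalled from~\cite{DaJe22}, $\mpl(F)=\mpc(F)=1+\max\{|w|\mid w\in F\}$. So $F\in\mathcal{W}^{\mpl}_{\mpl(F)}$. This gives $F\in\mathcal{W}_p^{\mpl}$ at once whenever the longest word of $F$ has length exactly $p-1$.

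\textbf{Step 2 (the remaining $F$ — main obstacle).} The hard case is $F$ with $\max\{|w|\mid w\in F\}<p-1$. Here $\mpl(F)<p$, so $L$ must be enlarged to raise $\mpl$ to exactly $p$ without adding new witnesses. My first attempt would be $L=F\cup G$, with $G$ a set of long words over a fresh letter, or a long unary block, that are pumpable with small $|xy|$. I would then check that every $w\in F$ is still unpumpable for all $p'<p$ and that no word of $G$ becomes a witness. I expect this step to fail, and I would check the following obstruction before anything else. Call $w$ pumpable w.r.t.\ $p'$ if it admits a decomposition $w=xyz$ with $y\neq\lambda$, $|xy|\le p'$ and $xy^tz\in L$ for all $t\ge 0$. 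This notion is monotone in $p'$. Since $\mpl(L)=p$, some word $w\in L$ with $|w|\ge p-1$ is not pumpable w.r.t.\ $p-1$. By monotonicity $w$ is not pumpable w.r.t.\ any $p'<p$, so $w\in W^{\mpl}(L)$. Hence every member of $\mathcal{W}_p^{\mpl}$ appears to contain a word of length at least $p-1$. If this holds, the inclusion can only hold for those $F\subseteq\Sigma^{\le p-1}$ that contain a word of length exactly $p-1$. In that case I would prove the inclusion for that family via Step~1 and record the restriction explicitly, rather than force a construction for the other sets. This check is the crux of the whole proposal.

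\textbf{Step 3 (strictness).} To show the inclusion is proper I would use the example given just before the statement, $L=(\ta^p)^*$. Its words are pumpable by the prefix $\ta^p$, and every word $\ta^{jp}$ with $j\ge 1$ fails for each shorter candidate $y=\ta^i$, $i<p$, because deleting $y$ leaves length $jp-i\not\equiv 0\pmod p$. Hence $\mpl(L)=p$ and $W^{\mpl}(L)=L$. Since $L$ is infinite, it lies in $\mathcal{W}_p^{\mpl}$ but not in $2^{\Sigma^{\le p-1}}$. For $p=1$ this example gives $\ta^*$, whose witness language is $\{\lambda\}$, which is finite. So for $p=1$ I would instead take a language such as $\ta^*\tb$ over two letters. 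There $\mpl=1$ would have to be checked directly, and the argument must verify that its witness language is infinite.
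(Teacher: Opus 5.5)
Your Step~2 obstruction is correct, and it exposes a defect in the statement rather than in your approach: as printed, the lemma is false for every $p\ge 2$. Since $p-1$ is not an admissible constant for $L$, some $w\in L$ with $|w|\ge p-1$ admits no pumping decomposition with $|xy|\le p-1$. Lowering the bound only removes decompositions, so $w\in W^{\mpl}(L)$. For example, $\{\lambda\}\notin\mathcal{W}_p^{\mpl}$ whenever $p\ge 2$.

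The paper's justification is that finite languages satisfy $\mpc=\mpl$ and are their own witness languages, transferred from its classification of $\mathcal{W}_p^{\mpc}$. That is exactly your Step~1. It only gives $F\in\mathcal{W}^{\mpl}_{\mpl(F)}$ with $\mpl(F)=1+\max\{|w|\mid w\in F\}$, and it tacitly assumes this value equals $p$. The $\mpc$ classification it rests on has the same gap: by Corollary~\ref{cor:mpc-WL-leq-mpc-L}, every $F\in\mathcal{W}_p^{\mpc}$ satisfies $\mpc(F)=p$, i.e.\ contains a word of length $p-1$.

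So the statement you actually prove is the correct one: a non-empty $F\subseteq\Sigma^{\le p-1}$ belongs to $\mathcal{W}_p^{\mpl}$ if and only if $F\cap\Sigma^{p-1}\neq\emptyset$, with $L=F$ for the ``if'' part. Drop the hedging (``appears'', ``if this holds''). Your monotonicity argument is already a complete proof of the ``only if'' part.

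Your strictness argument with $(\ta^p)^*$ for $p\ge 2$ is the paper's own example and is fine, but your treatment of $p=1$ is wrong. If $\mpl(L)=1$, the only bound below $\mpl(L)$ is $0$, and no decomposition with $y\neq\lambda$ satisfies $|xy|\le 0$. Hence every word of $L$ is a witness and $W^{\mpl}(L)=L$; the paper itself notes in Section~\ref{sec:witness-language-operator} that $\ta^*$ is a fixed point. Thus $W^{\mpl}(\ta^*)=\ta^*$ is infinite, and $(\ta^p)^*$ works uniformly for all $p\in\N$.

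Your substitute $\ta^*\tb$ would have failed the check you postponed. The word $\tb\in\ta^*\tb$ has length $1$, but $\tb^0=\lambda\notin\ta^*\tb$, so $\mpl(\ta^*\tb)=2$.
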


Nevertheless, we will prove that for each regular language its witness language w.r.t. Lemma~\ref{lem:mpl} is regular as well. Therefore, we use the following result, which is a direct consequence of the proof of \cite[Theorem 8]{GrHoRa24b}.

\begin{corollary}\label{cor:small-words-are-sufficient-for-mpl}
	Let $L$ be a regular language over~$\Sigma$ accepted by an~$n$-state finite automaton and let~$p\in\N$. For each word~$w\in L$ with~$|w|>n+|M(L)| \geq p$ which can be pumped w.r.t. Lemma~\ref{lem:mpl} and the value~$p$ there is a word~$w'\in L$ such that\\
	- $p\le |w'|\le n+|M(L)|$,\\
	- the prefixes of length~$p$ are equal for~$w$ and~$w'$,\\
	- $\varphi_L(w)=\varphi_L(w')$.\qed
\end{corollary}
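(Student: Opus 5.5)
The plan is to build $w'$ from $w$ by keeping the prefix of length $p$ and shortening the remaining suffix with a pigeonhole argument on the syntactic monoid. Write $w=uv$ with $|u|=p$; this split exists because $|w|>n+|M(L)|\ge p$. Since $\varphi_L$ is a morphism, $\varphi_L(w)=\varphi_L(u)\varphi_L(v)$. Hence any $v'$ with $\varphi_L(v')=\varphi_L(v)$ gives $w'=uv'$ with the same prefix of length $p$ and $\varphi_L(w')=\varphi_L(w)$. In particular $w'\in L=\varphi_L^{-1}(N)$. The lower bound $|w'|\ge p$ then holds automatically.

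The shortening step is the classical one. Suppose $|v|\ge |M(L)|$. The $|M(L)|+1$ prefixes $v[1..0],\ldots,v[1..|M(L)|]$ have images in $M(L)$, so two of them coincide: $\varphi_L(v[1..i])=\varphi_L(v[1..j])$ with $0\le i<j\le |M(L)|$. Deleting $v[i+1..j]$ leaves $\varphi_L$ of the suffix unchanged, since $\varphi_L(v)=\varphi_L(v[1..j])\varphi_L(v[j+1..|v|])$. Iterating gives a suffix $v'$ with $|v'|\le |M(L)|-1$ and the same image. As a side remark, pumpability is also preserved: if $w=xyz$ with $|xy|\le p$, then $xy$ lies inside $u$. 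Replacing $z$ by $z'$, where $xyz'=uv'$, gives $\varphi_L(xy^tz')=\varphi_L(xy^tz)$ for all $t\ge 0$. So $w'$ is pumpable w.r.t.\ $p$ by the same factor, although the statement does not ask for this.

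This yields $|w'|\le p+|M(L)|-1$, which is at most $n+|M(L)|$ as soon as $p\le n+1$. The main obstacle is the remaining range $n+1<p\le n+|M(L)|$, which the hypothesis formally allows. There the iterated deletion may end with $|w'|=p+|M(L)|-1>n+|M(L)|$, and deletions cannot reach into the fixed prefix $u$.

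I would handle this range as follows. The corollary is meant to be applied, as in \cite[Theorem 8]{GrHoRa24b}, only to values $p<\mpl(L)$, because witnesses are defined via $p<K(L)$. Since $\mpl(L)\le\dsc(L)\le n$ by the facts recalled from~\cite{DaJe22}, every such $p$ satisfies $p\le n-1$, and the argument above then gives the stated bound $|w'|\le n+|M(L)|$. For a literally arbitrary $p\le n+|M(L)|$, I would first try to delete only while the current length exceeds $n+|M(L)|$, each step removing at most $|M(L)|$ letters. This still does not force $|w'|\le n+|M(L)|$ when the suffix is already shorter than $|M(L)|$. So in the final write-up I would state explicitly that the bound is proved for $p\le n+1$, which covers every $p$ the paper uses, rather than claim it for the whole range $p\le n+|M(L)|$.
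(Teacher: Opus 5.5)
Your construction is correct where you claim it. Keep the prefix $u$ of length $p$ and shorten the suffix by pigeonhole on the images of its prefixes in $M(L)$. This gives $w'=uv'$ with $\varphi_L(w')=\varphi_L(w)$, hence $w'\in L$, and $p\le|w'|\le p+|M(L)|-1$. That is at most $n+|M(L)|$ whenever $p\le n+1$.

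The paper gives no proof of its own, only the pointer to the proof of Theorem~8 in \cite{GrHoRa24b}, so there is nothing to compare against beyond that. Your monoid pigeonhole is the natural argument behind that pointer. Your side remark that the pumping decomposition of $w$ transfers to $w'$ (because $xy$ lies inside $u$) is also correct, and it is what makes the corollary useful.

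Your hesitation about the range $n+1<p\le n+|M(L)|$ is justified: the statement is false there, so no method could cover it. For $p=n+|M(L)|$ the conclusion forces $w'$ to be the length-$p$ prefix of $w$, and nothing forces that prefix into $L$. A concrete counterexample:
\begin{itemize}
\item Take $L=\ta(\ta\ta)^*$. It is accepted by a $2$-state DFA, and its syntactic monoid has two elements (even and odd length), so $n=|M(L)|=2$.
\item Take $p=4$ and $w=\ta^5$. Then $|w|=5>4\ge p$, and $w$ is pumpable with $x=\lambda$, $y=\ta\ta$, $|xy|=2\le p$.
\item The only word of length $4$ with prefix $\ta^4$ is $\ta^4\notin L$, so no $w'$ exists.
\end{itemize}
A binary example is $\{w\in\{\ta,\tb\}^*\mid |w|_{\ta}\text{ even}\}$ with $p=4$ and $w=\ta\tb\tb\tb\ta$.

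So restricting to $p\le n+1$, or stating the bound as $|w'|\le p+|M(L)|-1$, is a necessary correction of the statement rather than a gap in your proof. It still covers the paper's only application, where $p=\mpl(L)-1<n$. You would strengthen your write-up by giving such a counterexample instead of only noting that your method does not reach that range.

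One small repair: the chain $\mpl(L)\le\dsc(L)\le n$ needs the $n$-state automaton to be deterministic. For an NFA, obtain $\mpl(L)\le n$ directly from a repeated state among the first $n+1$ states of an accepting run.
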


\ifpaper
\begin{reptheorem}{thm:Reg-L-implies-wit-mpl-L_is-regular}
	For any regular language~$L$, $W^{\mpl}(L)$ is regular.
\end{reptheorem}

\begin{proof}
	In the cases~$L=\emptyset$ and $\mpl(L)=1$ we obtain directly~$W^{\mpl}(L)=L$. 
	Therefore, assume that~$L$ is accepted by an~$n$-state \dfa\ with~$n>2$.
	Then $1\le\mpl(L)<n$.
	For each word $w\in\Sigma^*$, write $w=uv$ with
	$|u|=\min\{n,|w|\}$, and define its type by
	\[
	\tau(w)=(u,\varphi_L(v)).
	\]
	As the monoid~$M(L)$ is finite and~$|u|\le n$, there are only finitely many types.
	
	We show that membership in $W^{\mpl}(L)$ depends only on this
	type. Suppose that $\tau(w_1)=\tau(w_2)$.
	If $|w_1|<n$, then $w_1=w_2$, and the assertion is immediate.
	Otherwise, write
	\[
	w_1=uv_1,\qquad w_2=uv_2,\qquad |u|=n,
	\qquad \varphi_L(v_1)=\varphi_L(v_2).
	\]
	In particular, $w_1\in L$ if and only if $w_2\in L$.
	By using Corollary~\ref{cor:small-words-are-sufficient-for-mpl} for~$p=\mpl(L)-1$
	we obtain that for every decomposition $u=xyz$ with $y\neq\lambda$
	and $|xy|\leq \mpl(L)-1$, we have
	\[
	xy^tzv_1\in L
	\quad\Longleftrightarrow\quad
	xy^tzv_2\in L
	\qquad\text{for every }t\geq0.
	\]
	Since~$|xy|\le\mpl(L)-1\le|xyz|=|u|$, it follows that
	\[
	w_1\in W^{\mpl}(L)
	\quad\Longleftrightarrow\quad
	w_2\in W^{\mpl}(L).
	\]
	
	Finally, the type after reading an additional letter depends
	only on the current type. More precisely, with $1_{M(L)}$ denoting
	the identity of~$M(L)$, define
	\[
	(u,m)\cdot \ta=
	\begin{cases}
		(u\ta,1_{M(L)}), & |u|<n,\\
		(u,m\varphi_L(\ta)), & |u|=n.
	\end{cases}
	\]
	These transitions, together with the initial state
	$(\lambda,1_{M(L)})$, define a DFA on the finite set of types.
	In this regard, its set of accepting states consists of all types
	that belong to~$W^{\mpl}(L)$.
	The resulting DFA accepts exactly $W^{\mpl}(L)$, which is
	therefore regular.
	\iflipics\else\qed\fi
\end{proof}

\else

\fi

Note that although the previous proof is constructive, the construction is very likely not feasible in sub-exponential time.

\ifpaper
\begin{reptheorem}{thm:exponential-size-growing-alphabet}
	Let~$K\in\{\mpc,\mpl\}$. For each~$n\in \N_{\ge 3}$ there is a \dfa~$A_n$ such that~$\dsc(L(A))=n+1$ and~$\dsc(W^K(L(A)))\in \Omega(2^n)$.
\end{reptheorem}

The previous result can be shown by using an~$n$-state DFA over an input alphabet of size~$n^2-n+1$ with~$n-1$ accepting states and one non-accepting sink state. Each of the input letters maps exactly one of the accepting states onto another accepting state while all other states are mapped onto the sink state.
By encoding the alphabet into a binary alphabet \textit{via} a construction introducing $\Theta(n^3)$ states, we obtain the following stronger result.

\else

\begin{proof}
	
	Let~$A_n=\langle \{q_0,q_1,\ldots, q_{n-1},q_s\},\Sigma_n,\cdot, q_0, \{q_0,q_1,\ldots, q_{n-1}\}\rangle$ where~$\Sigma_n=\{\ta_0,\mathtt{a}_1,\mathtt{a}_2,\ldots,$ $\mathtt{a}_{n^2-n}\}$ and we set 
	\begin{align*}
		q_i\cdot \mathtt{a}_{(j-1)^2-j+2i+1}&=q_j, &\mbox{for~$0\le i<j\le n-1$,}\\
		q_j\cdot \mathtt{a}_{(j-1)^2-j+2i+2}&=q_i, &\mbox{for~$0\le i<j\le n-1$,}
	\end{align*}
	where all not-explicitly defined transitions are transitions to the sink state~$q_s$.
	The automaton $A_n$ is depicted in Figure~\ref{fig:exponential-sc}.
	\begin{figure}[h!]
		\begin{center}
						\begin{tikzpicture}[->,>=stealth',shorten >=1pt,auto,node distance=2.9cm,
	semithick,scale=0.7, every node/.style={scale=0.7},initial text=]
	\tikzstyle{every state}=[fill={rgb:black,1;white,10},ellipse]

	\node[state,initial,accepting] (q0) at (180:7.5cm) [thick] {$q_0$};	
	\node[state,accepting] (q1) at (270:5cm) [thick] {$q_1$};
	\node[state,accepting] (q2) at (0:7.5cm) [thick] {$q_k$};	
	\node[state,accepting] (qn1) at (90:5cm) [thick] {$q_{n-1}$};	



	\path 

	(q0) edge[in=225,out=270] node[below left]{$\ta_0$} (q1)
	(q1) edge[in=315,out=180] node[above, right]{$\ta_1$} (q0)

	(q2) edge[in=315,out=270] node[below right]{$\ta_{(k-1)^2-k+4}$} (q1)
	(q1) edge[in=225,out=0] node[above left]{$\ta_{(k-1)^2-k+3}$} (q2)

	(qn1) edge[in=45,out=180] node[below right]{$\ta_{(n-2)^2-(n-1)+2}$} (q0)
	(q0) edge[in=135,out=90] node[above left]{$\ta_{(n-2)^2-(n-1)+1}$} (qn1)

	(qn1) edge[in=135,out=0] node[below left]{$\ta_{(n-2)^2-(n-1)+2k+2}$} (q2)
	(q2) edge[in=45,out=90] node[above right]{$\ta_{(n-2)^2-(n-1)+2k+1}$} (qn1)
	
	(q2) edge[bend right=20] node[below]{$\ta_{(k-1)^2-k+2}$} (q0)
	(q0) edge[bend right=10] node[above]{$\ta_{(k-1)^2-k+1}$} (q2)

	(qn1) edge[bend right] node[left]{$\ta_{(n-2)^2-(n-1)+4}$} (q1)
	(q1) edge[bend right=20] node[right]{$\ta_{(n-2)^2-(n-1)+3}$} (qn1)

	;
\end{tikzpicture}
		\end{center}
		\vspace{-2cm}
		\caption{The minimal \dfa~$A_n$ with~$n+1$ states for which we have~$\dsc(W^K(L(A_n)))\in \Omega(2^n)$, where 
			the non-accepting sink state~$q_s$ and all transitions to it are not displayed.
		}\vspace{-0.1cm}
		\label{fig:exponential-sc}
	\end{figure}
	%
	The reader may verify that for each letter~$\ta\in\Sigma_n$ there is exactly one state which is not mapped onto the non-accepting state by~$\mathtt{a}$.
	Hence, we directly obtain that the automaton~$A_n$ is minimal.
	
	
	\begin{mialc}
		The language~$W^K(L(A_n))$ consists of all words which induce a simple path\footnote{A \emph{path} on a graph is a sequence of nodes where two consecutive nodes are connected in the graph. A path is called \emph{simple} if the path does not contain a node more than once.} on~$A_n$ not containing~$q_s$.
	\end{mialc}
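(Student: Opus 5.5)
The plan is to first pin down the combinatorial structure of $L(A_n)$, then characterise pumpability of a single word, and finally compute $K(L(A_n))$ so that the witnesses can be read off. The accepting part of $A_n$ is a complete directed graph on $q_0,\dots,q_{n-1}$. Each directed edge carries its own letter, and every letter has exactly one source state from which it does not lead to $q_s$. Hence $w\in L(A_n)$ if and only if the run of $w$ from $q_0$ never enters the sink, i.e.\ $w$ spells a walk in this graph starting at $q_0$. Two consequences will be used repeatedly. First, $L(A_n)$ is prefix closed. Second, a word determines its walk uniquely, and conversely a letter can only be read in its unique source state.

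Next I characterise, for $w\in L(A_n)$ and a decomposition $w=xyz$ with $y\neq\lambda$, when $xy^tz\in L(A_n)$ holds for all $t\ge 0$. Let $q$ be the state reached after $x$ and $q'$ the state reached after $xy$.
\begin{itemize}
\item If $q=q'$, then $y$ labels a closed walk, and every $xy^tz$ is again a walk from $q_0$, so the decomposition pumps.
\item If $q\neq q'$, then already $t=2$ fails. The first letter of the second copy of $y$ has source $q$, but it is read in state $q'$, so the run enters $q_s$.
\end{itemize}
Therefore $w$ is pumpable via $xyz$ exactly when the walk of $w$ visits the same state at positions $|x|$ and $|xy|$. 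In particular, $w$ is pumpable at all (ignoring the length constraints) if and only if its walk is not simple.

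Then I compute the constants. A simple walk has at most $n$ states and hence length at most $n-1$. The Hamiltonian walk $q_0q_1\cdots q_{n-1}$ gives a word of length $n-1$ that cannot be pumped, so $\mpc(L(A_n))\ge n$. Conversely, any word of length at least $n$ repeats a state among its first $n+1$ visited states, so it is pumpable with $|xy|\le n$. This gives $\mpc(L(A_n))=\mpl(L(A_n))=n$. For $K=\mpc$ the claim then follows at once: a word is a witness if and only if it is not pumpable for $p\le n-1$. By the characterisation above and the length bound, this means exactly that its walk is simple. Every simple walk has length at most $n-1$, and such words admit no pumping decomposition at all, so they are witnesses; every non-simple walk is pumpable, so it is not.

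The main obstacle is the case $K=\mpl$, where we additionally need $|xy|\le p\le n-1$. I would take the first repetition in the walk, say position $j$ equals position $k$ with $j<k$ and $k$ minimal, and choose $x=w[1..j]$ and $y=w[j+1..k]$. This gives $|xy|=k$, and $k\le n-1$ holds unless the first $n$ visited states are pairwise distinct and the repetition happens only at position $n$. That boundary case, a Hamiltonian walk followed by an edge back to an earlier state, is exactly where the argument must be checked with care. If such a word is not pumpable for $p\le n-1$, then it is an $\mpl$-witness even though its walk is not simple. I would therefore first test whether the intended notion of simple path treats this closing edge as admissible. If it does not, the claim for $\mpl$ has to be read as simple walks together with these length-$n$ closed-up Hamiltonian words. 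That modified set still has exponentially many members, so the intended $\Omega(2^n)$ lower bound argument should go through with either reading.
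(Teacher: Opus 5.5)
Your argument for $K=\mpc$ is correct and is essentially the paper's. A simple walk has no pumping decomposition, because pumping forces $y$ to label a closed walk; your $t=2$ argument via the unique source state makes this precise. A non-simple walk is pumped by a cycle. Your explicit computation $\mpc(L(A_n))=\mpl(L(A_n))=n$ is a useful addition that the paper omits.

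For $K=\mpl$ you rightly noticed that the paper's second direction (``can be pumped by the factor which induces a cycle'') ignores the bound $|xy|\le p<n$. Your repair, however, is wrong: you conflated ``the first repetition occurs at position $n$'' with ``the word has length $n$''. By your own characterisation nothing is left to test. Suppose the first $n-1$ letters of $w\in L(A_n)$ trace a Hamiltonian path from $q_0$. Then the states at positions $0,\dots,n-1$ are pairwise distinct, so no decomposition with $|xy|\le n-1$ pumps. Hence $w$ is an $\mpl$-witness for every length $|w|\ge n-1$, not just $|w|=n$. Since each accepting state has an edge to every other one, such a prefix extends by arbitrary walks. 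Therefore
\[
W^{\mpl}(L(A_n))=\{\,w\in L(A_n)\mid \text{the prefix of } w \text{ of length } \min\{|w|,n-1\} \text{ induces a simple path}\,\}
\]
is infinite. The claim is thus false for $K=\mpl$. Neither another reading of ``simple path'' nor your set (simple walks plus length-$n$ closed-up Hamiltonian words) gives the right language.

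For the same reason, the $\Omega(2^n)$ bound does not carry over ``with either reading''. The paper's next step, that words of length $n-1$ lead to a state sending every letter to the sink, fails for $\mpl$. A new argument is needed, for instance distinguishing simple prefixes of length at most $n-2$ by their pair (current state, set of visited states).
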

	
	\begin{proof}
		Given a word~$w\in L$ which induces a simple path~$p$ on~$A_n$ for which we assume that~$w\notin W^K(L)$, i.e., $w$ is pumpable.
		Any decomposition~$w=xyz$ with~$|y|> 0$ gives rise to three states
		$$q_x= q_0\cdot x,\quad q_{xy}=q_x\cdot y,\quad\mbox{and}\quad q_{xyz}=q_{xy}\cdot z,$$ 
		where all states are pairwise different, since~$p$ is a simple path.
		Let~$\mathtt{a}$ be the first letter of~$z$ then for all states~$q$ of~$A_n$ unequal to~$q_x$ we have~$q\cdot \mathtt{a}=q_s$.
		If~$w$ is pumpable by~$y$ we obtain that~$q_{x}=q_{xy^i}$ for each $i \in \N$ which in turn implies that~$y$ induces a self-loop on~$q_x$.
		Indeed, this contradicts the assumption that~$p$ is a simple path and thus~$w$ cannot be pumped w.r.t. to any decomposition~$w=xyz$ with~$|y|>0$.
		
		On the other hand, each word~$w\in L$ which does not induce a simple path on~$L$ can be pumped by the factor of~$w$ which induces a cycle.\iflipics\else\qed\fi
	\end{proof}
	
	It remains to show that the minimal \dfa\ accepting the witness language of~$L(A_n)$ is exponentially large.
	
	\begin{mialc}\label{clm:ex-WL-general}
		The language~$\dsc(W^K(L(A_n)))\in\Omega(2^n)$.
	\end{mialc}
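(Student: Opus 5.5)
We will show the lower bound with a Myhill--Nerode argument. By the preceding claim, $W^K(L(A_n))$ is exactly the set of words inducing a simple path in $A_n$ from $q_0$ that avoids $q_s$. So it suffices to exhibit $2^{n-1}$ (or at least $\Omega(2^n)$) pairwise inequivalent words with respect to the Nerode relation of $W:=W^K(L(A_n))$.

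For every subset $S\subseteq\{q_1,\ldots,q_{n-2}\}$ we fix one word $w_S$. It should induce a simple path that starts in $q_0$, visits exactly the states $\{q_0\}\cup S$, and ends in a fixed state $q_{n-1}\notin S$ (the last step enters $q_{n-1}$). Such a word exists because $A_n$ contains, for every ordered pair of distinct non-sink states, a dedicated letter leading from one to the other. We can therefore list the states of $S$ in increasing order, follow them by $q_{n-1}$, and spell out the letters along this sequence. Since each letter is defined on only one state, the letter sequence determines the run, and the run visits each state at most once; hence $w_S\in W$. This gives $2^{n-2}\in\Omega(2^n)$ words.

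Next, let $S\neq S'$, say $q_j\in S\setminus S'$. We take the suffix $v$ consisting of the single letter that leads from $q_{n-1}$ to $q_j$. Then $w_{S'}v$ still induces a simple path avoiding $q_s$, because $q_j$ has not been visited, so $w_{S'}v\in W$. On the other hand, $w_Sv$ revisits $q_j$, so by the claim $w_Sv\notin W$. (We can also argue this directly: the factor from the first visit of $q_j$ to the end forms a cycle and can be pumped. Note that $w_Sv\in L(A_n)$, so it must be excluded through pumpability rather than non-membership.) Consequently the words $w_S$ lie in pairwise distinct Nerode classes, and every DFA for $W$ has at least $2^{n-2}$ states, which gives $\dsc(W^K(L(A_n)))\in\Omega(2^n)$.

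The main obstacle is carefully using the claim's characterisation in both directions for both $K\in\{\mpc,\mpl\}$. We need that a simple path word is not pumpable with respect to any $p<K(L)$, and that a word with a repeated state is pumpable with respect to some $p<K(L)$. The second direction requires $K(L(A_n))$ to be large enough, roughly $n+1$, since the longest simple path words of length $n-1$ are witnesses. For $\mpl$ we should additionally choose the pumped cycle within the first $p$ letters. Since the first repeated state occurs within the first $n$ letters, this works once $\mpl(L)\ge n$, and I would double check that bound.
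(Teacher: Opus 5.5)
Your fooling-set argument takes a different route from the paper, which argues inductively that $\dsc(W^K(L(A_{n+1})))\ge 2\,\dsc(W^K(L(A_n)))$. For $K=\mpc$ your argument is complete and cleaner. For $K=\mpl$, however, the step ``$w_Sv\notin W$'' fails, and the threshold you wanted to double-check is off by one.

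We have $\mpl(L(A_n))=n$ exactly. A word inducing a simple path through all $n$ non-sink states has length $n-1$ and admits no pumping decomposition. Every word of $L(A_n)$ of length at least $n$ repeats a state among its first $n+1$ run positions. So the $\mpl$-witnesses are the words of $L(A_n)$ with no pumping decomposition satisfying $|xy|\le n-1$. You therefore need the cycle to close within the first $\mpl(L)-1=n-1$ letters, not within the first $n$.

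Now take $S=\{q_1,\ldots,q_{n-2}\}$. Then $w_S$ is a Hamiltonian path, $w_Sv$ has length $n$, and its only repetition occurs at the last letter. Consider any decomposition with $|xy|\le n-1$. Then $q_0\cdot x\neq q_0\cdot xy$. The first letter of $y$ is defined outside the sink only on $q_0\cdot x$, so it sends $q_0\cdot xy$ to $q_s$ and $xy^2z\notin L(A_n)$. Hence $w_Sv\in W^{\mpl}(L(A_n))$, just like $w_{S'}v$, and $v$ separates nothing.

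In particular, the characterization in the preceding claim is false for $\mpl$, so you cannot cite it there. The paper's doubling argument relies on the same characterization: its step that every letter sends the length-$(n-1)$ witnesses to the sink fails for $\mpl$. So this problem is not specific to your route.

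The repair is easy. Use only the sets $S$ with $|S|\le n-3$, which still gives $2^{n-2}-1$ words. Then $w_Sv$ has length at most $n-1$. Let $x$ be the prefix up to the first visit of $q_j$, let $y$ be the rest, and let $z=\lambda$. This decomposition satisfies $|xy|\le n-1<\mpl(L(A_n))$, so $w_Sv$ is not a witness. On the other hand, $w_{S'}v$ is a witness, because a simple-path word admits no pumping decomposition at all, by the same $t=2$ argument. This repair needs only $\mpl(L(A_n))\ge n$ and avoids the faulty characterization entirely.
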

	
	\begin{proof}
		Since the stated claim is trivial for~$n=1,2,3$ we assume without loss of generality that~$n\ge 4$.
		
		For proving the stated claim we show~$\dsc(W^K(L(A_{n+1})))\ge 2\cdot \dsc(W^K(L(A_{n})))$.
		Observe that~$W^K(L(A_{n}))\subseteq W^K(L(A_{n+1}))$ as the automaton~$A_{n+1}$ can be obtain by adding the state~$q_{n+1}$ and the letters
		$$\mathtt{a}_{(n-1)^2-n+1},\mathtt{a}_{(n-1)^2-n+2},\ldots, \mathtt{a}_{(n-1)^2-n+2(n-1)+2}$$
		to~$A_n$.
		Therefore, let~$B_n$ be the minimal \dfa\ accepting the language~$W^K(L(A_n))$.
		
		First we observe that the words of length~$n-1$ leave the automaton~$B_n$ in the same accepting state which is mapped by each letter onto the non-accepting sink state.
		Hence, let~$w,w'\in W^K(L(A_{n+1}))$ be words of length~$k\le n-2$ such that~$w$ contains none of the letters in~$\Sigma_{n+1}\setminus \Sigma_n$ while~$w'$ contains such a letter.
		There is a letter~$\mathtt{a}\in\Sigma_{n+1}\setminus \Sigma_n$ such that~$w\mathtt{a}$ maps~$q_0$ onto~$q_n$. Then, $w'\mathtt{a}$ either maps~$q_0$ onto~$q_n$ and thus induces a non-simple path or~$w'\mathtt{a}$ maps~$q_0$ onto~$q_s$.
		In both cases~$w'\mathtt{a}$ is not in~$W^K(L(A_{n+1}))$.
		Therefore, the states~$q_w$ and~$q_{w'}$ are mapped by~$\mathtt{a}$ onto the states~$q_{w\mathtt{a}}$ and~$q_{w'\mathtt{a}}$, \resp, where the first one is accepting while the latter one is non-accepting.
		Thus, $q_w$ and~$q_{w'}$ cannot be equivalent.
		Since for each word~$w\in W^K(L(A_n))$ of length at most~$n-2$ such a letter~$\mathtt{a}\in\Sigma_{n+1}\setminus\Sigma_n$ exists, we obtain that~$B_{n+1}$ must contain at least twice as many states as~$B_n$.
		\iflipics\else\qed\fi
	\end{proof}
	As a consequence, of the previous claims and the minimality of~$A_n$ the statement of the theorem follows.\iflipics\else\qed\fi
\end{proof}

A schematic drawing of the minimal \dfa~$A_n$ with~$n+1$ states for which we have~$\dsc(W^K(L(A_n)))\in \Omega(2^n)$ can be found in \Cref{fig:exponential-sc}.
We will enhance the previous result by encoding the alphabet of quadratic size into a binary alphabet.
\fi

\ifpaper
\begin{reptheorem}{thm:exponential-size-binary}
	Let~$K\in\{\mpc,\mpl\}$. There exists a $n \in \N$ such that there is a binary language~$L$ with~$\dsc(L)=n$ and~$\dsc(W^K(L))\in\Omega(2^n)$.
\end{reptheorem}
\else

\begin{proof}
	Since the construction is straight-forward we will sketch the construction and prove the non-trivial properties instead.
	From the automaton~$A_n$ as defined in the proof of Theorem~\ref{thm:exponential-size-growing-alphabet} we define~$A_n'$ by adding~$n$ states for each transition of~$A_n$, i.e., we the add~$(n^2-n-1)\cdot n$ states
	$$\{q_{i,j}\mid 0\le i\le n-1, 1\le j\le n^2-n-1\}$$
	to the state set of~$A_n$.
	We define the transition function of~$A_n'$ as follows, whereby we want the sequence~$\mathtt{0}^{r-1}\mathtt{1}\mathtt{0}^{n^2-n-1-(r-1)}$ to encode the letter~$\ta_r$ for $r \in [n^2-n]_0$.
	Assume that~$q_i\cdot_{A_n} \ta_r=q_j$ for~$0\le i,j\le n-1$ and $r \in [n^2-n]_0$. Then we set
	\begin{align*}
		q_i\cdot_{A_n'} \mathtt{0}&=q_{i,1},&\\
		q_i\cdot_{A_n'} \mathtt{1}&=q_s,&\\
		q_{i,k}\cdot_{A_n'} \mathtt{0}&=q_{i,k+1}, &\mbox{for~$1\le k\le n^2-n-2$ and~$k\ne i-1$,}\\
		q_{i,k}\cdot_{A_n'} \mathtt{0}&=q_s, &\mbox{for~$1\le k\le n^2-n-2$ and~$k\ne i-1$},\\
		q_{i,i-1}\cdot_{A_n'} \mathtt{1}&=q_{i,i},&\\
		q_{i,i-1}\cdot_{A_n'} \mathtt{0}&=q_s,&\\
		q_{i,n^2-n-1}\cdot_{A_n'} \mathtt{0}&=q_{i+1},&\\
		q_{i,n^2-n-1}\cdot_{A_n'} \mathtt{1}&=q_s.&
	\end{align*} 
	
	\begin{mialc}
		The automaton~$A_n'$ is minimal.
	\end{mialc}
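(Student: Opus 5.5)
We show that $A_n'$ is minimal by proving that every state is reachable from $q_0$ and that any two distinct states are distinguishable. The definition of $A_n'$ in the sketch is slightly garbled: the condition ``$k\ne i-1$'' should refer to the position $r-1$ of the encoded letter, and there is one gadget per transition. We fix the intended reading first. For every transition $q_i\cdot_{A_n}\ta_r=q_j$, a chain of intermediate states reads the block $\mathtt{0}^{r-1}\mathtt{1}\mathtt{0}^{n^2-n-1-(r-1)}$. Any deviation from an admissible block leads to the sink $q_s$.

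\textbf{Reachability.} Every original state $q_j$ with $j\le n-1$ is reachable in $A_n$, since $q_0\cdot\ta_r=q_j$ for a suitable $r$. In $A_n'$ it is therefore reachable by the encoding of that letter. Every intermediate state lies on the block-reading chain starting from some reachable original state, so it is reached by a prefix of an admissible block. The sink $q_s$ is reached by the word $\mathtt{1}$ from $q_0$.

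\textbf{Distinguishability.} We attach to each state $q$ its \emph{residual set}: the set of suffixes of admissible-block prefixes that are defined from $q$ (do not reach $q_s$), together with the accepting continuations.

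\begin{itemize}
\item $q_s$ is the only state from which no word is accepted, so it is distinguished from all others.
\item Intermediate states at different depths of a block are distinguished by length. From depth $k$ one must read exactly the remaining $n^2-n-k$ block symbols before returning to an accepting original state. Since all blocks have the same length $n^2-n$, this shortest accepted length determines $k$.
\item States at the same depth are distinguished by content. Two states at the same depth with distinct admissible block remainders are separated by such a remainder. That remainder leads one of them to an original state and the other to $q_s$, because the position of the single $\mathtt{1}$ differs.
\item The remaining case is two states at the same depth with the same remainders. Then they come from the same set of original states via the same prefix. In particular, two original states $q_i\ne q_j$ are distinguished in $A_n$ by the letters: each letter $\ta$ is defined at exactly one state. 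Encoding such a letter gives a block accepted from $q_i$ and leading to $q_s$ from $q_j$.
\end{itemize}

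The main obstacle is this last case, i.e.\ making the chain structure precise. Chains must be shared per source state $q_i$ and branch on the position of the $\mathtt{1}$, so that $A_n'$ is deterministic. Then intermediate states are indexed by (source state, prefix read). Two such states with the same prefix but different sources have different sets of allowed $\mathtt{1}$-positions, because the letter sets defined at different states are disjoint. Whenever the prefixes differ, either the lengths differ or exactly one of them has already consumed its $\mathtt{1}$; in both cases a suitable completion distinguishes them. We would fix this indexing first and then run the case distinction above.
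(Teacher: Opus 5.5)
There is a genuine gap, and under your own reading of the construction the claim is actually false. The decisive step is your sentence that whenever two prefixes differ, either their lengths differ or exactly one of them has already consumed its $\mathtt{1}$. The same error appears in the case list, where you say that two states of equal depth with equal remainders come from the same original states via the same prefix. Both statements overlook prefixes $\mathtt{0}^{r-1}\mathtt{1}\mathtt{0}^{d-r}$ and $\mathtt{0}^{r'-1}\mathtt{1}\mathtt{0}^{d-r'}$ with $r\neq r'$, which have equal length and are both past their $\mathtt{1}$.

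After the $\mathtt{1}$ the only admissible remainder is $\mathtt{0}^{n^2-n-d}$. So the residual language of such a state is $\mathtt{0}^{n^2-n-d}$ followed by the residual language of the \emph{target} of the encoded letter. It depends neither on the source nor on where the $\mathtt{1}$ was. In $A_n$ every $q_j$ receives a letter from each of the other $n-1\geq 2$ states. Take $\ta_r\colon q_i\to q_j$ and $\ta_{r'}\colon q_{i'}\to q_j$ with $i\neq i'$. Choose $q_j$ so that neither $\mathtt{1}$ sits in the last block position; this is possible because at most one letter puts it there. Then the state reached from $q_i$ by $\mathtt{0}^{r-1}\mathtt{1}\mathtt{0}^{d-r}$ and the state reached from $q_{i'}$ by $\mathtt{0}^{r'-1}\mathtt{1}\mathtt{0}^{d-r'}$ are equivalent for every $d$ with $\max\{r,r'\}\leq d\leq n^2-n-1$. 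At $d=n^2-n-1$ they even have identical transitions: $\mathtt{0}$ to $q_j$ and $\mathtt{1}$ to $q_s$. Hence an automaton whose intermediate states are indexed by (source, prefix) is not minimal. Appealing to the distinguishability of the sources is the wrong invariant here: once the $\mathtt{1}$ is read, only the target matters.

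The repair has to change the construction, not just the argument. The tails after the $\mathtt{1}$ must be shared by target, i.e.\ post-$\mathtt{1}$ states are indexed by (target, depth). Pre-$\mathtt{1}$ states stay indexed by (source, depth) and stop after the last admissible $\mathtt{1}$-position of their source. For that automaton your ingredients suffice once the missing case is added:
\begin{itemize}
\item depth is determined by the length $n^2-n-d$ of the shortest accepted word;
\item a pre-$\mathtt{1}$ and a post-$\mathtt{1}$ state of equal depth are separated by $\mathtt{0}^{n^2-n-d}$;
\item two pre-$\mathtt{1}$ states with different sources are separated by a $\mathtt{1}$-position admissible at only one of them, using that the letter sets are disjoint;
\item two post-$\mathtt{1}$ states with different targets are separated by $\mathtt{0}^{n^2-n-d}$ followed by the encoding of a letter that separates the targets.
\end{itemize}
Note also that the paper does not use your per-transition reading. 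It adds only the $(n^2-n-1)\cdot n$ states $q_{i,k}$, one chain per source. It then argues via the cycle through all non-sink states, the accepting/non-accepting split between the $q_k$ and the $q_{i,k}$, and distances modulo $n^2-n$, besides separating the original states by encoded letters as for $A_n$. Your reading produces more states than that count and, as shown, equivalent pairs among them.
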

	
	\begin{proof}
		Similar as for the automaton~$A_n$ we obtain that none of the states~$q_0,q_1,\ldots q_{n-1}, q_s$ is equivalent to another one of the aforementioned ones by exchanging each word~$w$ by its binary encoding of length~$|w|\cdot (n^2-n)$, i.e., each letter of~$w$ is exchanged by its binary encoding.
		Since there is a cycle over all states excluding~$q_s$ we directly obtain that states whose distance\footnote{Given a \dfa~$A$ and two states~$q,q'$ of~$A$. The \emph{distance} of~$q$ and~$q'$ is the length of the shortest word mapping~$q$ onto~$q'$. If this word does not exist we set the distance of~$q$ and~$q'$ to be~$\infty$.} is divisible by~$n^2-n$ cannot be equivalent.
		In addition, no state~$q_k$ can be equivalent to a state~$q_{i,k}$ for~$0\le i,j\le n-1$ and~$1\le k\le n^2-n-1$ because the first one is accepting while the second one is not.
		Therefore, it follows with the same argument as used before that states with an arbitrary distance cannot be equivalent and thus~$A_n'$ is minimal.\iflipics\else\qed\fi
	\end{proof}
	
	It remains to show that the binary encoding of the input alphabet does \emph{not} effect the exponential size of the automaton accepting the witness language.
	As~$A_n'$ is constructed from~$A_n$, we see that~$B_n'$ can be obtained from~$B_n$ in the same way.
	Hence, the inequivalence of~$2^n$ states of~$B_n'$ is shown be combining the proof of the previous claim and Claim~\ref{clm:ex-WL-general}.
	In summary the automaton~$A_n'$ has~$n'=(n^2-n-1)\cdot n+n+1=n^3-n^2-n+1$ while $B_n'$ has at least~$2^n\in\Omega(2^{n'})$ states.
	\iflipics\else\qed\fi
\end{proof}
\fi

	\section{The Witness Language Operator}\label{sec:witness-language-operator}
	Note that Section~\ref{sec:witness-language-operator} is an AI-assisted section. The ideas presented here are human-made but the proofs where assisted by ChatGPT 6 Astra.
We ensure that the usage of AI is restricted to Section~\ref{sec:witness-language-operator}.

By Lemma~\ref{lem:mpl-W-L-interval-2mpl-L}, the witness-language
operator is not idempotent: there exists a language~$L$ such that
\[
W^{\mpl}(L)\neq W^{\mpl}\bigl(W^{\mpl}(L)\bigr).
\]
This motivates the study of its iterated application. We show that
for every language the iteration reaches a fixed point by
stage~$\omega$ (which denotes the least finite ordinal), and that this bound is attained by the regular
language~$(\ta+\tb\tb)^*$. We then examine when the fixed point of a
regular initial language is itself regular.

Recall, that we use the convention
$\mpl(M)=\infty$ when no finite pumping constant exists. In this
case, the witness definition means that a word belongs to
$W^{\mpl}(M)$ precisely when it admits no pumping decomposition
in~$M$ with any finite prefix bound. This case will be relevant even
when all finite iterates of the initial language are regular.

For every language~$M$, we have
\begin{equation}
\label{eq:mpl-iteration-contracting}
W^{\mpl}(M)\subseteq M.
\end{equation}
Moreover, pumpability is monotone both in the prefix bound and in the
ambient language: a pumping decomposition in~$N\subseteq M$ is also
one in~$M$.

\begin{definition}[Pumping within a bounded prefix]
	\label{def:pump}
	For a language \(L\subseteq\Sigma^*\) and \(r\in\N_0\), put
	\[
	\Pump_r(L)=
	\left\{
	xyz\in L\ \middle|\
	\begin{array}{l}
		x,z\in\Sigma^*,\quad y\in\Sigma^+,\quad |xy|\leq r,\\
		xy^iz\in L\text{ for every }i\in\N_0
	\end{array}
	\right\}.
	\]
	Thus \(\Pump_0(L)=\emptyset\). Membership in \(\Pump_r(L)\)
	does not require the word to have length at least~\(r\).
\end{definition}

\begin{lemma}[Unpumpable words and shortest words]
\label{lem:mpl-iteration-unpumpable}
The following statements hold.
\begin{enumerate}
\item If no word of~$M$ admits a pumping decomposition in~$M$, then
      $W^{\mpl}(M)=M$. In particular, every finite language is a fixed point.
\item If $M\neq\emptyset$, every word of minimum length in~$M$
      belongs to~$W^{\mpl}(M)$.
\item If $\mpl(M)=\infty$, then $W^{\mpl}(M)$ is a fixed point.
\end{enumerate}
\end{lemma}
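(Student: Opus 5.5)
Each of the three items follows by unwinding the definition of a $\mpl$-witness: $w\in M$ is a witness if and only if there is no $p<\mpl(M)$ such that $w$ is pumpable w.r.t.\ $p$ and $M$. In the notation of Definition~\ref{def:pump}, this means $w\notin\Pump_{r}(M)$ for every $r<\mpl(M)$, where the length condition $|w|\ge p$ is treated as part of the pumping lemma's hypothesis. Each item then reduces to checking that no admissible decomposition exists.

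For item~1, assume no word of $M$ admits a decomposition $w=xyz$ with $y\neq\lambda$ and $xy^iz\in M$ for all $i\in\N_0$. Then for every $p$, and in particular every $p<\mpl(M)$, the word $w$ is not $\mpl$-pumpable. Hence every $w\in M$ is a witness, so $W^{\mpl}(M)=M$ by \eqref{eq:mpl-iteration-contracting}. For a finite language, any nonempty $y$ gives words $xy^iz$ of unbounded length, so not all of them lie in $M$. Thus no word of a finite $M$ admits a pumping decomposition, and $M$ is a fixed point; this is consistent with Lemma~\ref{lem:finite-language-its-own-witness}.

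For item~2, let $w$ have minimum length in $M$. Any decomposition $w=xyz$ with $|y|\ge1$ yields $xy^0z=xz\in M$ with $|xz|<|w|$, which contradicts minimality. So $w$ is not pumpable for any $p$, and $w\in W^{\mpl}(M)$.

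Item~3 is the main step. Suppose $\mpl(M)=\infty$, and put $W=W^{\mpl}(M)$. Since $p<\infty$ ranges over all of $\N_0$, a word $w\in M$ lies in $W$ exactly when it admits no pumping decomposition in $M$ for any finite prefix bound. Because the bound $|xy|\le p$ can always be met by taking $p\ge|w|$, this is the same as saying $w$ admits no decomposition $xyz$ with $y\ne\lambda$ and $xy^iz\in M$ for all $i$. I will show that no word of $W$ admits a pumping decomposition in $W$, and then apply item~1 to $W$ to obtain $W^{\mpl}(W)=W$. Suppose $w\in W$ had a decomposition with $xy^iz\in W$ for all $i$. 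Since $W\subseteq M$, the same decomposition would be a pumping decomposition in $M$, by the monotonicity noted before Definition~\ref{def:pump}. This contradicts $w\in W$.

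The one subtle point is the convention for $\mpl(M)=\infty$: I need that the pumping-lemma condition "for all $w$ with $|w|\ge p$" does not let short words escape. This is resolved by the characterisation above, namely that $w$ is a witness iff it is unpumpable with respect to every $p$, and some $p\le|w|$ with $|xy|\le p$ always suffices whenever a decomposition exists. I will state this characterisation explicitly, following the remark preceding Definition~\ref{def:pump}.
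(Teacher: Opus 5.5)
Your proposal is correct and follows essentially the same route as the paper: item 1 by unwinding the definition (with the unbounded-length argument for finite languages), item 2 by pumping down with exponent $0$, and item 3 by noting that words unpumpable in $M$ remain unpumpable in $W^{\mpl}(M)\subseteq M$ and then invoking item 1. Your remark that the placement of the length condition $|w|\ge p$ does not matter here (one can take $p=|w|$) is a harmless addition that the paper leaves implicit.
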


\begin{proof}
The first assertion follows from the definition. A finite language
admits no pumping decomposition, because pumping a non-empty factor
would produce words of unbounded length. A shortest word cannot be
pumped down: the exponent~$0$ would produce a shorter word of~$M$.
This proves the second assertion. Finally, when $\mpl(M)=\infty$,
the language $N=W^{\mpl}(M)$ consists of words unpumpable in~$M$.
These words remain unpumpable in~$N\subseteq M$, so the first assertion
applies to~$N$.
\end{proof}

\subsection{Transfinite iteration and stabilization}

\begin{definition}[Ordinals]
\label{def:mpl-ordinals}
An \emph{ordinal} is a transitive set well ordered by membership.
For ordinals~$\alpha,\beta$, we write $\alpha<\beta$ if
$\alpha\in\beta$. The successor of~$\alpha$ is
$\alpha+1=\alpha\cup\{\alpha\}$; a non-zero ordinal which is not a
successor is a \emph{limit ordinal}. The least infinite ordinal is
\[
\omega=\{0,1,2,\ldots\}.
\]
An ordinal is \emph{countable} if it admits an injection into~$\N_0$.
The least uncountable ordinal is denoted by~$\omega_1$; equivalently,
$\alpha$ is countable if and only if $\alpha<\omega_1$.
\end{definition}

\begin{remark}
\label{rem:mpl-countable-ordinals}
Countability does not imply $\alpha\leq\omega$. For example,
$\omega+1$, $\omega\cdot2$, and $\omega^2$ are countable and strictly
larger than~$\omega$. In particular,
\[
\omega+1=\omega\cup\{\omega\}=
\{0,1,2,\ldots,\omega\},
\]
whose greatest element~$\omega$ represents the first position after
all finite positions.
\end{remark}

\begin{definition}[Transfinite witness iteration]
\label{def:mpl-transfinite-witness-iteration}
For $L\subseteq\Sigma^*$, define
\[
L_0=L,\qquad L_{\alpha+1}=W^{\mpl}(L_\alpha),\qquad
L_\delta=\bigcap_{\beta<\delta}L_\beta
\]
for successor stages and non-zero limit ordinals~$\delta$, respectively.
The language
\[
L_\omega=\bigcap_{n\in\N_0}L_n
\]
is the \emph{$\omega$-limit} of the iteration.
\end{definition}

By \eqref{eq:mpl-iteration-contracting}, the iteration is descending:
$L_\beta\subseteq L_\alpha$ whenever $\alpha\leq\beta$.
Once two consecutive stages agree, all subsequent stages agree with
them, as follows by transfinite induction.

\begin{remark}
\label{rem:mpl-omega-limit-versus-fixed-point}
For a general contracting operator on languages, the intersection of
the finite iterates need not be a fixed point. In particular, one
cannot simply exchange the operator and a countable intersection.
For the witness operator, we will prove directly that $L_\omega$ is
always a fixed point; preservation of arbitrary countable
intersections is not needed.
\end{remark}

\begin{theorem}[Existence of an iterated witness fixed point]
\label{thm:mpl-witness-fixed-point}
Every language~$L\subseteq\Sigma^*$ has a least ordinal~$\alpha_L$
such that $L_{\alpha_L+1}=L_{\alpha_L}$, and $\alpha_L<\omega_1$.
We call~$\alpha_L$ the \emph{stabilization ordinal} and write
\[
\Fix_{\mpl}(L)=L_{\alpha_L}.
\]
Every shortest word of a non-empty~$L$ belongs to~$\Fix_{\mpl}(L)$.
In particular, this fixed point is non-empty whenever~$L$ is non-empty.
\end{theorem}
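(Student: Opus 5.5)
The plan is to prove the three assertions in turn: existence of a stabilization point, countability of $\alpha_L$, and that shortest words survive.

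\textbf{Existence and countability.} The iteration $(L_\alpha)$ is descending by \eqref{eq:mpl-iteration-contracting}. Since $\Sigma^*$ is countable, a strictly descending chain can only remove countably many words. More precisely, suppose for contradiction that $L_{\alpha+1}\subsetneq L_\alpha$ for every $\alpha<\omega_1$. Then pick for each such $\alpha$ a word $w_\alpha\in L_\alpha\setminus L_{\alpha+1}$. Because the chain is descending, $w_\alpha\notin L_\beta$ for all $\beta>\alpha$, so the map $\alpha\mapsto w_\alpha$ is injective from $\omega_1$ into the countable set $\Sigma^*$, a contradiction. Hence some $\alpha<\omega_1$ satisfies $L_{\alpha+1}=L_\alpha$, and by well-ordering there is a least such ordinal $\alpha_L$. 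I will also note, using the remark preceding the theorem, that once two consecutive stages agree, all later stages agree. This is a transfinite induction: successor steps apply $W^{\mpl}$ to the same language, and at limit steps we intersect a chain that is eventually constant. This justifies calling $L_{\alpha_L}$ the fixed point $\Fix_{\mpl}(L)$.

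\textbf{Shortest words.} Let $L\neq\emptyset$ and let $w$ be a shortest word of $L$. I will show by transfinite induction that $w\in L_\alpha$ and that $w$ is still a shortest word of $L_\alpha$, for every $\alpha$.
\begin{itemize}
\item The base case $\alpha=0$ is trivial.
\item For the successor step, assume $w\in L_\alpha$. Since $L_\alpha\subseteq L$, the word $w$ is of minimum length in $L_\alpha$. So Lemma~\ref{lem:mpl-iteration-unpumpable}(2) gives $w\in W^{\mpl}(L_\alpha)=L_{\alpha+1}$.
\item For the limit step, $w$ lies in every $L_\beta$ with $\beta<\delta$, hence in their intersection $L_\delta$.
\end{itemize}
Minimality of $|w|$ is automatic throughout, because every stage is a subset of $L$. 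Taking $\alpha=\alpha_L$ gives $w\in\Fix_{\mpl}(L)$, and in particular the fixed point is non-empty.

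\textbf{Main obstacle.} The only delicate point is the countability argument: one must use the descending property to get injectivity of $\alpha\mapsto w_\alpha$. The rest is routine transfinite induction. The finer claim that stabilization happens by stage $\omega$ is not part of this statement, so I defer it.
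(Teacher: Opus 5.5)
Your proposal is correct and follows the paper's proof essentially step for step. For countability, the paper likewise observes that strict descent at every $\alpha<\omega_1$ makes the differences $L_\alpha\setminus L_{\alpha+1}$ non-empty and pairwise disjoint, and it selects one word from each. It makes that selection canonical via a fixed injection $e:\Sigma^*\to\N_0$, which you could equally do with the length-lexicographic order instead of an arbitrary pick. For the second part, the paper proves survival of shortest words by the same transfinite induction you use: Lemma~\ref{lem:mpl-iteration-unpumpable}(2) at successor stages and the defining intersection at limit stages.
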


\begin{proof}
If $L_{\alpha+1}\subsetneq L_\alpha$ for every $\alpha<\omega_1$,
then the sets $L_\alpha\setminus L_{\alpha+1}$ would be non-empty
and pairwise disjoint. Fix an injection $e:\Sigma^*\to\N_0$.
Choosing the least $e$-value in each difference would give an
injection $\omega_1\to\N_0$, a contradiction. Thus, equality occurs
at a countable stage, and there is a least such stage. By the successor
rule, its language is a fixed point.

Let $u$ be a shortest word of~$L$. Transfinite induction shows that
$u\in L_\alpha$ at every stage: at a successor stage, it is still
a shortest word because $L_\alpha\subseteq L$, and survives by
Lemma~\ref{lem:mpl-iteration-unpumpable}; at a limit stage, it belongs
to the defining intersection.
\end{proof}

The countability argument uses only contraction. The stronger
bound~$\omega$ follows from the behaviour of the pumping lengths.

\begin{lemma}[Growth or stabilisation]
\label{lem:mpl-growth-or-stabilization}
Let $M\subseteq\Sigma^*$ satisfy $p=\mpl(M)<\infty$, and put
$N=W^{\mpl}(M)$ and $q=\mpl(N)\in\N_0\cup\{\infty\}$.
Then $q\geq p$. If $q=p$, then $W^{\mpl}(N)=N$.
\end{lemma}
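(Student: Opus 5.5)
The plan is to first rewrite membership in a witness language through the sets $\Pump_r$ of Definition~\ref{def:pump}, and then to use two monotonicity facts. Pumpability with prefix bound~$r$ implies pumpability with any larger bound. A pumping decomposition in a sublanguage $N\subseteq M$ is also one in~$M$. Together with $N=W^{\mpl}(M)\subseteq M$ from~\eqref{eq:mpl-iteration-contracting}, this gives
\[
N = M\setminus \Pump_{p-1}(M)\quad\text{(for }p\ge 1\text{)},\qquad \Pump_{r}(N)\subseteq \Pump_{r}(M)\ \text{for all } r.
\]
The first identity holds because a word of~$M$ is pumpable with respect to some $r<p$ exactly when it is pumpable with respect to $r=p-1$. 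The degenerate case $p=0$ is handled separately. There $M=\emptyset$, so $N=\emptyset$, $q=0\ge p$, and $W^{\mpl}(N)=\emptyset=N$.

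For $q\ge p$ with $p\ge1$, I will show that no constant $r\le p-1$ satisfies Lemma~\ref{lem:mpl} for~$N$. By minimality of $p=\mpl(M)$, the constant $p-1$ fails for~$M$. So there is a word $w\in M$ with $|w|\ge p-1$ and $w\notin\Pump_{p-1}(M)$, and hence $w\in N$. Now fix any $r\le p-1$. We have $|w|\ge r$. Moreover $w\notin\Pump_r(N)$, since $\Pump_r(N)\subseteq\Pump_{p-1}(N)\subseteq\Pump_{p-1}(M)$. Thus $w$ is a word of~$N$ of length at least~$r$ that cannot be pumped in~$N$ with prefix bound~$r$, so $r$ fails for~$N$. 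Hence $q\ge p$, and this also covers $q=\infty$. The one point needing care is that the word refuting $p-1$ for~$M$ refutes every smaller $r$ for~$N$; this works because its length is at least $p-1\ge r$.

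For the second claim, assume $q=p$. Then $W^{\mpl}(N)=N\setminus\Pump_{p-1}(N)$. Since $\Pump_{p-1}(N)\subseteq\Pump_{p-1}(M)$ and $N$ is disjoint from $\Pump_{p-1}(M)$, the intersection $N\cap\Pump_{p-1}(N)$ is empty, so $W^{\mpl}(N)=N$. I expect no real obstacle beyond stating the pumping-bound monotonicity precisely. In particular, membership in $\Pump_r$ does not require length at least~$r$, so the length condition from Lemma~\ref{lem:mpl} must be tracked only in the failure argument for $q\ge p$.
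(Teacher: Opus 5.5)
Your proof is correct and follows essentially the same route as the paper. A word refuting the constant $p-1$ for $M$ lies in $N$, and since $\Pump_{p-1}(N)\subseteq\Pump_{p-1}(M)$ it also refutes every bound below $p$ for $N$; when $q=p$, the same inclusion shows that no word of $N$ is pumpable in $N$ below $q$. The only cosmetic differences are that you handle $p=1$ uniformly rather than separately, and you check each $r\le p-1$ directly instead of invoking upward closure of valid pumping lengths.
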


\begin{proof}
For $p=0$, we have $M=N=\emptyset$. For $p=1$, the equality
$\operatorname{Pump}_0(M)=\emptyset$ gives $N=M$.
Assume $p\geq2$. Since $p-1$ is not a pumping length for~$M$,
there is a word
\[
w\in M\setminus\operatorname{Pump}_{p-1}(M),\qquad |w|\geq p-1.
\]
This word belongs to~$N$ and is not pumpable with bound~$p-1$ in
$N\subseteq M$. Therefore $p-1$ is not a pumping length for~$N$,
which implies $q\geq p$ (valid pumping lengths are upward closed).

If $q=p$, every $u\in N$ fails to be pumpable in~$M$ with any
bound below~$p$, hence also fails to be pumpable in~$N$ with any
bound below~$q$. Thus $N\subseteq W^{\mpl}(N)$, and the reverse
inclusion follows from contraction.
\end{proof}

The assumption $p<\infty$ in this lemma matters. When $p=\infty$,
Lemma~\ref{lem:mpl-iteration-unpumpable} instead gives immediate
stabilization of the next iterate; no monotonicity claim for the
pumping lengths is needed in that case.

\begin{theorem}[Fixed-point dichotomy]
\label{thm:mpl-fixed-point-dichotomy}
For every language~$L\subseteq\Sigma^*$, we have $\alpha_L\leq\omega$.
Exactly one of the following alternatives occurs.
\begin{enumerate}
\item Some finite iterate is a fixed point. Then $\alpha_L<\omega$.
\item No finite iterate is a fixed point. Then $\alpha_L=\omega$,
      every $p_n=\mpl(L_n)$ is finite, and
      \[
      p_0<p_1<p_2<\cdots.
      \]
      Moreover, no word of~$L_\omega$ admits a pumping decomposition
      in~$L_\omega$.
\end{enumerate}
\end{theorem}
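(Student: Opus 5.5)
The plan is a case split on whether some finite iterate $L_n$ satisfies $L_{n+1}=L_n$. The two alternatives are mutually exclusive by definition, so the work is to show that each one gives the stated bound on $\alpha_L$.

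\textbf{First alternative.} If some finite iterate is a fixed point, then $\alpha_L$ is the least such $n$, so $\alpha_L<\omega$. Nothing more is needed here, since the successor rule keeps the iteration constant from that point on.

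\textbf{Second alternative: the constants.} Assume no finite iterate is a fixed point, and set $p_n=\mpl(L_n)$.
\begin{enumerate}
\item Every $p_n$ is finite. If $p_n=\infty$, then part~3 of Lemma~\ref{lem:mpl-iteration-unpumpable} makes $L_{n+1}=W^{\mpl}(L_n)$ a fixed point, which is excluded.
\item The sequence is strictly increasing. With $p_n<\infty$, Lemma~\ref{lem:mpl-growth-or-stabilization} applied to $M=L_n$ gives $p_{n+1}\ge p_n$. If equality held, the same lemma would give $W^{\mpl}(L_{n+1})=L_{n+1}$, again a forbidden finite fixed point. Hence $p_0<p_1<\cdots$.
\item Consequently $p_n\to\infty$, since a strictly increasing sequence of natural numbers is unbounded.
\end{enumerate}

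\textbf{Second alternative: $L_\omega$ admits no pumping decomposition.} This is the step I expect to be the main point. Suppose, for contradiction, that $w=xyz\in L_\omega$ with $y\neq\lambda$ and $xy^iz\in L_\omega$ for all $i\ge 0$, and put $r=|xy|$.
\begin{enumerate}
\item By item~3 above, choose $n$ with $p_n>r$, so that $r\le p_n-1$.
\item Since $L_\omega\subseteq L_n$, every word $xy^iz$ lies in $L_n$. So $w$ is $\mpl$-pumpable in $L_n$ with a bound $r<p_n=\mpl(L_n)$.
\item On the other hand, $w\in L_\omega\subseteq L_{n+1}=W^{\mpl}(L_n)$, so $w$ is a witness and cannot be pumpable in $L_n$ with any bound below $p_n$. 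This is a contradiction.
\end{enumerate}
The key point is that a single decomposition has a fixed finite prefix length, while the constants $p_n$ eventually exceed it. Monotonicity of pumpability in the ambient language is what lets us pass from $L_\omega$ down to each $L_n$.

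\textbf{Conclusion.} Since no word of $L_\omega$ admits a pumping decomposition in $L_\omega$, part~1 of Lemma~\ref{lem:mpl-iteration-unpumpable} gives $W^{\mpl}(L_\omega)=L_\omega$; this covers $\mpl(L_\omega)=\infty$ and the case $L_\omega=\emptyset$. Therefore $L_{\omega+1}=L_\omega$. No finite stage stabilises, so $\alpha_L=\omega$. Combining the two alternatives gives $\alpha_L\le\omega$ for every language $L$.
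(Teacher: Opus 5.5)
Your proposal is correct and follows the paper's proof essentially step for step. You get finiteness of the $p_n$ from part~3 of Lemma~\ref{lem:mpl-iteration-unpumpable} and strict growth from Lemma~\ref{lem:mpl-growth-or-stabilization}; you then choose $n$ with $p_n$ above the prefix bound, which rules out pumping in $L_\omega$ before applying part~1 of Lemma~\ref{lem:mpl-iteration-unpumpable}. The only cosmetic difference is that you pick $p_n>|xy|$ where the paper picks $p_n>|w|$, and this changes nothing.
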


\begin{proof}
The first alternative is immediate. Suppose that no finite iterate
is a fixed point. If $\mpl(L_n)=\infty$ for some~$n$, then
Lemma~\ref{lem:mpl-iteration-unpumpable} would make $L_{n+1}$ a
fixed point. Hence all~$p_n$ are finite.
Lemma~\ref{lem:mpl-growth-or-stabilization} gives $p_{n+1}\geq p_n$,
with equality forcing $L_{n+2}=L_{n+1}$. Thus the pumping lengths
strictly increase and are unbounded.

Put $M=L_\omega$ and let $w\in M$. Choose~$n$ with $p_n>|w|$.
Since $w\in L_{n+1}=W^{\mpl}(L_n)$, the word~$w$ has no pumping
decomposition in~$L_n$ with a prefix bound below~$p_n$.
Any pumping decomposition $w=xyz$ in~$M$, however, would also be
one in~$L_n$, with bound $|xy|\leq|w|<p_n$. This is impossible.
Therefore no word of~$M$ is pumpable in~$M$, and
Lemma~\ref{lem:mpl-iteration-unpumpable} gives $W^{\mpl}(M)=M$.
Thus $L_{\omega+1}=L_\omega$, and the absence of finite
stabilization yields $\alpha_L=\omega$.
\end{proof}

\begin{corollary}[Identification with the $\omega$-limit]
\label{cor:mpl-fixed-point-is-omega-limit}
For every language~$L\subseteq\Sigma^*$,
\[
\Fix_{\mpl}(L)=L_\omega=\bigcap_{n\in\N_0}L_n.
\]
\end{corollary}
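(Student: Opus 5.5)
The plan is to read the corollary directly off Theorem~\ref{thm:mpl-fixed-point-dichotomy} and Theorem~\ref{thm:mpl-witness-fixed-point}, treating the two alternatives of the dichotomy separately. Two facts are used throughout. First, by \eqref{eq:mpl-iteration-contracting} the iteration is descending, so $L_\beta\subseteq L_\alpha$ for $\alpha\le\beta$. Second, once two consecutive stages agree, every later stage (finite or transfinite) agrees with them.

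\emph{Case 1: some finite iterate is a fixed point.} Then $\alpha_L=n<\omega$. By the definition of $\alpha_L$ we have $L_{n+1}=L_n$, and an induction on $k$ gives $L_{n+k}=L_n$ for all $k\in\N_0$. The sequence $(L_m)_m$ is descending, so $L_m\supseteq L_n$ for $m\le n$. Together these give
\[
\bigcap_{m\in\N_0}L_m = L_n = L_{\alpha_L}=\Fix_{\mpl}(L).
\]
By Definition~\ref{def:mpl-transfinite-witness-iteration} the left-hand side is exactly $L_\omega$, which settles this case.

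\emph{Case 2: no finite iterate is a fixed point.} Theorem~\ref{thm:mpl-fixed-point-dichotomy} then gives $\alpha_L=\omega$. Hence $\Fix_{\mpl}(L)=L_{\alpha_L}=L_\omega$, and $L_\omega=\bigcap_{n}L_n$ holds by the limit clause of the definition, since $\omega$ is a limit ordinal.

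There is no genuine obstacle here. The only point needing care is in Case~1: one must note that $L_\omega$ is still defined as the intersection of all finite stages, and that this intersection collapses to $L_n$ precisely because the stages are constant from $n$ onward and descending before $n$.
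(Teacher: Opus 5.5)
Your proposal is correct and follows the paper's own proof. The paper also splits according to whether $\alpha_L=\omega$, in which case $L_\omega=\bigcap_{n}L_n$ holds by definition, or $\alpha_L<\omega$, in which case the descending sequence of finite iterates is eventually constant with value $L_{\alpha_L}$ and so has that intersection.
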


\begin{proof}
If $\alpha_L=\omega$, this is the definition. Otherwise the
descending sequence of finite iterates is eventually constant with
value~$L_{\alpha_L}$, so its intersection has that value as well.
\end{proof}

\subsection{Regular initial languages}

Regularity is not needed for stabilization by~$\omega$. By
Theorem~\ref{thm:Reg-L-implies-wit-mpl-L_is-regular}, however,
it ensures that every finite iterate is regular.

\begin{corollary}[Regularity of the iterated fixed point]
\label{cor:mpl-fixed-point-regularity}
Let $L\subseteq\Sigma^*$ be regular.
\begin{enumerate}
\item If $\alpha_L<\omega$, then $\Fix_{\mpl}(L)$ is regular.
\item If $\alpha_L=\omega$, then $\Fix_{\mpl}(L)$ is regular
      if and only if it is finite.
\end{enumerate}
Equivalently, $\Fix_{\mpl}(L)$ is regular if and only if
$\alpha_L<\omega$ or $\Fix_{\mpl}(L)$ is finite.
\end{corollary}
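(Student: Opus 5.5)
Part 1 follows by induction on the finite stages. $L_0=L$ is regular, and $L_{n+1}=W^{\mpl}(L_n)$ is regular whenever $L_n$ is, by Theorem~\ref{thm:Reg-L-implies-wit-mpl-L_is-regular}. So every finite iterate is regular. If $\alpha_L<\omega$, then $\Fix_{\mpl}(L)=L_{\alpha_L}$ is one of these finite iterates and is therefore regular.

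For part 2, assume $\alpha_L=\omega$ and set $M=\Fix_{\mpl}(L)=L_\omega$. By Theorem~\ref{thm:mpl-fixed-point-dichotomy}, no word of $M$ admits a pumping decomposition in $M$, meaning there are no $x,y,z$ with $y\neq\lambda$, $xyz\in M$ and $xy^iz\in M$ for all $i\ge 0$.

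\emph{Finite implies regular.} A finite language is regular, so nothing needs to be shown.

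\emph{Regular implies finite.} This is the only real step. Suppose $M$ is regular. Then $\mpl(M)=p<\infty$, since $\mpl(M)\le\dsc(M)$. If $M$ were infinite, it would contain a word $w$ with $|w|\ge p$. By Lemma~\ref{lem:mpl}, $w$ would then have a decomposition $w=xyz$ with $y\neq\lambda$, $|xy|\le p$ and $xy^iz\in M$ for all $i\ge0$. That is a pumping decomposition in $M$, contradicting the dichotomy theorem. Hence $M$ is finite.

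The only subtlety is to make sure the absence of pumping decompositions is unconditional, that is, independent of any prefix bound. The dichotomy theorem states it in exactly this form, so the argument goes through.

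The stated equivalence then follows by combining the two parts. If $\alpha_L<\omega$, the fixed point is regular by part 1. Otherwise $\alpha_L=\omega$, and part 2 shows that regularity is equivalent to finiteness. Conversely, a finite fixed point is always regular.
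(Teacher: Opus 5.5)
Your proposal is correct and matches the paper's proof step for step. You get regularity of every finite iterate from Theorem~\ref{thm:Reg-L-implies-wit-mpl-L_is-regular}, and for $\alpha_L=\omega$ you combine the absence of pumping decompositions in $L_\omega$ (Theorem~\ref{thm:mpl-fixed-point-dichotomy}) with the pumping lemma for regular languages to show that a regular fixed point must be finite.
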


\begin{proof}
A finite iterate of~$L$ is regular by
Theorem~\ref{thm:Reg-L-implies-wit-mpl-L_is-regular}.
If $\alpha_L=\omega$, then no word of~$\Fix_{\mpl}(L)=L_\omega$
is pumpable in that language. An infinite regular language over a
finite alphabet has words of unbounded length, so the pumping lemma
would yield such a decomposition. Thus, in this case, a regular fixed
point must be finite. Conversely, every finite language is regular.
\end{proof}

\begin{remark}
\label{rem:mpl-limit-regularity}
The qualification $\alpha_L=\omega$ in the second assertion is
essential. For example, $L=\{\ta\}^*$ has $\mpl(L)=1$ and is already
a fixed point; its $\omega$-limit is infinite and regular.

Regularity of every finite iterate alone would not imply regularity
of their intersection. For example, the descending regular languages
\[
R_n=\{\ta^i \tb^j\mid i,j\geq n\}
    \cup\{\ta^k \tb^k\mid 0\leq k<n\}
\]
satisfy
\[
\bigcap_{n\in\N_0}R_n=\{\ta^k \tb^k\mid k\in\N_0\},
\]
which is not regular. This example is not asserted to be a witness
iteration. Whether a regular language can have $\alpha_L=\omega$
and an infinite fixed point is a separate question, not resolved here.
\end{remark}

\subsection{Sharpness of the stabilization bound}

We next give a regular language for which every finite step is strict.
We use submonoids of the free monoid~$\Sigma^*$, viewed as languages;
the operator $W^{\mpl}$ acts on their words with the usual
concatenation. A submonoid contains~$\lambda$ and is closed under
concatenation; it is \emph{non-trivial} if it also contains a non-empty word.

\begin{lemma}[Propagation of strictness]
\label{lem:propagation-of-strictness}
Let $M\subseteq\Sigma^*$ be a non-trivial regular submonoid satisfying
\begin{equation}
\label{eq:mpl-monoid-right-cancellation}
uv\in M\ \text{and}\ v\in M\quad\Longrightarrow\quad u\in M
\qquad(u,v\in\Sigma^*).
\end{equation}
Put $N=W^{\mpl}(M)$. Then $N$ is again a regular submonoid
satisfying~\eqref{eq:mpl-monoid-right-cancellation}. If $N\subsetneq M$,
then $N$ is non-trivial and $W^{\mpl}(N)\subsetneq N$.
\end{lemma}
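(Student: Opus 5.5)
The plan is to reduce pumpability in a submonoid with property~\eqref{eq:mpl-monoid-right-cancellation} to a purely combinatorial condition on prefixes. I will prove the following \emph{key claim}: for such an $M$, a word $w\in M$ is pumpable in~$M$ with prefix bound~$r$ if and only if $w=uv$ for some $u\in M\setminus\{\lambda\}$ with $|u|\le r$ and $v\in M$.

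The direction ``if'' is immediate: take $x=\lambda$, $y=u$, $z=v$; then $y^tz\in M$ for all~$t$ because $M$ is a submonoid. For ``only if'', let $w=xyz$ be a pumping decomposition with $|xy|\le r$. I will use regularity. Choose $k\ge1$ such that $y^k$ is idempotent in the syntactic monoid $M(M)$. Then $xy^{k}\cdot y^{k}z\in M$, and I will try to show that $y^kz\in M$ and hence, by~\eqref{eq:mpl-monoid-right-cancellation}, that the prefix $xy^k$ lies in~$M$. Pumping down then gives $xz\in M$, and I aim to deduce from this that $xy$ or $x$ provides a short non-empty $M$-prefix with cofactor in~$M$. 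I expect this step to be the main obstacle. In particular, the length bound $|u|\le r$ has to survive passing from $xy^k$ to a genuinely short prefix. My first attempt is to exploit $xz\in M$ together with $xyz\in M$ and right cancellation applied to $z$-suffixes. If that fails, I would strengthen the hypothesis usage by taking the \emph{shortest} non-empty prefix of~$w$ that lies in~$M$.

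With the key claim, and writing $p=\mpl(M)$, we obtain
\[
N=\{\,w\in M \mid \text{there are no } u\in M\setminus\{\lambda\},\ v\in M \text{ with } w=uv,\ |u|<p\,\}.
\]
This language is regular as an instance of Theorem~\ref{thm:Reg-L-implies-wit-mpl-L_is-regular}. Next I will check that $N$ is a submonoid satisfying~\eqref{eq:mpl-monoid-right-cancellation}.
\begin{itemize}
\item We have $\lambda\in N$.
\item For closure under concatenation, take $w_1,w_2\in N$ and suppose that $w_1w_2=uv$ is a forbidden split. If $|u|\le|w_1|$, I will use~\eqref{eq:mpl-monoid-right-cancellation} on the suffix $w_2\in M$ to push the split into~$w_1$. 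If $|u|>|w_1|$, I will argue symmetrically, noting that a prefix of length $<p$ cannot extend past $w_1$ unless $|w_1|<p$. In that last case $w_1$ itself is a short $M$-prefix of $w_1w_2$, and I will rule this out by showing that it forces $w_2\notin N$.
\item Property~\eqref{eq:mpl-monoid-right-cancellation} for~$N$ follows from the one for~$M$ together with the description above: a forbidden split of~$u$ extends to one of~$uv$.
\end{itemize}

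For strictness, assume $N\subsetneq M$. Then $p\ge2$, and $N$ is non-trivial because the shortest non-empty words of~$M$ admit no short split, since such a split would produce a shorter non-empty element of~$M$. By Lemma~\ref{lem:mpl-growth-or-stabilization} we have $q=\mpl(N)\ge p$, and it suffices to exclude $q=p$.

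To do so, pick $w\in M\setminus N$ and write $w=uv$ with $u$ a shortest non-empty element of~$M$ having $|u|<p$ and $v\in M$. Then $u\in N$. Iterating this splitting, I will build words of~$N$ that have a non-empty $N$-prefix of length at least~$p$ with cofactor in~$N$, but no shorter one. By the key claim applied to~$N$, such a word is not pumpable with bound~$p$ in~$N$. This gives $q>p$, and hence $W^{\mpl}(N)\ne N$ by the witness of $q-1\ge p$.
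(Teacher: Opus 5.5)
The step you flag as the main obstacle cannot be carried out, because your key claim is false. The description of $N$, the closure argument and the strictness argument all rest on it. Take $M=(\ta\ta\ta+\tb\tc^*\tb)^*$. It is $X^*$ for the suffix code $X=\{\ta\ta\ta\}\cup\tb\tc^*\tb$, hence a non-trivial regular submonoid satisfying~\eqref{eq:mpl-monoid-right-cancellation}. Moreover $\mpl(M)=3$: the word $\ta\ta\ta$ is not pumpable with bound~$2$, while every non-empty word of~$M$ begins with $\ta\ta\ta$, with $\tb\tb$, or with $\tb\tc$, and pumps within bound~$3$. The word $\tb\tc\tb$ is pumpable with bound~$2$ via $x=\tb$, $y=\tc$, $z=\tb$, since $\tb\tc^t\tb\in M$ for all $t\in\N_0$. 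Yet its prefixes $\tb$ and $\tb\tc$ are not in~$M$. So pumping inside a long code word is invisible to short $M$-prefixes, and your displayed set contains $\tb\tc\tb$, although in fact $N=\{\lambda\}\cup\ta\ta\ta M$. Your idempotent step also fails on its own: in $(\ta+\tb\tb)^*$ with $x=\tb$, $y=\tb\tb$, $z=\tb$ one gets $y^kz=\tb^{2k+1}\notin M$.

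What you are missing is a simple observation: every non-empty $w\in N$ has $|w|\ge\mpl(M)$, since otherwise $w=\lambda\cdot w\cdot\lambda$ pumps in the submonoid~$M$ with bound $|w|<\mpl(M)$. Consequently, for $w\in N\setminus\{\lambda\}$ and $v\in M$, any pumping decomposition of $wv$ with bound below $\mpl(M)$ lies inside~$w$, and~\eqref{eq:mpl-monoid-right-cancellation} removes~$v$. This yields $(N\setminus\{\lambda\})M\subseteq N$. Closure under concatenation and the inherited cancellation property then follow by working with pumping decompositions directly, with no characterisation of pumpability needed.

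The strictness part has additional errors. A non-empty $u\in M$ with $|u|<p$ is never in~$N$, because it pumps as a whole (in your own description, $u=u\cdot\lambda$ is a forbidden split). Hence neither ``the shortest non-empty words of~$M$ lie in~$N$'' nor ``then $u\in N$'' holds: in $(\ta+\tb\tb)^*$ the shortest non-empty word $\ta$ lies in $M\setminus N$. Non-triviality should instead come from the failure of $p-1$ as a pumping constant, which supplies a word of~$N$ of length at least $p-1>0$.

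Moreover, excluding $q=p$ is not enough. Lemma~\ref{lem:mpl-growth-or-stabilization} only gives the implication from $q=p$ to $W^{\mpl}(N)=N$, and a word witnessing the failure of $q-1$ lies in $W^{\mpl}(N)$, not in $N\setminus W^{\mpl}(N)$. You need an element of~$N$ that is pumpable in~$N$ with a bound below $\mpl(N)$. The paper takes a shortest non-empty $s\in N$ and some $r\in M\setminus N$; then $sr\in N$. Every decomposition of $sr$ with $|xy|\le|s|$ pumps down to some $hr$ with $|h|<|s|$, and $hr\notin N$: for $h\ne\lambda$, cancellation gives $h\in M\setminus N$, and a short pumping of $h$ in~$M$ extends to $hr$. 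Thus $\mpl(N)>|s|$, while $s$ pumps in~$N$ as a whole with bound $|s|$.
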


\begin{proof}
Write $q=\mpl(M)$. Every nonempty $w\in N$ has $|w|\geq q$:
otherwise $w=\lambda\cdot w\cdot\lambda$ would be a pumping
decomposition in the submonoid~$M$ with bound $|w|<q$.

We claim that
\begin{equation}
\label{eq:mpl-witness-right-extension}
(N\setminus\{\lambda\})M\subseteq N.
\end{equation}
Let $w\in N\setminus\{\lambda\}$ and $v\in M$.
If $wv\notin N$, a pumping decomposition with bound below~$q$
would lie entirely inside~$w$, because $|w|\geq q$.
Write $w=xyr$, with $|xy|<q$ and
$xy^trv\in M$ for all $t\in\N_0$.
By~\eqref{eq:mpl-monoid-right-cancellation}, $xy^tr\in M$ for all~$t$,
contradicting $w\in N$. This proves
\eqref{eq:mpl-witness-right-extension}.

Since $\lambda\in N\subseteq M$, this inclusion implies that~$N$
is a submonoid. To verify the inherited cancellation property, let
$uv,v\in N$. First $u\in M$ by the property for~$M$.
If $u\notin N$, pumping~$u$ in~$M$ with a bound below~$q$ and
appending~$v\in M$ would show $uv\notin N$, a contradiction.
Regularity follows from
Theorem~\ref{thm:Reg-L-implies-wit-mpl-L_is-regular}.

Now assume $N\subsetneq M$ and choose $r\in M\setminus N$.
Then $q\geq2$, because $q=1$ would give $N=M$.
The failure of the pumping condition at~$q-1$ supplies a word of~$N$
of length at least~$q-1>0$. Choose a shortest non-empty $s\in N$.
By~\eqref{eq:mpl-witness-right-extension}, $sr\in N$.

Any decomposition $sr=xyz$ with $y\neq\lambda$ and $|xy|\leq|s|$
pumps down to a word~$hr$ with $|h|<|s|$.
We show that $hr\notin N$. If $h=\lambda$, this follows from
$r\notin N$. Otherwise, suppose $hr\in N$. Cancellation in~$M$
gives $h\in M$, while the choice of~$s$ gives $h\notin N$.
Thus~$h$ is pumpable in~$M$ with a bound below~$q$; appending~$r$
shows that $hr\notin N$, again a contradiction.

Hence $sr$ is not pumpable in~$N$ with bound~$|s|$, and
\[
\mpl(N)>|s|.
\]
On the other hand, $s^t\in N$ for all $t\in\N_0$, so $s$ is
pumpable in~$N$ with bound~$|s|<\mpl(N)$.
Therefore $s\in N\setminus W^{\mpl}(N)$, proving strictness.
\end{proof}

\begin{theorem}[Strict descent for $(\ta+\tb\tb)^*$]
\label{thm:witness-iteration-strictly-descending}
Let $\Sigma=\{\ta,\tb\}$ and $L_0=(\ta+\tb\tb)^*$, where $+$ denotes union
in the regular expression. Then $L_{n+1}\subsetneq L_n$ for every
$n\in\N_0$.
\end{theorem}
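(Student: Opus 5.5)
The plan is to apply Lemma~\ref{lem:propagation-of-strictness} inductively. That lemma requires the current iterate to be a non-trivial regular submonoid with the right-cancellation property~\eqref{eq:mpl-monoid-right-cancellation} and to be strictly reduced by $W^{\mpl}$. Its conclusion is that the next iterate has the same properties and is also strictly reduced. So it suffices to verify the hypotheses for $L_0=(\ta+\tb\tb)^*$ and to show the base strictness $L_1\subsetneq L_0$.

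\textbf{Step 1 (hypotheses for $L_0$).} $L_0$ is a regular submonoid, and it is non-trivial since $\ta\in L_0$. For right cancellation, note that $L_0$ consists exactly of the words in which every maximal block of $\tb$'s has even length. Suppose $uv\in L_0$ and $v\in L_0$. If $u$ ends in a $\tb$-block of odd length, that block is glued to the leading $\tb$-block of $v$. The leading block of $v$ has even length, being either empty or a maximal block of $v$ (it is followed by $\ta$ or by the end of $v$). The merged block in $uv$ would then have odd length, which is impossible. Every other $\tb$-block of $u$ is also a maximal block of $uv$, hence even. Therefore $u\in L_0$.

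\textbf{Step 2 (base strictness).} I compute $\mpl(L_0)$. Every non-empty word of $L_0$ starts with $\ta$ or $\tb\tb$, and can be pumped by that prefix (with $x=\lambda$), because $L_0$ is a monoid and right-cancellative. Hence $\mpl(L_0)\le 2$. The bound $1$ fails for a word beginning with $\tb\tb$: the only admissible decomposition uses $y=\tb$, and $\tb^0\cdot\tb\cdots$ leaves an odd leading block. So $\mpl(L_0)=2$, and the word $\ta\in L_0$ is pumpable with bound $1<2$. Hence $\ta\notin L_1$, and $L_1\subsetneq L_0$.

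\textbf{Step 3 (induction).} Assume $L_n$ is a non-trivial regular submonoid satisfying~\eqref{eq:mpl-monoid-right-cancellation} with $L_{n+1}\subsetneq L_n$. Lemma~\ref{lem:propagation-of-strictness} applied to $M=L_n$ shows that $L_{n+1}$ is a regular submonoid satisfying~\eqref{eq:mpl-monoid-right-cancellation}, that it is non-trivial, and that $L_{n+2}=W^{\mpl}(L_{n+1})\subsetneq L_{n+1}$. This is the same invariant one index higher, so the induction closes and gives $L_{n+1}\subsetneq L_n$ for all $n$.

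The only delicate points are the right-cancellation check in Step~1 and the base computation in Step~2; the lemma carries the rest. I expect the block-parity argument for cancellation to be the main thing to state carefully, in particular when $v$ begins with $\tb$.
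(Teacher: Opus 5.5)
Your proposal is correct and follows the paper's proof step for step: the block-parity argument for~\eqref{eq:mpl-monoid-right-cancellation}, the computation $\mpl(L_0)=2$ with $\ta$ pumpable at bound~$1$, and repeated application of Lemma~\ref{lem:propagation-of-strictness}. One small slip in Step~2: pumping a leading $\ta$ or $\tb\tb$ with $t=0$ requires the remainder $z$ of $w=yz$ to lie in $L_0$, which is cancellation on the left and does not follow from~\eqref{eq:mpl-monoid-right-cancellation}, but it follows at once from the block-parity description, since removing a leading $\ta$ or two leading $\tb$'s keeps every block even.
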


\begin{proof}
The language~$L_0$ is a non-trivial regular submonoid. Membership
means that every maximal block of~$\tb$'s has even length.
If $uv,v\in L_0$, all blocks entirely inside~$u$ have even length.
Its final $\tb$-block also has even length: if it merges with the
initial block of~$v$, subtract the even length of that initial block
from the even length of the merged block. Thus $u\in L_0$, proving
\eqref{eq:mpl-monoid-right-cancellation}.

We have $\mpl(L_0)=2$: the word~$bb$ is not pumpable with bound~$1$,
whereas every word of~$L_0$ of length at least~$2$ begins with either~$\ta$ or~$\tb\tb$, which can be pumped.
Since~$\ta$ is pumpable with bound~$1$, we have
$a\in L_0\setminus L_1$. Repeated application of
Lemma~\ref{lem:propagation-of-strictness} proves the assertion.
\end{proof}

\begin{corollary}[Sharpness of the bound $\omega$]
\label{cor:mpl-omega-bound-sharp}
For $L=(\ta+\tb\tb)^*$,
\[
\alpha_L=\omega,\qquad \Fix_{\mpl}(L)=L_\omega=\{\lambda\}.
\]
Consequently, the bound~$\omega$ is sharp even for regular languages.
\end{corollary}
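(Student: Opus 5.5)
The two claims are $\alpha_L=\omega$ and $L_\omega=\{\lambda\}$; the plan rests on Theorem~\ref{thm:witness-iteration-strictly-descending} and Theorem~\ref{thm:mpl-fixed-point-dichotomy}.

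\textbf{The ordinal.} By Theorem~\ref{thm:witness-iteration-strictly-descending}, $L_{n+1}\subsetneq L_n$ for every $n\in\N_0$, so no finite iterate is a fixed point. The first alternative of Theorem~\ref{thm:mpl-fixed-point-dichotomy} is therefore excluded, and the second gives $\alpha_L=\omega$. Corollary~\ref{cor:mpl-fixed-point-is-omega-limit} then identifies $\Fix_{\mpl}(L)$ with $L_\omega$.

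\textbf{Upper bound on the limit.} Every $L_n$ is a regular submonoid satisfying \eqref{eq:mpl-monoid-right-cancellation}. This follows by induction from Lemma~\ref{lem:propagation-of-strictness}, which applies because each $L_n$ is non-trivial (strictness forces this). Hence $\lambda\in L_n$ for all $n$, so $\lambda\in L_\omega$; this also follows from Theorem~\ref{thm:mpl-witness-fixed-point}, since $\lambda$ is the shortest word of $L$.

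For the reverse inclusion, let $w\in L_\omega$ be non-empty and aim for a contradiction. Since $L_\omega=\bigcap_n L_n$ and each $L_n$ is a submonoid, $L_\omega$ is closed under concatenation, so $w^t\in L_\omega$ for every $t\in\N_0$. Thus $w=\lambda\cdot w\cdot\lambda$ is a pumping decomposition of $w$ inside $L_\omega$. This contradicts the final assertion of the second alternative of Theorem~\ref{thm:mpl-fixed-point-dichotomy}, that no word of $L_\omega$ admits a pumping decomposition in $L_\omega$.

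\textbf{Alternative route.} A direct argument also works. In the second alternative the constants $p_n=\mpl(L_n)$ strictly increase and are finite. The proof of Lemma~\ref{lem:propagation-of-strictness} shows that every non-empty word of $L_{n+1}$ has length at least $p_n$, so $p_n\to\infty$ forces the non-empty words to vanish from the intersection.

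\textbf{Expected obstacle.} The only delicate step is checking that the submonoid property really persists along every iterate. This means confirming that the hypotheses of Lemma~\ref{lem:propagation-of-strictness} (non-triviality and \eqref{eq:mpl-monoid-right-cancellation}) are re-established at each stage, and that intersections of submonoids are submonoids. Once that is in place, the sharpness statement is immediate: $L$ is regular and $\alpha_L=\omega$ is attained, while Theorem~\ref{thm:mpl-fixed-point-dichotomy} shows that $\omega$ is an upper bound for every language.
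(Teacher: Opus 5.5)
Your proof is correct and follows the paper's argument: strict descent together with Theorem~\ref{thm:mpl-fixed-point-dichotomy} gives $\alpha_L=\omega$. Since $L_\omega$ is an intersection of submonoids, any non-empty $w\in L_\omega$ would be pumpable as a whole, contradicting the absence of pumping decompositions at a genuine $\omega$ stage; your explicit induction via Lemma~\ref{lem:propagation-of-strictness} (showing each $L_n$ is a submonoid) spells out what the paper only asserts, and your length-based alternative is also valid.
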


\begin{proof}
Strict descent and Theorem~\ref{thm:mpl-fixed-point-dichotomy} give
$\alpha_L=\omega$. Every~$L_n$ is a submonoid, so their
intersection~$L_\omega$ is a submonoid and contains~$\lambda$.
If it contained a non-empty word~$w$, then $w^t\in L_\omega$ for
all $t\in\N_0$. Pumping~$w$ as a whole would contradict the
absence of pumping decompositions at a genuine~$\omega$ stage.
Thus $L_\omega=\{\lambda\}$.
\end{proof}

This example also proves that $W^{\mpl}$ is not idempotent, since
$L_2\subsetneq L_1$. Its fixed point is finite; the question of an
infinite fixed point reached only at~$\omega$ from a regular initial
language remains unanswered by these results.

\subsection{A conjectured Fibonacci structure}

Define
\[
u_0=\ta,\qquad u_1=\tb\tb,\qquad u_{i+2}=u_{i+1}u_i
\quad(i\in\N_0).
\]
The first words are
\[
u_0=\ta,\quad u_1=\tb\tb,\quad u_2=\tb\tb\ta,\quad
u_3=\tb\tb\ta\tb\tb,\quad u_4=\tb\tb\ta\tb\tb\tb\tb\ta.
\]
These are finite Fibonacci words encoded by $\ta$ and~$\tb\tb$.
If $F_0=0$, $F_1=1$, and $F_{i+2}=F_{i+1}+F_i$, then
\[
|u_i|=F_{i+2}
\]
by the same recurrence and the initial lengths~$1,2$.

Exact finite-automaton computations give the following initial values.
For each listed value of~$i$, the unique shortest non-empty word is~$u_i$.
\begin{center}
\begin{tabular}{c|rrrrrrrrr}
$i$ & 0 & 1 & 2 & 3 & 4 & 5 & 6 & 7 & 8\\\hline
$\min\{|w|\mid w\in L_i\setminus\{\lambda\}\}$
    & 1 & 2 & 3 & 5 & 8 & 13 & 21 & 34 & 55\\
$\mpl(L_i)$ & 2 & 3 & 5 & 8 & 13 & 21 & 34 & 55 & 89
\end{tabular}
\end{center}
Here pumpability is checked using the full finite orbit of the state
reached after~$x$ under repeated application of~$y$. The continuation~$z$
must lead from every state in this orbit to an accepting state.
Thus all pumping exponents are covered, rather than a prescribed finite
sample. Minimal pumping lengths are found by checking whether the
regular language $L_i\setminus\operatorname{Pump}_r(L_i)$ contains
a word of length at least~$r$. Uniqueness at the shortest non-empty
length is checked by counting accepted words at that length.
These finite computations motivate the following conjecture but do not
prove its assertion for arbitrary~$i$.

\begin{conjecture}[Fibonacci structure of the witness iteration]
\label{conj:mpl-witness-fibonacci}
For the iterates of $L_0=(\ta+\tb\tb)^*$ and every $i\in\N_0$,
the word~$u_i$ is the unique shortest non-empty word of~$L_i$, and
\[
\mpl(L_i)=|u_{i+1}|=F_{i+3}.
\]
Equivalently, the assertion about shortest words is
\[
L_i\cap\Sigma^{<|u_i|}=\{\lambda\},\qquad
L_i\cap\Sigma^{|u_i|}=\{u_i\},
\]
where $\Sigma^{<m}=\{w\in\Sigma^*\mid |w|<m\}$.
\end{conjecture}

The two conjectured assertions together would imply that $u_{i+1}$
is the unique shortest non-empty witness of~$L_i$, of length
$\mpl(L_i)$. They would also imply $u_i\in L_i\setminus L_{i+1}$:
the submonoid~$L_i$ allows pumping~$u_i$ as a whole with bound
$|u_i|<|u_{i+1}|=\mpl(L_i)$.
Thus the conjecture would identify the successive shortest words
removed and determine the exact growth of the pumping lengths.
Neither strict descent nor the computation of the fixed point above
depends on this conjecture.

	\section{Unary Witness Languages}\label{sec:unary-witness}
	As seen in the previous section, the situation for non-unary languages gets more involved. Thus, we continue with the investigation of unary witness languages, i.e., the words that are not $K$-pumpable w.r.t. $K \in \{\mpc, \mpl\}$ for some given unary regular language $L$, in order to learn more about their structure and combinatorial properties.
For example, we observe that the exponential blow-up on the state complexity of witness languages for (non-unary) regular languages we discussed in \Cref{thm:exponential-size-binary} is not happening in unary languages.  For unary languages $L$ we have $|M(L)| = \dsc(L)$ where $M(L)$ is the syntactic monoid.

\ifpaper
\else
The following lemma is proven in~\cite{GrHoRa24c} and will be useful for the study of witness languages.

\begin{lemma}\label{lem:a-greater-word-is-missing}
	Let~$L$ be a unary regular language 
	over~$\Sigma=\{\ta\}$. Then $\mpc(L)>\ell$ if there is a
	word~$w=\ta^\ell$ in~$L$ such that
	\begin{enumerate}
		\item there is \emph{no} word~$\ta^k\in L$, for $0\leq k<\ell$, or
		\item  the property $\ta^k(\ta^{\ell-k})^*\not\subseteq L$ holds for every
		word~$\ta^{k}\in L$ with~$k<\ell$.
	\end{enumerate}
	%
\end{lemma}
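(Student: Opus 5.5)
The plan is to prove the contrapositive: assume $\mpc(L)\le\ell$ and show that neither condition can hold for $w=\ta^\ell$. If $\mpc(L)\le\ell$, then every word of $L$ of length at least $\mpc(L)$ is pumpable, since valid pumping constants are upward closed. In particular $w=\ta^\ell$ is pumpable.

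In the unary case a decomposition $w=xyz$ with $|y|=d\ge 1$ is determined only by $d$, because $xy^tz=\ta^{\ell+(t-1)d}$. So pumpability of $\ta^\ell$ means there is some $1\le d\le\ell$ with $\ta^{\ell+(t-1)d}\in L$ for all $t\ge 0$.

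With such a $d$ fixed, put $k=\ell-d$, so $0\le k<\ell$. The case $t=0$ gives $\ta^k\in L$, which already contradicts condition~1. For all $t\ge 0$ we get $\ta^{k+td}\in L$, that is, $\ta^k(\ta^{\ell-k})^*\subseteq L$ for the word $\ta^k\in L$ with $k<\ell$. This contradicts condition~2, which requires the non-inclusion for every such $k$. Hence either condition forces $\mpc(L)>\ell$.

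The only point needing care is to use the fact that $\mpc(L)\le\ell$ makes every word of length $\ge\ell$ pumpable, which holds by monotonicity of the pumping property in $p$. It is also worth noting that $\ell\ge 1$ may be assumed: when $\ell=0$ the statement reads $\mpc(L)>0$, which follows from $L\neq\emptyset$. I expect no real obstacle. The argument is a direct translation of unary pumping into arithmetic progressions of lengths, and it applies verbatim to $\mpc$. The same argument also yields the bound for $\mpl$, since $\mpc(L)\le\mpl(L)$.
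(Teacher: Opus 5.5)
Your proof is correct. It is the standard direct argument: assuming $\mpc(L)\le\ell$ makes $\ta^\ell$ pumpable, and a unary pump with $|y|=d$ yields $\ta^{\ell-d}\in L$ and $\ta^{\ell-d}(\ta^{d})^*\subseteq L$, which refutes both conditions. The paper does not reprove this lemma but quotes it from \cite{GrHoRa24c}, and your reasoning is also what underlies its remark that $\ta^\ell$ is a witness.
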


Indeed, the reader may verify that the word~$\ta^\ell$ mentioned in the previous lemma is a witness. 
\fi

We already classified the set of witness languages w.r.t.~$\mpc$ in the non-unary and the unary case (Proposition~\ref{prop:every-set-is-witnesslanguage}).
One may wonder to what extent we can extend \Cref{lem:mpl-finite-languages-bounded} in the unary case.
Indeed, there are languages whose pumping constant $\mpl$ is arbitrarily smaller than the according pumping constant of its witness language as shown in Lemma~\ref{lem:mpl-distance-arbitrarily-large}.
Another attempt to extend the statement of Lemma~\ref{lem:mpl-finite-languages-bounded} is to encompass all finite languages. But, as we will see next, this is not possible.

\ifpaper
\begin{lemma}\label{lem:finite-notsubset-witnesslang}
	For $p \in \N$, we have~$\mathcal{FIN} \not\subset \mathcal{W}_p^{\mpl,u}.$
\end{lemma}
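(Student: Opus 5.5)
To establish $\mathcal{FIN} \not\subset \mathcal{W}_p^{\mpl,u}$, I will exhibit, for each fixed $p \in \N$, a finite language that is not the $\mpl$-witness language of any unary regular language $L$ with $\mpl(L)=p$. The idea is to exploit two constraints that every $W^{\mpl}(L)$ must satisfy. First, $W^{\mpl}(L)\subseteq L\subseteq \ta^*$, so any finite language over a larger alphabet, or any finite unary language violating the second constraint, is excluded. Second, the minimum-length property: a shortest word of $L$ always lies in $W^{\mpl}(L)$. The more interesting and robust obstruction, which I would use even restricting to unary finite sets, comes from the length of witnesses. If $\mpl(L)=p$, then $L$ has some word $w$ with $|w|\ge p-1$ that is not pumpable with bound $p-1$ (this is exactly the failure of $p-1$ as a pumping length). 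Such a $w$ is a witness, so $W^{\mpl}(L)$ must contain a word of length at least $p-1$.

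Hence I pick the finite unary language $F=\{\lambda\}$, or more generally any finite unary $F$ all of whose words have length less than $p-1$ (this needs $p\ge 2$). If $F=W^{\mpl}(L)$ with $\mpl(L)=p$, the preceding paragraph produces a witness of length at least $p-1$, a contradiction. Thus $F\notin\mathcal{W}_p^{\mpl,u}$, while clearly $F\in\mathcal{FIN}$, which yields the non-inclusion.

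For the edge case $p=1$, the remark in the preliminaries gives that $\mpl(L)=1$ forces $L$ to be suffix closed and nonempty, and every word is pumpable with bound $1$ except possibly $\lambda$. Since $\lambda$ is the shortest word, $W^{\mpl}(L)=\{\lambda\}$, so for instance $\{\ta\}\in\mathcal{FIN}$ is not in $\mathcal{W}_1^{\mpl,u}$. In general, any finite language over more than one letter also works trivially, since unary witness languages are unary.

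The main care point is verifying the claim that non-pumpability with bound $p-1$ of a word of $L$ implies membership in $W^{\mpl}(L)$. This holds because pumpability is monotone in the bound: a word not pumpable with bound $p-1$ is not pumpable with any smaller bound, which is exactly the definition of a witness. This is the same reasoning as in the proof of \Cref{lem:mpl-growth-or-stabilization}, so I would cite it there.
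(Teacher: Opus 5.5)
Your proof is correct, but it takes a different route from the paper. For $p\ge 2$ the paper uses a \emph{long} finite language. If $W^{\mpl}(L)=\ta^{\le x}$ with $x\ge 2p$, then $\ta^{p+1},\dots,\ta^{2p}$ are witnesses of length at least $p=\mpl(L)$. In the unary setting each of them can therefore be pumped only by the prefix $\ta^p$. This gives $\ta^{\ge p+1}\subseteq L$, hence $L=\ta^*$ and $\mpl(L)=1$, a contradiction.

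You use a \emph{short} language instead: since $p-1$ fails as a pumping constant, there is a witness of length at least $p-1\ge 1$, so $\{\lambda\}$ is excluded. This is the reasoning of Lemma~\ref{lem:mpl-growth-or-stabilization} and of the last lemma of Section~\ref{sec:smallest-pumping-non-reg} ($W^K(L)=\{\lambda\}$ forces $L\in\mathcal{L}_K$ or $K(L)=1$). Your argument is shorter, does not use unarity, and applies verbatim to $\mpc$ and to arbitrary alphabets. The paper's argument gives information yours cannot: it excludes finite languages that contain long words, by showing that a unary witness language cannot be too full above length $p$.

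Note also that $\{\lambda\}$ is a nonempty subset of $\Sigma^{\le p-1}$. So for $p\ge 2$ your argument shows that Lemma~\ref{lem:mpl-finite-languages-bounded}, and the corresponding classification of $\mathcal{W}_p^{\mpc}$, are correct only for subsets of $\Sigma^{\le p-1}$ that contain a word of length exactly $p-1$. (As literally stated, the lemma is even witnessed by $\emptyset\in\mathcal{FIN}$, since $W^{\mpl}(L)=\emptyset$ only for $L=\emptyset$.)

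A minor point for $p=1$: you conclude $W^{\mpl}(L)=\{\lambda\}$, and the paper's proof claims the same. Under the formal definition, however, a witness only has to be unpumpable for bounds below $\mpl(L)=1$, and no word is pumpable with bound $0$. So literally $W^{\mpl}(L)=L$, as the paper itself uses in Lemma~\ref{lem:mpl-growth-or-stabilization}; for example, $W^{\mpl}(\ta^*)=\ta^*$. Your conclusion $\{\ta\}\notin\mathcal{W}_1^{\mpl,u}$ still holds. All you need is $\lambda\in W^{\mpl}(L)$, which your shortest-word observation already gives, since $\lambda\in L$ by suffix closure.
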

\begin{proof}
	For~$p=1$ we observe that~$\ta^*$ is the only unary language~$L$ fulfilling~$\mpl(L)=1$. We observe that~$\lambda$ is the only witness of~$L$ which implies that~$\mathcal{W}_1^{\mpl,u}=\{\{\lambda\}\}$.
	
	Hence, we can assume for the rest of the proof that~$p\ge 2$.
	Additionally, assume there is a language~$L$ with~$\mpl(L)=p$ and~$W(L)=\ta^{\le x}$ for~$x\ge 2p$. As a direct consequence of Lemma~\ref{lem:mpl} we know that each word in~$\ta^{\le 2p}\setminus \ta^{\le p}$ can be pumped. On the other hand, each of these words is in~$W(L)$ which implies that they can only be pumped by the factor~$\ta^p$. Hence, we have
	$$
	\bigcup_{i=p+1}^{2p}\ta^i\cdot (\ta^p)^*=\ta^{\ge p+1}
	$$
	is a subset of the language~$L$.
	Since~$W(L)\subseteq L$ we obtain that~$L$ contains
	$$
	W(L)\cup \ta^{\ge p+1}=\ta^{\le x}\cup \ta^{\ge p+1}=\ta^*,
	$$
	which implies that~$L=\ta^*$. In turn, this contradicts the assumption that~$\mpl(L)=p>1$ and therefore proves the stated claim.\iflipics\else\qed\fi
\end{proof}
\else

\fi

Next, we want to study the relation of the scattered factor distance of a witness language and the pumping constant $\mpl$ of the according language.
Therefore, we observe the following useful characterisation of unary languages.

\ifpaper
\begin{lemma}\label{lem:index-collection}
	Let $p \in \N$. For each~$L\subseteq\Sigma^*$ with~$\mpl(L)=p$ there is a finite index set~$I_L\subseteq \N\times \{2, 3,\ldots, p\}$ and two finite sets~$S\subseteq \Sigma^{\le p-1}$, $S'\subseteq \Sigma^*$ such that
	$$
	L=\bigcup_{(i,j)\in I_L} \ta^i(\ta^j)^*\cup S
	\quad \text{and} \quad
	W^{\mpl}(L)\subseteq \bigcup_{(i,j)\in \widehat{I_L}} \ta^i(\ta^j)^*\cup S',
	$$
	where~$\widehat{I_L}\subset \N\times \{p\}$.
\end{lemma}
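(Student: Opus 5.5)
Throughout, $L$ is taken to be unary over $\Sigma=\{\ta\}$; this is the setting of the section and is implicit in the statement. The key observation is that in the unary case pumping depends only on the length of the pumped factor. A decomposition $\ta^n=xyz$ with $|y|=j$ pumps if and only if $\ta^{n-j}(\ta^j)^*\subseteq L$, and such a decomposition with $|xy|\le r$ exists if and only if $1\le j\le r$ (take $x=\lambda$). So the plan is to describe $L$ and $W^{\mpl}(L)$ as unions of arithmetic progressions and then show that only finitely many progressions are needed.

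\textbf{Description of $L$.} Set $S=L\cap\Sigma^{\le p-1}$, which is finite. Since $\mpl(L)=p$, every $\ta^n\in L$ with $n\ge p$ admits some $j_n\in[p]$ with $\ta^{n-j_n}(\ta^{j_n})^*\subseteq L$. Hence
\[
L=S\cup\bigcup_{n\ge p,\ \ta^n\in L}\ta^{n-j_n}(\ta^{j_n})^*,
\]
where the inclusion from right to left holds because each progression lies in $L$.

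\textbf{Reduction to finitely many progressions.} For fixed $j$, the progressions $\ta^i(\ta^j)^*$ with $i$ in a fixed residue class modulo $j$ are nested. Their union is therefore $\ta^{i_{\min}}(\ta^j)^*$, where $i_{\min}$ is the least start occurring in that class. This leaves at most $\sum_{j\le p}j$ progressions, which gives a finite index set.

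\textbf{Normalisation to the stated index shape.} Two adjustments are needed.
\begin{itemize}
\item The index set must lie in $\N\times\{2,\ldots,p\}$, so difference $1$ is not allowed. For $p\ge 2$, rewrite $\ta^i\ta^*=\ta^i(\ta^2)^*\cup\ta^{i+1}(\ta^2)^*$.
\item Starting points must lie in $\N$, not $\N_0$. For $i=0$, replace $(\ta^j)^*$ by $\ta^j(\ta^j)^*$ and move $\lambda$ into $S$.
\end{itemize}
The degenerate case $p=1$ (forcing $L=\ta^*$, so that $\{2,\ldots,p\}$ is empty) has to be treated separately or excluded. I expect this boundary bookkeeping, rather than any real mathematics, to be the main friction.

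\textbf{The witness language.} Let $\ta^n\in W^{\mpl}(L)$ with $n\ge p$. By the observation above, $\ta^n$ is not pumpable with bound $p-1$, so no $j\le p-1$ satisfies $\ta^{n-j}(\ta^j)^*\subseteq L$. Since $n\ge p=\mpl(L)$, the word is nevertheless pumpable with bound $p$. Hence the pumping length is exactly $j=p$, and $\ta^n\in\ta^{n-p}(\ta^p)^*\subseteq L$.

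Applying the same residue-class minimisation modulo $p$ gives finitely many progressions $\ta^i(\ta^p)^*$. Put $S'=W^{\mpl}(L)\cap\Sigma^{\le p-1}$, together with $\lambda$ if a start $i=0$ arises (normalised as before). This yields the claimed inclusion with $\widehat{I_L}\subseteq\N\times\{p\}$. Only an inclusion is claimed, and it is all we obtain: a progression may contain both witnesses and non-witnesses.
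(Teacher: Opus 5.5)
Your proposal is correct and follows essentially the paper's argument: every word $\ta^n\in L$ with $n\ge p$ yields a progression $\ta^{n-j}(\ta^j)^*\subseteq L$ with $j\le p$, and a witness of length at least $p$ can only be pumped with $j=p$. Your residue-class nesting argument for finiteness is cleaner than the paper's appeal to regularity for termination, your normalisations (for $j=1$ and $i=0$) supply details the paper omits, and you are right that $p=1$ must be excluded, since $L=\ta^*$ has $\mpl(L)=1$ but cannot be written with an index set in $\N\times\emptyset$ and $S\subseteq\{\lambda\}$ (note that $p=1$ also admits $L=\{\lambda\}$, which causes no problem).
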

\begin{proof}
	Given any word~$w\in L$. If~$w$ is not pumpable w.r.t. any value less or equal to~$p$, then~$w$ has to be shorter than~$p$.
	On the other hand, if~$w$ is pumpable w.r.t. the value~$j$ then~$w=\ta^{|w|\bmod j}(\ta^j)^k$ for a natural number~$k>1$.
	Indeed, this implies that~$\ta^{|w|\bmod j+(k-1)j}(\ta^j)^*$ is a subset of~$L$.
	By repeating this procedure for all words in~$L$ in increasing lexicographic order\footnote{This algorithm terminates because~$L$ is a regular language.} we obtain the set~$I_L$ and therefore the first claim follows.
	The second claim is an implication of the fact, that each witness cannot be pumped by any value less than~$p$.
	\iflipics\else\qed\fi\end{proof}
\else

\fi

%
%
%
%


By using the previous lemma we observe that despite Corollary~\ref{cor:mpl-sd-infty} the scattered factor distance of a witness languages is exponentially bounded by~$\mpl$.

\ifpaper
\begin{reptheorem}{thm:sd-W-distance-not-arbitrarily-large}
Let $p\in \N$ and~$L$ be a unary regular language with~$\mpl(L)=p\ge 2$. 
Then we have
$$\sd(W^{\mpl}(L))\le \lcm(2,3,\ldots,p-1)\cdot p= e^{(1+o(1))p}.$$
\vspace{-6mm}
\end{reptheorem}
\noindent \textit{Proof idea.}
First, we observe that each unary word can be uniquely identified by its length which can be written as~$px+i$ for two numbers~$x,i \in \N$. Hence, natural numbers and words are used interchangeably throughout this proof. The scattered factor distance of a unary language is thus equal to the largest difference of numbers~$px+i$ and~$py+j$ such that the according words are in the language.

Let~$L$ be a unary regular language with~$\mpl(L)=p$.
Using Lemma~\ref{lem:index-collection}, we obtain a representation~$L=\bigcup_{(i,j)\in I_L\atop x\in \N_0} i+jx\cup S$ for some finite set $S \subseteq \Sigma^{\leq p-1}$.
Indeed, a word~$px+i$ is in~$L$ if $(x,i)\in I_L$ or the words~$px+i\in S$.

For showing that $\sd(W^{\mpl}(L))<\ell$, we have to prove that for each $\ell+1$ consecutive numbers at least two of them refer to words in~$W^{\mpl}(L)$.

Indeed, a word~$px+i$ is a witness of~$L$ if it is not pumpable w.r.t. any value less than~$p$.
W.l.o.g. we assume that~$px+i\notin S$.
Therefore, $px+i$ cannot be equal to any number $i+jz$ which fulfils~$(i,j)\in I_L$ and~$z\in\N_0$. 
We observe that the tuple~$(i,j)$ in~$I_L$ refers to the residue class modulo~$j$ with remainder~$i$.
Since the index set~$\widehat{I_L}$ of the witness language~$W^{\mpl}(L)$ contains only tuples of the form~$(i,p)$, the value of~$\sd(W^{\mpl}(L))$ is bound from above by~$p\cdot \Theta$ such that for~$0\le k\le x-1$ there is~$(i',j')\in I_L$ with~$j'<p$ and~$i+pk\bmod j=i'$.
By assuming~$I_L$ is maximal in the sense that, for each~$j<p$ there are exactly~$j-1$ residue classes modulo~$j$ induced by the tuples in~$I_L$, we have that~$\Theta$ is a solution of the system of linear congruences
\begin{align*}
	i+px &\equiv k_2 \bmod 2,& \quad
	i+px &\equiv k_3 \bmod 3,& \quad
	i+px &\equiv k_4 \bmod 4,&\\
	& \dots  \hspace{0.5cm}\dots&\quad
	i+px &\equiv k_{p-2} \bmod p-2,&\quad
	i+px &\equiv k_{p-1} \bmod p-1,&
\end{align*}
for some remainders~$k_2, k_3, \ldots, k_{p-1}$.
We obtain that~$\Theta\le \lcm(2,3,\ldots, p-1)$ because~$\Theta\bmod \lcm(2,3,\ldots, p-1)$ is also a solution of the system.
Therefore, we have
$$\sd(W^{\mpl}(L))\le (\Theta+1)\cdot p\le (\lcm(2,3,\ldots,p-1)-1+1)\cdot p= \lcm(2,3,\ldots,p-1)\cdot p.$$

An equivalent formulation of the prime number theorem states that
\[
\log\lcm(1,2,\ldots,n)\sim n
\qquad\text{as }n\to\infty;
\]
see, for instance, Hardy and Wright~\cite{HaWr79}, which implies the stated inequivalence
\[
\sd\bigl(W^{\mpl}(L)\bigr)\leq \lcm(2,3,\ldots,p-1)\cdot p=e^{(1+o(1))p}.\hfill\qed
\]

\else

\begin{proof}
	In order to show the claim, we first inspect the words in the witness language~$W^{\mpl}(L)$ and show that their exponents will be of the form $px +i$ and $py + j$ for some multiples $x,y \in \N$ and offsets $i,j \in \N$, then we observe that the maximal
	difference between these exponents can be derived from the elements in~$I_L$.
	
	Let~$I_L\subseteq \N\times\{2,3,\ldots,p\}$ and~$\widehat{I_L}\subseteq  \N\times\{p\}$
	be index sets for~$L$ and~$W^{\mpl}(L)$, \resp, according to Lemma~\ref{lem:index-collection}.
	Without loss of generality we assume that~$I_L$ and~$S$ are the smallest sets describing~$L$.
	For every number~$j$ there are exactly~$j$ residue class modulo~$j$ and the languages~$\ta^i(\ta^j)^*$ can be interpreted as the residue classes modulo~$j$ with the remainder~$i$.
	Since these languages have a one-to-one correspondence with the tuples~$(i,j)$ we observe that the set~$I_L$ contains at most~$j$ pairs~$(i,j)$ for each number~$2\le j\le p$.
	If~$I_L$ contains exactly~$j$ pairs for a number~$2\le j\le p$ then we have~$\ta^{\ge i_j}\subseteq L$ for~$i_j=\max\{i\mid (i,j)\in I_L\}$.	Hence, 
	each word longer than~$i_j$ can be pumped by its first letter and thus they are not in~$W^{\mpl}(L)$. In turn, this implies that $$\sd(W^{\mpl}(L))\le\sd(W^{\mpl}(\bigcup_{(i,j)\in I_L\atop i<  i_j} \ta^i(\ta^j)^*\cup S)).$$
	Therefore, we can assume that~$|I_L\cap \{j\}|\le j-1$ for each number~$2\le j\le p$.
	Since we know that	
	$$W^{\mpl}(L)\subseteq \bigcup_{(i,j)\in \widehat{I_L}} \ta^i(\ta^j)^*\cup S',
	$$
	with~$\widehat{I_L}\subset \N\times \{p\}$ we derive that~$\sd(W^{\mpl}(L))\le\max\{\sd(S'),p\}$.
	Hence, it suffices to show that~$\sd(S')$ is bounded from above by some~$\Theta \in \N$.
	Therefore, we observe that
	$$W^{\mpl}(L)\subseteq \bigcup_{(i,p)\in I_L\atop i\ge p} \ta^{i \bmod p}(\ta^p)^*\cup (S'\cap \ta^{\le p-1}),
	$$
	wherefore this task can be reformulated to finding the largest range~$[k,k']$
	such that each word~$w$ in
	$$\bigcup_{(i,p)\in I_L\atop i\ge p} \ta^{i\bmod p}(\ta^p)^*\cap \{\ta^\ell\mid k\le \ell\le k'\}$$ is also in
	$$\bigcup_{(i,j)\in I_L\atop j\ne p} \ta^i(\ta^j)^*.$$
	For the sake of simplicity, we increase the range~$[k,k']$ further and thus give a larger bound for~$\sd$ by just inspecting the words in~$\ta^i(\ta^p)^*\cap \{\ta^\ell\mid k\le \ell\le k'\}$ for some fixed value~$i\ge p$.
	Thus, the task is equivalent to including the largest possible number of consecutive multiples of~$p$ plus~$i$ into at least one of the residue classes which are represented by the remainder~$i$ modulo~$j$ for~$(i,j)\in I_L$ and~$j\ne p$.
	
	Again for the sake of simplicity, we assume that all residue classes modulo~$j$ are in~$I_L$ except one, i.e., we have~$(i,j)\in I_L$ for all~$0\le i\le j-1$ where~$i\ne k_i$ for an integer~$0\le k_i\le j-1$.
	Hence, we obtain that the numbers~$i+px$ are equivalent to~$k_i$ modulo~$j$ for all words~$\ta^{i+px}\in W^{\mpl}(L)$.
	By applying this argument to all numbers~$2\le j\le p-1$ we obtain that all words~$\ta^{i+px}\in W^{\mpl}(L)$ fulfil
	\begin{align*}
		i+px &\equiv k_2 \bmod 2,&\\
		i+px &\equiv k_3 \bmod 3,&\\
		i+px &\equiv k_4 \bmod 4,&\\
		& \vdots  \hspace{1.6cm}\vdots&\\
		i+px &\equiv k_{p-2} \bmod p-2,&\\
		i+px &\equiv k_{p-1} \bmod p-1.&
	\end{align*}
	Since we have~$\sd(W^{\mpl}(L))=|w|-|w'|$ for two words~$w,w'\in W^{\mpl}(L)$
	we can alternatively assume that~$\Theta=|w|-|w'|\bmod p$ fulfils the previous system of linear congruences.
	Thus, we have~$|[k,k']|=\Theta\cdot p$.
	Indeed, with the previous assumption on~$\Theta$ we have implicitly assumed that~$w,w'$ are not in~$S$.
	By allowing~$|w'|=0$ and setting~$\lambda\in S$ which implies that~$\lambda\in S'$,
	we can increase~$\sd(W^{\mpl}(L))$ by at most~$p$.
	Hence, we have~$\sd(W^{\mpl}(L))\le |[k,k']|+p=\Theta\cdot p+p=(\Theta+1)\cdot p$.
	
	It remains to show that~$\Theta \le \lcm(2,3,\ldots, p-1)$.
	Suppose the previously shown system of linear congruences is solvable, and let $\Theta$ and $\Theta'$ be two
	solutions. Then for each $2\le i\le p-1$ we have $\Theta \equiv k_i \equiv \Theta' \pmod{i}$ and thus $\Theta\equiv\Theta'\pmod{i}$.
	Indeed this implies that~$i \mid (\Theta-\Theta')$ for all $2\le i\le p-1$.
	Therefore, we obtain that~$\lcm(2,3,\ldots,p-1)\mid (\Theta-\Theta')$. In other words all solutions of the system have distance $\lcm(2,3,\ldots,p-1)$.
	Let~$\Theta = r \mod +\lcm(2,3,\ldots,p-1) s$ for integers~$r$ and $s$ with~$0\le r\le \lcm(2,3,\ldots,p-1)-1$.
	We see that
	$$k_i\equiv \Theta \equiv  r +\lcm(2,3,\ldots,p-1) s \equiv r\mod i,$$ because $i$ is a divisor of $\lcm(2,3,\ldots,p-1)$.
	Therefore, we obtain that~$r$ is a solution of the aforementioned system of linear congruences. This implies that the smallest nonnegative solution is bound from above by~$\lcm(2,3,\ldots,p-1)-1$ and thus
	$$\sd(W^{\mpl}(L))\le (\Theta+1)\cdot p\le (\lcm(2,3,\ldots,p-1)-1+1)\cdot p= \lcm(2,3,\ldots,p-1)\cdot p.$$
	
	An equivalent formulation of the prime number theorem states that
	\[
	\log\lcm(1,2,\ldots,n)\sim n
	\qquad\text{as }n\to\infty;
	\]
	see, for instance, Hardy and Wright~\cite{HaWr79}.
	Therefore, it follows that
	\[
	\log\lcm(2,3,\ldots,p-1)
	=
	(1+o(1))p.
	\]
	Moreover, $\log p=o(p)$, and hence
	\[
	\begin{aligned}
		\log\bigl(
		p\cdot\lcm(2,3,\ldots,p-1)
		\bigr)
		&=
		\log p+\log\lcm(2,3,\ldots,p-1)\
		&=
		(1+o(1))p.
	\end{aligned}
	\]
	Exponentiating this estimate yields
	\[
	\sd\bigl(W^{\mpl}(L)\bigr)\leq \lcm(2,3,\ldots,p-1)\cdot p=e^{(1+o(1))p}.
	\]
	Thus, the scattered-factor distance of the witness language is bounded from above by a quantity that is exponential in the minimal pumping
	length~$p$.
	\iflipics\else\qed\fi
\end{proof}

\fi

	\section{The Witness-Language Problem}\label{sec:witness-problem}
	
This section is dedicated to the computational complexity of the \textsc{Pumping-Witness-Language-Problem} or, for short, \textsc{Witness-Language-Problem} which is defined in the following way:


\begin{description}
	\item[]\textsc{Pumping-Witness-Language-Problem}:
	\begin{description}
		\item[]\textsc{Fixed}: Particular pumping lemma, such as
		Lemma~\ref{lem:mpc}.
		\item[]\textsc{Input}: Two accepting or generating devices~$A,B$ for a
		formal language family such as, e.g., the regular languages.
		\item[]\textsc{Output:} Yes, if and only if~$W(L(A))=L(B)$
		w.r.t.\ previously fixed pumping lemma.
	\end{description}
\end{description}


The following auxiliary result provides an oracle for testing whether every replication of a word is in a given language.

\begin{lemma}[\cite{GHR23}]\label{lem:bounded-inclusion}
	Given a \dfa\ $A=(Q,\Sigma, \cdot, q_0, F)$ and a word~$w\in \Sigma^*$, the language inclusion problem for~$w^*$ in~$L(A)$ is
	\L-complete.
\end{lemma}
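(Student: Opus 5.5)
The plan is to prove the two halves separately: membership in \L\ by a direct log-space simulation, and \L-hardness by a reduction from reachability in graphs of out-degree one.

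\textbf{Membership.} Write $q_i=q_0\cdot w^i$ for $i\in\N_0$. Since $q_{i+1}=q_i\cdot w$ is determined by $q_i$, the sequence $(q_i)_{i\ge 0}$ is eventually periodic. Both its preperiod and its period are at most $|Q|$, so every state that ever occurs already appears among $q_0,\ldots,q_{|Q|}$. Hence
\[
w^*\subseteq L(A)\iff q_i\in F\ \text{for all } 0\le i\le |Q|.
\]
A log-space machine checks the right-hand side using three items of working memory: the current state (in binary, $O(\log|Q|)$ bits), a counter $i\le |Q|$, and a pointer into $w$ ($O(\log|w|)$ bits). To get from $q_i$ to $q_{i+1}$ it reads $w$ letter by letter. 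For each letter it looks up the transition in the input encoding of $A$ by scanning the transition table with a further logarithmic pointer. After each complete pass over $w$ it tests the current state for membership in $F$ and rejects as soon as a test fails. The total space is $O(\log(|A|+|w|))$.

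\textbf{Hardness.} I would reduce from the complement of reachability in directed graphs of out-degree one: given such a graph $G=(V,E)$ and vertices $s,t$, decide whether $t$ is reachable from $s$. This problem is \L-complete under first-order (or \NC$^1$) reductions, and \L\ is closed under complement, so its complement is \L-complete as well. From $(G,s,t)$ the reduction constructs the unary \dfa\ $A=(V,\{\ta\},\cdot,s,V\setminus\{t\})$ with $v\cdot\ta$ the unique successor of $v$, and takes $w=\ta$. Then $\ta^*\subseteq L(A)$ holds exactly when no iterate $s\cdot\ta^i$ equals $t$, that is, exactly when $t$ is not reachable from $s$.

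The main point requiring care is that the reduction is weak enough, since \L-completeness is only meaningful under reductions below log-space. The construction above copies the edge relation and complements a single accepting state, so it is evidently first-order definable. If the paper's convention requires a different reduction notion, I would adapt the source problem accordingly, for instance to deterministic reachability under \NC$^1$ reductions. The structural fact underlying both directions is the bound $|Q|$ on the preperiod plus period of the orbit of $q_0$ under $w$.
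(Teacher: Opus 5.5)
Your proof is correct. The paper gives no argument for this lemma and simply imports it from \cite{GHR23}, so there is no in-paper route to compare with. Your orbit-based log-space test, together with the reduction from non-reachability in out-degree-one graphs, is a complete, self-contained proof. Your reduction even gives hardness, under first-order reductions, already for unary \dfa s with $|w|=1$.

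Only minor tightening is needed.
\begin{itemize}
\item The separate bounds ``preperiod at most $|Q|$'' and ``period at most $|Q|$'' do not by themselves give the window $0\le i\le |Q|$. What you actually use, and state in your last sentence, is that preperiod $m$ plus period $\ell$ is at most $|Q|$. This holds because $q_0,\ldots,q_{m+\ell-1}$ are pairwise distinct.
\item Your algorithm should test $q_0\in F$ before the first pass over $w$, to cover the case $i=0$ of your criterion.
\item If you take the source problem in its common form with out-degree \emph{at most} one, make the transition function total by adding a self-loop at every vertex without a successor. This preserves whether $t$ is reachable from $s$ and keeps the reduction first-order.
\item The paper actually applies the lemma to $xy^*z\subseteq L(A)$, via the quotient $(x^{-1}L(A))z^{-1}$. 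Your algorithm covers this directly: start the orbit at $q_0\cdot x$ and test $q\cdot z\in F$ for each orbit state $q$, still in logarithmic space.
\end{itemize}
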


First, we see that the \textsc{Witness-Language-Problem} is contained in \co\NP.

\ifpaper
\begin{theorem}\label{thm:reg-reg-mpc-conp-containment}
	Given two \dfa s~$A$ and~$B$, then, it can be decided in \co\NP\
	whether~$W(L(A))=L(B)$ holds w.r.t.\ Lemma~\ref{lem:mpc}.
\end{theorem}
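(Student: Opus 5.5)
We place the complement problem, i.e.\ deciding $W(L(A))\neq L(B)$, in \NP. A certificate is a word $w$ in the symmetric difference of $W(L(A))$ and $L(B)$, accompanied where needed by a candidate value for $\mpc(L(A))$.

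The first step bounds everything polynomially. Let $n=\dsc(A)$. Then $\mpc(L(A))\le\mpl(L(A))\le n$, and by Lemma~\ref{lem:witness-language-finite} every word of $W(L(A))$ has length at most $\mpc(L(A))-1<n$. Hence, if $W(L(A))\neq L(B)$, one of two things holds. Either $L(B)$ contains a word of length at least $n$, which is decidable deterministically by a reachability argument on $B$, so nothing needs to be guessed. Or there is a word $w$ with $|w|<n$ lying in exactly one of the two languages. The \NP\ machine therefore guesses such a short $w$ together with a number $p\le n$ intended to be $\mpc(L(A))$.

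The second step characterizes membership in $W(L(A))$ relative to $p$. By definition, $w\in W(L(A))$ iff $w\in L(A)$ and $w$ is not $\mpc$-pumpable w.r.t.\ $L(A)$ for any value $p'<\mpc(L(A))$. For Lemma~\ref{lem:mpc} the value $p'$ only enters through the condition $|w|\ge p'$, and values $p'\le|w|$ are the relevant ones. So, given the true value $p=\mpc(L(A))$, we get: $w\in W(L(A))$ iff $w\in L(A)$ and either $|w|<p-1$ holds, or no decomposition $w=xyz$ with $y\neq\lambda$ satisfies $xy^*z\subseteq L(A)$. Since $|w|<n$, there are only $O(n^2)$ decompositions. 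For each one, the test $xy^*z\subseteq L(A)$ runs in polynomial time by Lemma~\ref{lem:bounded-inclusion}, or directly: follow the orbit of $q_0\cdot x$ under $y$, which has at most $n$ states, and check $z$ from each state of the orbit. Thus membership of $w$ in $W(L(A))$ is polynomial-time checkable once $p$ is known. Membership of $w$ in $L(B)$ is trivial to check.

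The main obstacle is that $p=\mpc(L(A))$ must be certified, not merely guessed. The upper-bound half, $\mpc(L(A))\ge p$, has an \NP\ certificate: a word $u\in L(A)$ with $|u|=p-1$ that is not pumpable, which is checkable as above because $p-1<n$. The lower-bound half, $\mpc(L(A))\le p$, states that all words of length at least $p$ are pumpable; this is a universal statement and in general a \co\NP-type condition. I would try to avoid certifying it at all by reformulating the condition $w\in W(L(A))$. The reformulation is that $w\in L(A)$ is a witness iff $w$ is unpumpable and there exists an unpumpable word $u\in L(A)$ with $|u|\ge|w|+1$. This holds because $\mpc(L(A))-1$ is exactly the maximal length of an unpumpable word of $L(A)$, by Corollary~\ref{cor:mpc-WL-leq-mpc-L}, and such a length is below $n$. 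Existence of such a $u$ is in \NP\ and needs no lower bound on $\mpc$.

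It then remains to handle the two directions of the mismatch. If $w\in W(L(A))\setminus L(B)$, we guess $w$ and $u$ and verify. If $w\in L(B)\setminus W(L(A))$, we must certify non-witnesshood: either $w\notin L(A)$, or $w$ is pumpable (one decomposition suffices), or $w$ is unpumpable but no unpumpable $u\in L(A)$ has $|w|<|u|<n$. The last case is again universal. I would resolve it by having the machine guess $u^\ast$, a longest unpumpable word of length less than $n$, while a co-nondeterministic check over longer words is avoided. If that fails, I would fall back to exhibiting the problem's complement via a polynomial-time verifier over all words of length less than $n$. This step, making the whole certificate existential, is where I expect the argument to require the most care.
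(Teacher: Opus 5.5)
Your argument rests on a misreading of what a witness is for Kozen's lemma, and both characterisations you derive are wrong. The constant enters Kozen's lemma only through the hypothesis $|w|\ge p$, not through the decomposition, and the paper reads the definition accordingly. For $L(A)\neq\emptyset$, $w\in W^{\mpc}(L(A))$ iff $w\in L(A)$ and there is \emph{no} decomposition $w=xyz$ with $y\neq\lambda$ and $xy^tz\in L(A)$ for all $t\ge 0$. This is how $W(L_1)=\{\ta,\mathtt{abaa}\}$ is obtained in the introduction, why $W^{K}(L)=L$ for finite $L$, and exactly what the paper's proof of this theorem uses. In particular, the value $\mpc(L(A))$ plays no role in membership.

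Your first version (``$|w|<p-1$, or unpumpable'') makes pumpable short words witnesses, e.g.\ $\tb$ in $\tb^*\cup\{\ta^4\}$, whose $\mpc$ is $5$. Your reformulation (``unpumpable, and some unpumpable $u\in L(A)$ is strictly longer'') discards every unpumpable word of the maximal length $\mpc(L(A))-1$. For example, it excludes $\mathtt{abaa}\in W(L_1)$, and it contradicts $W(L)=L$ for finite $L$. The two versions also disagree with each other, for instance on $\tb$ above. Consequently, your certificate for $w\in W(L(A))\setminus L(B)$ checks the wrong property. The case $w\in L(B)\setminus W(L(A))$ ends in a universal condition that you leave unresolved. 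Guessing a ``longest'' unpumpable $u^\ast$ certifies nothing about its maximality. A verifier ranging over words of length less than $n$ does not help either, as long as membership of a single word still hinges on the uncertified value $p$.

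With the correct definition your ``main obstacle'' disappears, and what remains is the paper's argument, phrased existentially. For a word $w$ with $|w|<n$, membership in $W(L(A))$ is decided in polynomial time by the test from your second step: run through the $O(|w|^2)$ decompositions and decide $y^*\subseteq (x^{-1}L(A))z^{-1}$, via Lemma~\ref{lem:bounded-inclusion} or via the orbit of $q_0\cdot x$ under $y$. Checking $w\in L(B)$ is trivial. So one word of the symmetric difference is an \NP\ certificate for $W(L(A))\neq L(B)$; the paper's co-nondeterministic machine branches universally over exactly these candidate words.

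Your reduction to short words is worth keeping. Because witnesses have length at most $\mpc(L(A))-1<n$, the separate reachability test for a word of length at least $n$ in $L(B)$ correctly handles long words. By contrast, a fixed search range such as the paper's $1\le|w|\le\max\{\dsc(A),\dsc(B)\}$ alone misses, e.g., $\ta^8\in L(B)\setminus W(L(A))$ for $L(A)=\{\lambda,\ta^4\}$ and $L(B)=(\ta^4)^*$.
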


\begin{proof}
	The proof follows a similar argumentation as presented in \cite{GHR23} for the \textsc{Pumping-Problem}.
	
	Let $\langle A, B\rangle$ be an input instance of the problem in question.
	We construct a $\co\NP$ Turing machine $M$ with access to an~$\L$-oracle: 
	first the computation universally guesses ($\forall$-states) a word~$w$ with
	$1\le |w| \le \max\{\dsc(A), \dsc(B)\}$. On that particular branch~$M$ checks deterministically if~$w\in L(A)$ and if~$w\in L(B)$.
	If~$w\in L(B)\setminus L(A)$ then~$M$ stops and rejects.
	In the case that~$w\notin L(B)\cup L(A)$ the machine~$M$ stops and accepts.
	For the remaining case, namely~$w\in L(A)$, $M$ cycles deterministically through all valid decompositions~$w=xyz$ with~$y\ne \lambda$. 
	Then it constructs a \dfa~$C$ accepting the language quotient
	$$(x^{-1}L(A))z^{-1}.$$
	Afterwards the Turing machine~$M$ decides whether~$y^*\subseteq L(C)$ with the 
	\co\NP-oracle provided by Lemma~\ref{lem:bounded-inclusion}.
	If~$y^*\subseteq L(C)$ we obtain that~$w=xyz$ can be pumped by~$y$ and thus~$w$ is not contained in~$W(L(A))$.
	Then the Turing machine~$M$ stops and rejects if~$w\in L(B)$.
	Otherwise,~$M$ continues with the next decomposition of~$w$ until either the previously described case occurs or no other decomposition of~$w$ is left.
	In the latter case, the word~$w\in L(A)$ is not pumpable by any of its decompositions. Hence, we have~$w\in W(L(A))$.
	Then the Turing machine accepts if~$w\in L(B)$ and rejects otherwise.
	\iflipics\else\qed\fi
\end{proof}
\else

\fi

Although the \textsc{Witness-Language-Problem} is related to the \textsc{Pumping-Problem}---which asks for the smallest value~$p$ satisfying a particular pumping lemma for~$L(A)$---it is not possible to use the construction of its \co\NP-hardness as shown in~\cite[Theorem 14]{GHR23} for the \textsc{Witness-Language-Problem}, 
because it leads to a witness language accepted by minimal \dfa\ that probably has an exponential number of states and thus is not usable as an input device.
\ifpaper
Nevertheless, it is possible to reduce the problem whether a given formula in 3DNF is a tautology to the \textsc{Witness-Language-Problem}.
Due to space constraints, we present the proof idea while the complete reduction along with an exemplary construction of an automaton $A_\varphi$ for a given formula $\varphi$ in 3DNF
can be found in the long version of this paper.
\else
Nevertheless, we present a reduction from the problem whether a given formula in 3DNF is a tautology to the \textsc{Witness-Language-Problem}.
An exemplary construction of an automaton $A_\varphi$ for a given formula $\varphi$ in 3DNF is given in \Cref{fig:conp-hardness-automaton-A}.
\fi

\ifpaper
\begin{reptheorem}{thm:reg-reg-mpc-conp-hard}
	Let~$A$ and~$B$ be \dfa s. Then it is \co\NP-hard to decide
	whether $W^{\mpc}(L(A))=L(B)$ holds.
\end{reptheorem}
\noindent \textit{Proof idea.}
	In order to show the \co\NP-hardness of the \textsc{Witness-Language-Problem} we construct for a given formula~$\varphi$ in 3DNF with~$n$ variables and~$m$ conjunctions a \dfa\ $A_\varphi$ which fulfils~$W^{\mpc}(L(A_\varphi))=\{\lambda\}$ if and only if~$\varphi$ is a tautology. For technical reasons one needs to inspect $W^{\mpc}(L(A_\varphi) \cap \Sigma^{\neq 0 \bmod n})$ instead of $W^{\mpc}(L(A_\varphi))$ which we will omit in this proof idea.
	
	Assume that $\varphi=\bigvee_{i=1}^m C_i$ is a formula in 3DNF with~$n$
	variables $x_1,x_2,\ldots,x_n$ and~$m$ conjunctions.
	We define a \dfa~$A_{\varphi}$ which consists of~$n$ states followed by~$m$ blocks of states and an accepting sink state; one block for each conjunction. The block for conjunction~$C_i$ consists of~$n\cdot (|C_i|+1)$ states and can be seen as a table with~$n$ rows and~$|C_i|+1$ columns, where each row represents one of the variables of~$\varphi$.
	Since~$\varphi$ is in 3DNF we know that~$|C_i|\le 3$.
	Each of the columns can be seen as one of the integers from~$-1$ to $|C_i|-1$ 
	representing that the according number of literals has been evaluated to \textit{true} beforehand where the number~$-1$ represents that one of the literals of~$C_i$ has been evaluated to \textit{false} beforehand. 
	In addition, each word~$w\in(\{\mathtt{0},\mathtt{1}\}^n)^*$ represents an assignment for the variables of~$\varphi$, where the~\nth{$(n\cdot i+j)$} letter of~$w$ is the assignment of~$x_j$ in~$C_i$.
	
	The automaton~$A_{\varphi}$ works as follows.
	It starts its computation in an accepting state which is mapped by each word of length~$n$ onto an accepting state again, while each non-empty shorter word is not accepted.
	Afterwards~$A_{\varphi}$ counts for each conjunction how many literals of this conjunction are evaluated to \textit{true} by moving from a state according to variable~$x_j$ for~$1\le j\le n-1$ in column~$k$ to the state according to~$x_{j+1}$ in \textit{(i)} column~$k+1$ for~$0\le k\le |C_i|-1$ if~$x_{j+1}\in C_i$ and the input is~$\mathtt{1}$ or~$\overline{x}_j\in C_i$ and the input is~$\mathtt{0}$; \textit{(ii)} column~$-1$ for~$0\le k\le |C_i|-1$ if~$x_{j+1}\in C_i$ and the input is~$\mathtt{0}$ or~$\overline{x}_j\in C_i$ and the input is~$\mathtt{1}$; \textit{(iii)} to the accepting sink-state for~$k=|C_i|$ if~$x_{j+1}\in C_i$ and the input is~$\mathtt{1}$ or~$\overline{x}_j\in C_i$ and the input is~$\mathtt{0}$; \textit{(iv)} in column~$k$ if neither~$x_{j+1}$ nor~$\overline{x}_j$ are in~$C_i$.
	The last states for each conjunction (except the last conjunction) are accepting and thus enable~$A_{\varphi}$ to accept the words~$(\Sigma^n)^{\le m}$.
	If the automaton switches from one conjunction to the next one it resets its computation, i.e., the last state of each conjunction which are not mapped onto the accepting sink state are treated as if they were in column~$0$.
	The last states according to the last conjunction are non-accepting sink states which forces~$A_{\varphi}$ to reject every word in~$(\Sigma^n)^{>m}$.
	If a word~$w$ maps the start state of~$A_{\varphi}$ onto the accepting sink state it can be pumped by its last~$n+(|w|\bmod n)$ letters.
	Otherwise, the word cannot be pumped because~$(\Sigma^n)^{>m}\not\subseteq L(A_{\varphi})$ and thus is a witness of~$L(A_{\varphi})$. In summary, if each word can be pumped then none of the words (except $\lambda$) is a witness of~$L(A_{\varphi})$ and otherwise~$W^{\mpc}(L(A_{\varphi}))\ne \{\lambda\}$.
	Since the first case appears if and only if~$\varphi$ is a tautology,
	the previously shown construction\footnote{Since we have~$\mpc(L({A_\varphi}))=1$ the construction re-establishes the \co\NP-hardness of the \textsc{Pumping-Problem}.} is a polytime reduction from \textsc{Tautology} onto the \textsc{Witness-Language-Problem}.
	Thus, the \textsc{Witness-Language-Problem} for \dfa s w.r.t. Lemma~\ref{lem:mpc} is \co\NP-hard.
	The automaton~${A_\varphi}$ is shown schematically in Figure~\ref{fig:conp-hardness-automaton-A}, where~$\varphi=(x_1\wedge \bar{x}_2\wedge x_4)\vee (\overline{x}_1\wedge x_3)$.\qed
	
	\begin{figure}[!htb]
		\centering
		\begin{tikzpicture}[->,>=stealth',shorten >=1pt,auto,node distance=2.9cm,
	semithick,scale=0.7, every node/.style={scale=0.7},initial text=]
	\tikzstyle{every state}=[fill={rgb:black,1;white,10},ellipse]

	\node[state,initial,accepting] (q0) at (7,20) {$q_0$};	
	\node[state] (q1) at (7,18.5)  {$q_1$};
	\node[state] (q2) at (7,17)  {$q_2$};	
	\node[state] (q3) at (7,15.5)  {$q_3$};	
	\node[state,accepting] (q4) at (7,14)  {$q_4$};	
	
	\node[state] (q11-1) at (2.5,12)  {$q_{1,1,-1}$};	
	\node[state] (q110) at (5.5,12)  {$q_{1,1,0}$};	
	\node[state] (q111) at (8.5,12)  {$q_{1,1,1}$};	
	\node[state] (q112) at (11.5,12)  {$q_{1,1,2}$};	
	
    \node[state] (q12-1) at (2.5,10)  {$q_{1,2,-1}$};	
	\node[state] (q120) at (5.5,10)  {$q_{1,2,0}$};	
	\node[state] (q121) at (8.5,10)  {$q_{1,2,1}$};	
	\node[state] (q122) at (11.5,10)  {$q_{1,2,2}$};	
	
	\node[state] (q13-1) at (2.5,8)  {$q_{1,3,-1}$};	
	\node[state] (q130) at (5.5,8)  {$q_{1,3,0}$};	
	\node[state] (q131) at (8.5,8)  {$q_{1,3,1}$};	
	\node[state] (q132) at (11.5,8)  {$q_{1,3,2}$};	
	
    \node[state,accepting] (q14-1) at (2.5,6)  {$q_{1,4,-1}$};	
	\node[state,accepting] (q140) at (5.5,6)  {$q_{1,4,0}$};	
	\node[state,accepting] (q141) at (8.5,6)  {$q_{1,4,1}$};	
	\node[state,accepting] (q142) at (11.5,6)  {$q_{1,4,2}$};	

    \node[state] (q21-1) at (2.5,4)  {$q_{2,1,-1}$};	
	\node[state] (q210) at (5.5,4)  {$q_{2,1,0}$};	
	\node[state] (q211) at (8.5,4)  {$q_{2,1,1}$};	
	\node[state] (q212) at (11.5,4)  {$q_{2,1,2}$};	
	
	\node[state] (q22-1) at (2.5,2)  {$q_{2,2,-1}$};	
	\node[state] (q220) at (5.5,2)  {$q_{2,2,0}$};	
	\node[state] (q221) at (8.5,2)  {$q_{2,2,1}$};	
	\node[state] (q222) at (11.5,2)  {$q_{2,2,2}$};	
	
    \node[state] (q23-1) at (2.5,0)  {$q_{2,3,-1}$};	
	\node[state] (q230) at (5.5,0)  {$q_{2,3,0}$};	
	\node[state] (q231) at (8.5,0)  {$q_{2,3,1}$};	
	\node[state] (q232) at (11.5,0)  {$q_{2,3,2}$};	
	
    \node[state] (q24-1) at (2.5,-2)  {$q_{2,4,-1}$};	
	\node[state] (q240) at (5.5,-2)  {$q_{2,4,0}$};	
	\node[state] (q241) at (8.5,-2)  {$q_{2,4,1}$};	
	\node[state] (q242) at (11.5,-2)  {$q_{2,4,2}$};	
	
	\node[state,accepting] (q5) at (14,5){$q_5$};
	
	\node[] (qa) at (-1,13) {$x_1\in C_1$};
	
	\node[] (qb) at (-1,11) {$\overline{x}_2\in C_1$};
	
	\node[] (qc) at (-1,9) {$x_3,\overline{x}_3\not\in C_1$};
	
	\node[] (qd) at (-1,7) {$x_4\in C_1$};
	
	\node[] (qe) at (-1,5) {$\overline{x}_1\in C_2$};
	
	\node[] (qf) at (-1,3) {$x_2,\overline{x}_2\not\in C_2$};
	
	\node[] (qg) at (-1,1) {$x_3\in C_2$};
	
	\node[] (qh) at (-1,-1) {$x_4,\overline{x}_4\not\in C_2$};
	
\draw   
    (q0) edge[ultra thick, ->,left] node{$\boldsymbol{0},\mathtt{1}$}       (q1)
    
    (q1) edge[->,left,ultra thick] node{$\boldsymbol{0},\mathtt{1}$}       (q2)
            
    (q2) edge[->, left,ultra thick] node{$\boldsymbol{0},\mathtt{1}$}     (q3)
    
    (q3) edge[->, left,ultra thick] node{$\boldsymbol{0},\mathtt{1}$}  (q4)
            
    (q4) edge[->, above,ultra thick] node{$\boldsymbol{0}$}  (q11-1)
    (q4) edge[->, above] node{$\mathtt{1}$}  (q111)
    
    (q11-1) edge[->, left,ultra thick] node{$\boldsymbol{0},\mathtt{1}$}  (q12-1)
    
    (q110) edge[->, above] node{$\mathtt{0}$}  (q121)
    (q110) edge[->, above] node{$\mathtt{1}$}  (q12-1)
    
    (q111) edge[->, left,above] node{$\mathtt{0}$}  (q122)
    (q111) edge[->, above] node{$\mathtt{1}$}  (q12-1)
    
    (q112) edge[->, bend left, right] node{$\mathtt{0}$}  (q5)
    (q112) edge[->, near start,below] node{$\mathtt{1}$}  (q12-1)
    
    (q12-1) edge[->, left,ultra thick] node{$\boldsymbol{0},\mathtt{1}$}  (q13-1)
    
    (q120) edge[->, left] node{$\mathtt{0},\mathtt{1}$}  (q130)
    
    (q121) edge[->, left] node{$\mathtt{0},\mathtt{1}$}  (q131)
    
    (q122) edge[->, left] node{$\mathtt{0},\mathtt{1}$}  (q132)
    
    (q13-1) edge[->, left,ultra thick] node{$\boldsymbol{0},\mathtt{1}$}  (q14-1)
    
    (q130) edge[->, above] node{$\mathtt{0}$}  (q14-1)
    (q130) edge[->, above] node{$\mathtt{1}$}  (q141)
    
    (q131) edge[->, above] node{$\mathtt{0}$}  (q14-1)
    (q131) edge[->, left] node{$\mathtt{1}$}  (q142)
    
    (q132) edge[->, below, near start] node{$\mathtt{0}$}  (q14-1)
    (q132) edge[->, left] node{$\mathtt{1}$}  (q5)
    
    (q14-1) edge[->,above, ultra thick] node[name=q14-10]{}  (q211)

    (q14-1) edge[->, left] node{$\mathtt{1}$}  (q21-1)
       
    (q140) edge[->, above,near end] node{$\mathtt{0}$}  (q211)
    (q140) edge[->, above, near end] node{$\mathtt{1}$} (q21-1)
       
    (q141) edge[->, below] node{$\mathtt{0}$}  (q211)
    (q141) edge[->, above, near end] node{$\mathtt{1}$}  (q21-1)
       
    (q142) edge[->, above] node{$\mathtt{0}$}  (q211)
    (q142) edge[->, below,near start] node{$\mathtt{1}$}  (q21-1)
    
    (q21-1) edge[->, left] node{$\mathtt{0},\mathtt{1}$}  (q22-1)
       
    (q210) edge[->, left] node{$\mathtt{0},\mathtt{1}$}  (q220)
       
    (q211) edge[->, left,ultra thick] node{$\boldsymbol{0},\mathtt{1}$}  (q221)
       
    (q212) edge[->, left] node{$\mathtt{0},\mathtt{1}$}  (q222)
    
    (q22-1) edge[->, left] node{$\mathtt{0},\mathtt{1}$}  (q23-1)
       
    (q220) edge[->, above] node{$\mathtt{0}$}  (q23-1)
    (q220) edge[->, above] node{$\mathtt{1}$}  (q231)
       
    (q221) edge[->, above,ultra thick] node{$\boldsymbol{0}$}  (q23-1)
    (q221) edge[->] node[left=0.15cm]{$\mathtt{1}$}  (q232)
       
    (q222) edge[->, below,near start] node{$\mathtt{0}$}  (q23-1)
    (q222) edge[->, below] node{$\mathtt{1}$}  (q5)
    
    (q23-1) edge[->, left,ultra thick] node{$\boldsymbol{0},\mathtt{1}$}  (q24-1)
    
    (q230) edge[->, left] node{$\mathtt{0},\mathtt{1}$}  (q240)
       
    (q231) edge[->, left] node{$\mathtt{0},\mathtt{1}$}  (q241)
       
    (q232) edge[->, left] node{$\mathtt{0},\mathtt{1}$}  (q242)
    
    (q24-1) edge[->, loop below, below] node{$\mathtt{0},\mathtt{1}$}  (q24-1)
    
    (q240) edge[->, loop below, below] node{$\mathtt{0},\mathtt{1}$}  (q240)
       
    (q241) edge[->, loop below, below] node{$\mathtt{0},\mathtt{1}$}  (q241)
       
    (q242) edge[->, loop below, below] node{$\mathtt{0},\mathtt{1}$}  (q242)
    
    	;
    \node[] (q14-1label0) at ($(q14-1.south east)+(0.8,0.1)$) {$\boldsymbol{0}$};	
   	
\end{tikzpicture}
		\caption{Schematic drawing of the automaton~${A_\varphi}$ constructed
			for the 3SAT instance~$\varphi$, where~$\varphi=(x_1\wedge \bar{x}_2\wedge x_4)\vee (\overline{x}_1\wedge x_3)$. 
			The word $\lambda\in L({A_\varphi})$ is the only word in~$W(L({A_\varphi}))$ if and only if~$\varphi$ is
			a tautology.
			The unsatisfying assignment~$(0000)^3$ of~$\varphi$---which represents~$x_1=0$, $x_2=0$, $x_3=0$ and~$x_4=0$---is indicated with a boldface line.
		}
		\label{fig:conp-hardness-automaton-A}
	\end{figure}

\else

\begin{proof}
	In order to show the \co\NP-hardness of the \textsc{Witness-Language-Problem} we construct for given formula~$\varphi$ in 3DNF with~$n$ variables and~$m$ conjunctions a \dfa\ $A_\varphi$ which fulfils~$W(L(A_\varphi) \cap( \Sigma^{\neq 0 \bmod n} \cup \{\lambda\}))=\{\lambda\}$ if and only if~$\varphi$ is a tautology. (Notice that the intersection with $\Sigma^{\neq 0 \bmod n} \cup \{\lambda\}$ is necessary only for technical reasons to avoid deviant witnesses.)
	
	Assume that $\varphi=\bigvee_{i=1}^m C_i$ is a formula in 3DNF with~$n$
	variables $x_1,x_2,\ldots,x_n$ and~$m$ conjunctions. Define $\NonRep(n) = \Sigma^{\neq 0 \bmod n} \cup \{\lambda\}$ as the words over $\Sigma$ whose length is not dividable by $n$ but the empty word.
	
	Let ${A_\varphi}=(Q_{A_\varphi},\{\mathtt{0},\mathtt{1}\},{}\cdot_{A_\varphi}{},q_0,F_{A_\varphi})$ be the \dfa\, where
	$$	
	Q_{A_\varphi}=\{q_0,q_1,\ldots, q_{n+1}\}
	\cup \{q_{i,j,-1},q_{i,j,0},q_{i,j,1},q_{i,j,2}\mid 1\le i\le m, 1\le j\le n\}
	$$
	and~$F_{A_\varphi}=\{q_0,q_n,q_{n+1}\}\cup \{q_{i,n,k}\mid 1\le i< m,-1\le k\le 2\}$.
	
	We define its transitions 
	\begin{align}
		q_i\cdot_{A_\varphi} \mathtt{0} & = q_{i+1}, & & \mbox{for $0\leq i\leq n-1$},\label{eqn:pref-counter-start}\\
		q_i\cdot_{A_\varphi} \mathtt{1} & = q_{i+1}, & & \mbox{for $0\leq i\leq n-1$},\label{eqn:pref-counter-end}\\
		q_n\cdot_{A_\varphi} \mathtt{0} & = q_{1,1,1}, & & \mbox{if $\overline{x}_1\in C_1$},\label{eqn:first-if-i1=1-start}\\
		q_n\cdot_{A_\varphi} \mathtt{1} & = q_{1,1,-1}, & & \mbox{if $\overline{x}_1\in C_1$},\\
		q_n\cdot_{A_\varphi} \mathtt{0} & = q_{1,1,-1}, & & \mbox{if $x_1\in C_1$},\\
		q_n\cdot_{A_\varphi} \mathtt{1} & = q_{1,1,1}, & & \mbox{if $x_1\in C_1$}\label{eqn:first-if-i1=1-end},\\
		q_n\cdot_{A_\varphi} \mathtt{0} & = q_{1,1,0}, & & \mbox{if $\overline{x}_1\notin C_1\wedge x_1\notin C_1$},\label{eqn:x1-not-in-c1-0}\\
		q_n\cdot_{A_\varphi} \mathtt{1} & = q_{1,1,0},  & & \mbox{if $\overline{x}_1\notin C_1\wedge x_1\notin C_1$}.\label{eqn:x1-not-in-c1-1}\\
		\noalign{For $1\le i\le m$ and $1\le j\le n-1$ set}
		q_{i,j,k}\cdot_{A_\varphi} \mathtt{0} & = q_{i,j+1,k}, & & \mbox{if $\overline{x}_{j+1}\notin C_i\wedge x_{j+1}\notin C_i$ and~$0\le k\le 2$},\label{eqn:xj+1-not-in-ci-0}\\
		q_{i,j,k}\cdot_{A_\varphi} \mathtt{1} & = q_{i,j+1,k}, & & \mbox{if $\overline{x}_{j+1}\notin C_i\wedge x_{j+1}\notin C_i$ and~$0\le k\le 2$},\label{eqn:xj+1-not-in-ci-1}\\
		q_{i,j,k}\cdot_{A_\varphi} \mathtt{0} & = q_{i,j+1,k+1}, & & \mbox{if $\overline{x}_{j+1}\in C_i$ and~$0\le k\le |C_i|-1$,
		}\label{eqn:inside-count-up-negative}\\
		q_{i,j,k}\cdot_{A_\varphi} \mathtt{1} & = q_{i,j+1,-1}, & & \mbox{if $\overline{x}_{j+1}\in C_i$
			and~$0\le k\le |C_i|-1$,
		}\\
		q_{i,j,k}\cdot_{A_\varphi} \mathtt{0} & = q_{i,j+1,-1}, & & \mbox{if $x_{j+1}\in C_i$
			and~$0\le k\le |C_i|-1$,
		}\\
		q_{i,j,k}\cdot_{A_\varphi} \mathtt{1} & = q_{i,j+1,k+1}, & & \mbox{if $x_{j+1}\in C_i$ 
			and~$0\le k\le |C_i|-1$,
		}\label{eqn:inside-count-up-positiv}\\
		q_{i,j,k}\cdot_{A_\varphi} \mathtt{0} & = q_{n+1}, & & \mbox{if $\overline{x}_{j+1}\in C_i$ and~$|C_i|\le k\le 2$,
		}\label{eqn:inside-count-up-negative-greatest}\\
		q_{i,j,k}\cdot_{A_\varphi} \mathtt{1} & = q_{i,j+1,-1}, & & \mbox{if $\overline{x}_{j+1}\in C_i$ and~$|C_i|\le k\le 2$,
		}\\
		q_{i,j,k}\cdot_{A_\varphi} \mathtt{0} & = q_{i,j+1,-1}, & & \mbox{if $x_{j+1}\in C_i$ and~$|C_i|\le k\le 2$,
		}\\
		q_{i,j,k}\cdot_{A_\varphi} \mathtt{1} & = q_{n+1}, & & \mbox{if $x_{j+1}\in C_i$ and~$|C_i|\le k\le 2$,
		}\label{eqn:inside-count-up-positiv-greatest}\\
		q_{i,j,2}\cdot_{A_\varphi} \mathtt{0} & = q_{n+1}, & & \mbox{if $\overline{x}_{j+1}\in C_i$},\label{eqn:final-0}\\
		q_{i,j,2}\cdot_{A_\varphi} \mathtt{1} & = q_{n+1}, & & \mbox{if $x_{j+1}\in C_i$},\label{eqn:final-1}\\
		q_{i,j,-1}\cdot_{A_\varphi} \mathtt{0} & = q_{i,j+1,-1}, & & \label{eqn:minus-one-0}\\
		q_{i,j,-1}\cdot_{A_\varphi} \mathtt{1} & = q_{i,j+1,-1}. & & \label{eqn:minus-one-1}\\
		\noalign{For $1\le i\le m-1$ and $-1\le k\le 2$ set}
		q_{i,n,k}\cdot_{A_\varphi} \mathtt{0} & = q_{i+1,1,0}, & & \mbox{if $\overline{x}_1\notin C_{i+1}\wedge x_1\notin C_{i+1}$},\label{eqn:x1-not-in-ci+1-0}\\
		q_{i,n,k}\cdot_{A_\varphi} \mathtt{1} & = q_{i+1,1,0},  & & \mbox{if $\overline{x}_1\notin C_{i+1}\wedge x_1\notin C_{i+1}$},\label{eqn:x1-not-in-ci+1-1}\\
		q_{i,n,k}\cdot_{A_\varphi} \mathtt{0} & = q_{i+1,1,1}, & & \mbox{if $\overline{x}_1\in C_{i+1}$},\label{eqn:counter-too-small-start}\\
		q_{i,n,k}\cdot_{A_\varphi} \mathtt{1} & = q_{i+1,1,-1}, & & \mbox{if $\overline{x}_1\in C_{i+1}$},\\
		q_{i,n,k}\cdot_{A_\varphi} \mathtt{0} & = q_{i+1,1,-1}, & & \mbox{if $x_1\in C_{i+1}$},\\
		q_{i,n,k}\cdot_{A_\varphi} \mathtt{1} & = q_{i+1,1,1}, & & \mbox{if $x_1\in C_{i+1}$}\label{eqn:counter-too-small-end}.\\
		\noalign{For $-1\le k\le 2$ set}
		q_{m,n,k}\cdot_{A_\varphi} \mathtt{0} & = q_{m,n,k}, & &\\
		q_{m,n,k}\cdot_{A_\varphi} \mathtt{1} & = q_{m,n,k}, & &\\
		q_{n+1}\cdot_{A_\varphi} \mathtt{0} & = q_{n+1}, & &\\
		q_{n+1}\cdot_{A_\varphi} \mathtt{1} & = q_{n+1}. & &
	\end{align}	
	The automaton~${A_\varphi}$ is shown schematically in Figure~\ref{fig:conp-hardness-automaton-A}, where~$\varphi=(x_1\wedge \bar{x}_2\wedge x_4)\vee (\overline{x}_1\wedge x_3)$.
	
		\begin{figure}[!htb]
		\centering
		\begin{tikzpicture}[->,>=stealth',shorten >=1pt,auto,node distance=2.9cm,
	semithick,scale=0.7, every node/.style={scale=0.7},initial text=]
	\tikzstyle{every state}=[fill={rgb:black,1;white,10},ellipse]

	\node[state,initial,accepting] (q0) at (7,20) {$q_0$};	
	\node[state] (q1) at (7,18.5)  {$q_1$};
	\node[state] (q2) at (7,17)  {$q_2$};	
	\node[state] (q3) at (7,15.5)  {$q_3$};	
	\node[state,accepting] (q4) at (7,14)  {$q_4$};	
	
	\node[state] (q11-1) at (2.5,12)  {$q_{1,1,-1}$};	
	\node[state] (q110) at (5.5,12)  {$q_{1,1,0}$};	
	\node[state] (q111) at (8.5,12)  {$q_{1,1,1}$};	
	\node[state] (q112) at (11.5,12)  {$q_{1,1,2}$};	
	
    \node[state] (q12-1) at (2.5,10)  {$q_{1,2,-1}$};	
	\node[state] (q120) at (5.5,10)  {$q_{1,2,0}$};	
	\node[state] (q121) at (8.5,10)  {$q_{1,2,1}$};	
	\node[state] (q122) at (11.5,10)  {$q_{1,2,2}$};	
	
	\node[state] (q13-1) at (2.5,8)  {$q_{1,3,-1}$};	
	\node[state] (q130) at (5.5,8)  {$q_{1,3,0}$};	
	\node[state] (q131) at (8.5,8)  {$q_{1,3,1}$};	
	\node[state] (q132) at (11.5,8)  {$q_{1,3,2}$};	
	
    \node[state,accepting] (q14-1) at (2.5,6)  {$q_{1,4,-1}$};	
	\node[state,accepting] (q140) at (5.5,6)  {$q_{1,4,0}$};	
	\node[state,accepting] (q141) at (8.5,6)  {$q_{1,4,1}$};	
	\node[state,accepting] (q142) at (11.5,6)  {$q_{1,4,2}$};	

    \node[state] (q21-1) at (2.5,4)  {$q_{2,1,-1}$};	
	\node[state] (q210) at (5.5,4)  {$q_{2,1,0}$};	
	\node[state] (q211) at (8.5,4)  {$q_{2,1,1}$};	
	\node[state] (q212) at (11.5,4)  {$q_{2,1,2}$};	
	
	\node[state] (q22-1) at (2.5,2)  {$q_{2,2,-1}$};	
	\node[state] (q220) at (5.5,2)  {$q_{2,2,0}$};	
	\node[state] (q221) at (8.5,2)  {$q_{2,2,1}$};	
	\node[state] (q222) at (11.5,2)  {$q_{2,2,2}$};	
	
    \node[state] (q23-1) at (2.5,0)  {$q_{2,3,-1}$};	
	\node[state] (q230) at (5.5,0)  {$q_{2,3,0}$};	
	\node[state] (q231) at (8.5,0)  {$q_{2,3,1}$};	
	\node[state] (q232) at (11.5,0)  {$q_{2,3,2}$};	
	
    \node[state] (q24-1) at (2.5,-2)  {$q_{2,4,-1}$};	
	\node[state] (q240) at (5.5,-2)  {$q_{2,4,0}$};	
	\node[state] (q241) at (8.5,-2)  {$q_{2,4,1}$};	
	\node[state] (q242) at (11.5,-2)  {$q_{2,4,2}$};	
	
	\node[state,accepting] (q5) at (14,5){$q_5$};
	
	\node[] (qa) at (-1,13) {$x_1\in C_1$};
	
	\node[] (qb) at (-1,11) {$\overline{x}_2\in C_1$};
	
	\node[] (qc) at (-1,9) {$x_3,\overline{x}_3\not\in C_1$};
	
	\node[] (qd) at (-1,7) {$x_4\in C_1$};
	
	\node[] (qe) at (-1,5) {$\overline{x}_1\in C_2$};
	
	\node[] (qf) at (-1,3) {$x_2,\overline{x}_2\not\in C_2$};
	
	\node[] (qg) at (-1,1) {$x_3\in C_2$};
	
	\node[] (qh) at (-1,-1) {$x_4,\overline{x}_4\not\in C_2$};
	
\draw   
    (q0) edge[ultra thick, ->,left] node{$\boldsymbol{0},\mathtt{1}$}       (q1)
    
    (q1) edge[->,left,ultra thick] node{$\boldsymbol{0},\mathtt{1}$}       (q2)
            
    (q2) edge[->, left,ultra thick] node{$\boldsymbol{0},\mathtt{1}$}     (q3)
    
    (q3) edge[->, left,ultra thick] node{$\boldsymbol{0},\mathtt{1}$}  (q4)
            
    (q4) edge[->, above,ultra thick] node{$\boldsymbol{0}$}  (q11-1)
    (q4) edge[->, above] node{$\mathtt{1}$}  (q111)
    
    (q11-1) edge[->, left,ultra thick] node{$\boldsymbol{0},\mathtt{1}$}  (q12-1)
    
    (q110) edge[->, above] node{$\mathtt{0}$}  (q121)
    (q110) edge[->, above] node{$\mathtt{1}$}  (q12-1)
    
    (q111) edge[->, left,above] node{$\mathtt{0}$}  (q122)
    (q111) edge[->, above] node{$\mathtt{1}$}  (q12-1)
    
    (q112) edge[->, bend left, right] node{$\mathtt{0}$}  (q5)
    (q112) edge[->, near start,below] node{$\mathtt{1}$}  (q12-1)
    
    (q12-1) edge[->, left,ultra thick] node{$\boldsymbol{0},\mathtt{1}$}  (q13-1)
    
    (q120) edge[->, left] node{$\mathtt{0},\mathtt{1}$}  (q130)
    
    (q121) edge[->, left] node{$\mathtt{0},\mathtt{1}$}  (q131)
    
    (q122) edge[->, left] node{$\mathtt{0},\mathtt{1}$}  (q132)
    
    (q13-1) edge[->, left,ultra thick] node{$\boldsymbol{0},\mathtt{1}$}  (q14-1)
    
    (q130) edge[->, above] node{$\mathtt{0}$}  (q14-1)
    (q130) edge[->, above] node{$\mathtt{1}$}  (q141)
    
    (q131) edge[->, above] node{$\mathtt{0}$}  (q14-1)
    (q131) edge[->, left] node{$\mathtt{1}$}  (q142)
    
    (q132) edge[->, below, near start] node{$\mathtt{0}$}  (q14-1)
    (q132) edge[->, left] node{$\mathtt{1}$}  (q5)
    
    (q14-1) edge[->,above, ultra thick] node[name=q14-10]{}  (q211)

    (q14-1) edge[->, left] node{$\mathtt{1}$}  (q21-1)
       
    (q140) edge[->, above,near end] node{$\mathtt{0}$}  (q211)
    (q140) edge[->, above, near end] node{$\mathtt{1}$} (q21-1)
       
    (q141) edge[->, below] node{$\mathtt{0}$}  (q211)
    (q141) edge[->, above, near end] node{$\mathtt{1}$}  (q21-1)
       
    (q142) edge[->, above] node{$\mathtt{0}$}  (q211)
    (q142) edge[->, below,near start] node{$\mathtt{1}$}  (q21-1)
    
    (q21-1) edge[->, left] node{$\mathtt{0},\mathtt{1}$}  (q22-1)
       
    (q210) edge[->, left] node{$\mathtt{0},\mathtt{1}$}  (q220)
       
    (q211) edge[->, left,ultra thick] node{$\boldsymbol{0},\mathtt{1}$}  (q221)
       
    (q212) edge[->, left] node{$\mathtt{0},\mathtt{1}$}  (q222)
    
    (q22-1) edge[->, left] node{$\mathtt{0},\mathtt{1}$}  (q23-1)
       
    (q220) edge[->, above] node{$\mathtt{0}$}  (q23-1)
    (q220) edge[->, above] node{$\mathtt{1}$}  (q231)
       
    (q221) edge[->, above,ultra thick] node{$\boldsymbol{0}$}  (q23-1)
    (q221) edge[->] node[left=0.15cm]{$\mathtt{1}$}  (q232)
       
    (q222) edge[->, below,near start] node{$\mathtt{0}$}  (q23-1)
    (q222) edge[->, below] node{$\mathtt{1}$}  (q5)
    
    (q23-1) edge[->, left,ultra thick] node{$\boldsymbol{0},\mathtt{1}$}  (q24-1)
    
    (q230) edge[->, left] node{$\mathtt{0},\mathtt{1}$}  (q240)
       
    (q231) edge[->, left] node{$\mathtt{0},\mathtt{1}$}  (q241)
       
    (q232) edge[->, left] node{$\mathtt{0},\mathtt{1}$}  (q242)
    
    (q24-1) edge[->, loop below, below] node{$\mathtt{0},\mathtt{1}$}  (q24-1)
    
    (q240) edge[->, loop below, below] node{$\mathtt{0},\mathtt{1}$}  (q240)
       
    (q241) edge[->, loop below, below] node{$\mathtt{0},\mathtt{1}$}  (q241)
       
    (q242) edge[->, loop below, below] node{$\mathtt{0},\mathtt{1}$}  (q242)
    
    	;
    \node[] (q14-1label0) at ($(q14-1.south east)+(0.8,0.1)$) {$\boldsymbol{0}$};	
   	
\end{tikzpicture}
		\caption{Schematic drawing of the automaton~${A_\varphi}$ constructed
			for the 3SAT instance~$\varphi$, where~$\varphi=(x_1\wedge \bar{x}_2\wedge x_4)\vee (\overline{x}_1\wedge x_3)$. 
			The word $\lambda\in L({A_\varphi})$ is the only word in~$W(L({A_\varphi})\cap( \Sigma^{\neq 0 \bmod n} \cup \{\lambda\}))$ if and only if~$\varphi$ is
			a tautology.
			The unsatisfying assignment~$(0000)^3$ of~$\varphi$---which represents~$x_1=0$, $x_2=0$, $x_3=0$ and~$x_4=0$---is indicated with a boldface line.
		}
		\label{fig:conp-hardness-automaton-A}
	\end{figure}

	Each assignment for a conjunction~$C_i$ is represented by a word~$w\in \{\mathtt{0},\mathtt{1}\}^n$ in the obvious way, i.e., $\mathtt{0}$ represents the value \textit{false} and~$\mathtt{1}$ the value \textit{true}.
	Indeed the clause~$C_i$ is evaluated to true if each of its variables is assumed to be \textit{true}. 
	
	\begin{mialc}\label{clm:clause-Assignments-to-accepting-sink}
		Given a conjunction~$C_i$ of~$\varphi$ and an assignment of~$\varphi$ represented by a word~$w\in\{\mathtt{0},\mathtt{1}\}^n$.
		The states~$q_{i-1,1,k}$ for~$-1\le k\le 2$ and~$i>1$ and~$q_n$ if~$i=1$
		are mapped onto the accepting sink state~$q_{n+1}$ by~$w$
		if and only if~$C_i$ evaluates to \textit{true} for the assignment~$w$.
	\end{mialc}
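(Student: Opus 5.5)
The plan is to prove the claim by induction on the number of letters of $w$ read so far. I read the starting states in the statement as the last-row states $q_{i-1,n,k}$ of the previous block (for $i=1$, the state $q_n$), since these are the states from which the block of $C_i$ is entered. First I will observe that the transitions \eqref{eqn:first-if-i1=1-start}--\eqref{eqn:x1-not-in-c1-1} and \eqref{eqn:x1-not-in-ci+1-0}--\eqref{eqn:counter-too-small-end} leaving these states do not depend on $k$. Consequently all starting states behave identically on every word of length at least one, and it suffices to follow a single run.

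For $1\le j\le n$ let $s_j(w)$ be the number of literals of $C_i$ over the variables $x_1,\ldots,x_j$ that are satisfied by $w[1..j]$, and call $w[1..j]$ \emph{consistent} if none of these literals is falsified. I will prove the following invariant by induction on $j$. The prefix $w[1..j]$ leads from the starting state to $q_{i,j,s_j(w)}$ if it is consistent and the last literal of $C_i$ has not yet been read. It leads to $q_{i,j,-1}$ if it is inconsistent. It leads to $q_{n+1}$ exactly when it is consistent and contains every literal of $C_i$.

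The base case $j=1$ is read off from the entry transitions, with the same three cases: $x_1$ occurs positively, $x_1$ occurs negatively, or neither occurs. For the inductive step from $j$ to $j+1$ I distinguish four cases.
\begin{enumerate}
\item If $x_{j+1}$ does not occur in $C_i$, transitions \eqref{eqn:xj+1-not-in-ci-0}, \eqref{eqn:xj+1-not-in-ci-1} and \eqref{eqn:minus-one-0}, \eqref{eqn:minus-one-1} keep the column unchanged.
\item If the literal on $x_{j+1}$ is satisfied and it is not the last literal of $C_i$, transitions \eqref{eqn:inside-count-up-negative}--\eqref{eqn:inside-count-up-positiv} increase the counter by one.
\item If the literal on $x_{j+1}$ is falsified, the run moves to column $-1$, which is absorbing inside the block by \eqref{eqn:minus-one-0}, \eqref{eqn:minus-one-1}.
\item If the literal on $x_{j+1}$ is the last literal of $C_i$, is satisfied, and all earlier literals of $C_i$ were satisfied, then transitions \eqref{eqn:inside-count-up-negative-greatest}--\eqref{eqn:final-1} send the run to $q_{n+1}$, which is a sink.
\end{enumerate}
Taking $j=n$ gives both directions of the claim at once. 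The run ends in $q_{n+1}$ iff the full assignment is consistent and satisfies all literals, that is, iff $C_i$ is true under $w$. Otherwise the run ends in some $q_{i,n,k}$, and in particular not in $q_{n+1}$.

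The step I expect to be the main obstacle is the bookkeeping of the counter bounds in case 4. The ranges written in \eqref{eqn:inside-count-up-negative}--\eqref{eqn:final-1} (namely $k\le|C_i|-1$ versus $|C_i|\le k\le 2$, together with the extra rule for $k=2$) must be reconciled with the figure. The figure sends the run to the sink when the third literal of a three-literal conjunction is satisfied from column~$2$. So I will first fix the convention that the sink is entered exactly when the counter already equals $|C_i|-1$ and the final literal is satisfied, and check that the stated transitions realise this for $|C_i|\in\{1,2,3\}$. Here the case $|C_i|=1$, where the entry transition itself must already lead to $q_{n+1}$, needs a separate look. All remaining cases are routine verification against the transition table.
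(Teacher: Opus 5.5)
Your route is the paper's own. Both arguments follow the single run through the block of $C_i$ and treat the column as a counter of satisfied literals; the paper goes literal by literal along $y_{i_1},y_{i_2},y_{i_3}$, and you induct on the prefix length. Your reading of the start states as $q_{i-1,n,k}$ (resp.\ $q_n$) is the right one, and so is your observation that their outgoing transitions do not depend on $k$.

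\textbf{The gap.} It sits in the step you postpone as bookkeeping. That step is not a verification that can succeed, because for the automaton as specified the claim is false whenever $|C_i|\le 2$.
The count-up rules \eqref{eqn:inside-count-up-negative}--\eqref{eqn:inside-count-up-positiv} are stated for $0\le k\le|C_i|-1$. The sink rules \eqref{eqn:inside-count-up-negative-greatest}--\eqref{eqn:inside-count-up-positiv-greatest} are stated only for $|C_i|\le k\le 2$. So a satisfied last literal read in column $|C_i|-1$ moves to column $|C_i|$, and the run ends in $q_{i,n,|C_i|}$, never in $q_{n+1}$. Hence your case~4 fails for $|C_i|\in\{1,2\}$.
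The paper's figure shows exactly this. Take $C_2=\overline{x}_1\wedge x_3$ and $w=\mathtt{0010}$, which satisfies $C_2$. The run from $q_{1,4,k}$ is $q_{2,1,1},q_{2,2,1},q_{2,3,2},q_{2,4,2}$, ending in a non-accepting sink. So reconciling with the figure does not help, since the figure follows the written ranges.
Two further problems remain. For a one-literal conjunction on $x_1$, the entry transitions only reach $q_{i,1,1}$; this is the case you noticed but left open. For $|C_i|=3$, the count-up rule at $k=2$ clashes with \eqref{eqn:final-0}, \eqref{eqn:final-1}, because it points at a state $q_{i,j+1,3}$ that does not exist.

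\textbf{What is missing.} You must repair the construction before the induction starts:
\begin{enumerate}
\item restrict the count-up rules to $0\le k\le|C_i|-2$;
\item send the satisfying letter from column $|C_i|-1$ to $q_{n+1}$ (for $|C_i|=3$ this is what \eqref{eqn:final-0}, \eqref{eqn:final-1} already do);
\item when $C_i$ is a single literal on $x_1$, let the satisfying entry letter from $q_{i-1,n,k}$ (resp.\ $q_n$) go directly to $q_{n+1}$.
\end{enumerate}
Alternatively, first normalise $\varphi$ so that every conjunction has exactly three literals on distinct variables. For example, replace $y\wedge y'$ by $(y\wedge y'\wedge x_c)\vee(y\wedge y'\wedge\overline{x}_c)$; this preserves tautology-hood at linear cost. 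Then only the clash at $k=2$ needs resolving, in favour of the final rules.
With either fix, your invariant and the four cases go through verbatim, case~4 becoming precisely the new sink rule, and $j=n$ gives both directions.

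The paper's proof has the same defect. For fewer than three literals it asserts that $q_{i,i_2-1,1}$ resp.\ $q_{i,i_1-1,0}$ are sent straight to $q_{n+1}$, citing rules that only fire for $k\ge|C_i|$. It also does not cover a single literal on $x_1$.
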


	\begin{proof}
		Since~$\varphi$ is in 3DNF there are at most three variables~$x_{i_1},x_{i_2},x_{i_3}$ for each conjunction~$C_i$ such that~$C_i=y_{i_1}\wedge y_{i_2}\wedge y_{i_3}$ for~$y_{i_1}\in\{x_{i_1},\overline{x}_{i_1}\},y_{i_2}\in\{x_{i_2},\overline{x}_{i_2}\},$ and~$y_{i_3}\in\{x_{i_3},\overline{x}_{i_3}\}$
		for indices~$i_1,i_2,$ and~$i_3$ with~$1\le i_1<i_2<i_3\le n$.
		
		First we study the case where~$C_i$ contains exactly three literals.
		Observe that for all indices~$j<i_1$ we have~$\{x_j,\overline{x}_j\}\cap C_i= \emptyset$. Hence, the rules~\ref{eqn:x1-not-in-c1-0}, \ref{eqn:x1-not-in-c1-1} for~$q_n$,
		\ref{eqn:x1-not-in-ci+1-0}, \ref{eqn:x1-not-in-ci+1-1} for~$q_{i-1,1,k}$,
		and~\ref{eqn:xj+1-not-in-ci-0}, \ref{eqn:xj+1-not-in-ci-1} imply that 
		$q_n\cdot_{A_\varphi} w_1w_2\ldots w_{i_1-1}=q_{1,i_1-1,0}$ and
		$q_{i-1,1,0}\cdot_{A_\varphi} w_1w_2\ldots w_{i_1-1}=q_{i,i_1-1,0}$.
		
		Next the state~$q_{i,i_1-1,0}$ is mapped by~$w_{i_1}\in\{\mathtt{0},\mathtt{1}\}$ onto
		$$\begin{cases}
			q_{i,i_1,-1}, & \mbox{if~$y_{i_1}=x_{i_1}$ and~$w_{i_1}=\mathtt{0}$,}\\
			q_{i,i_1,1}, & \mbox{if~$y_{i_1}=x_{i_1}$ and~$w_{i_1}=\mathtt{1}$,}\\
			q_{i,i_1,1}, & \mbox{if~$y_{i_1}=\overline{x}_{i_1}$ and~$w_{i_1}=\mathtt{0}$,}\\
			q_{i,i_1,-1}, & \mbox{if~$y_{i_1}=\overline{x}_{i_1}$ and~$w_{i_1}=\mathtt{1}$},
		\end{cases}$$
		due to the transition rules
		\ref{eqn:inside-count-up-negative}-\ref{eqn:inside-count-up-positiv}.
		Hence, the counter~$k$ is increased by one if the variable~$x_{i_1}$ is assumed to be \textit{true} and it appears as a positive literal in~$C_i$ or the variable~$x_{i_1}$ is assumed to be \textit{false} and it appears as a negative literal in~$C_i$.
		Again by applying the transition rules~\ref{eqn:xj+1-not-in-ci-0} and \ref{eqn:xj+1-not-in-ci-1} we observe that~$q_{i,i_1,1}$ is mapped by~$w_{i_1+1}w_{i_1+2}\ldots w_{i_2-1}$ onto~$q_{i,i_2-1,1}$.
		Analogously as for~$i_1$ the state~$q_{i,i_2-1,1}$ is mapped by~$w_{i_2}$ onto 
		$$\begin{cases}
			q_{i,i_2,-1}, & \mbox{if~$y_{i_2}=x_{i_2}$ and~$w_{i_2}=\mathtt{0}$,}\\
			q_{i,i_2,2}, & \mbox{if~$y_{i_2}=x_{i_2}$ and~$w_{i_2}=\mathtt{1}$,}\\
			q_{i,i_2,2}, & \mbox{if~$y_{i_2}=\overline{x}_{i_2}$ and~$w_{i_2}=\mathtt{0}$,}\\
			q_{i,i_2,-1}, & \mbox{if~$y_{i_2}=\overline{x}_{i_2}$ and~$w_{i_2}=\mathtt{1}$},
		\end{cases}$$
		due to the transition rules
		\ref{eqn:inside-count-up-negative}-\ref{eqn:inside-count-up-positiv} if~$i_1>1$ or~$i>1$, and due to~\ref{eqn:first-if-i1=1-start}-\ref{eqn:first-if-i1=1-end} otherwise.
		Proceeding with this procedure the state~$q_{i,i_2,1}$ is mapped by~$w_{i_2+1}w_{i_2+2}\ldots w_{i_3-1}$ onto~$q_{i,i_3-1,1}$ and then due to
		the rules~\ref{eqn:final-0} and~\ref{eqn:final-1} it is mapped by~$w_{i_3}$ onto the state~$q_{n+1}$. Since~$q_{n+1}$ is an accepting sink state the word~$w_{i_3+1}w_{i_3+2}\ldots w_{|w|}$ maps it onto itself and the claim follows.
		
		If~$C_i$ contains less than three literals the statement follows in similar vein as in the previous case but the state~$q_{i,i_2-1,1}$ (for two literals) and the state~$q_{i,i_1-1,0}$ (for one literal) is mapped directly onto
		$q_{n+1}$ if~$y_{i_2}$ and~$y_{i_1}$, \resp, is assumed to be true
		due to the transition rules
		\ref{eqn:inside-count-up-negative-greatest}-\ref{eqn:inside-count-up-positiv-greatest}.
		\iflipics\else\qed\fi
	\end{proof}
	
	Since an assignment for $\varphi$ maps each variable of~$\varphi$ to a unique truth value, each variable has the same truth value in each conjunction of~$\varphi$.
	This implies that each assignment for~$\varphi$ can be represented\footnote{Indeed, it would be sufficient to use~$w^m$ but for controlling the set of witness words we need~$w^{m+1}$ which is also capable of representing all assignments of~$\varphi$.} by a word~$w^{m+1}$, where~$w\in\{\mathtt{0},\mathtt{1}\}^n$.
	
	\begin{mialc}\label{clm:varphi-Assignments-to-accepting-sink}
		Given an assignment for~$\varphi$ represented by~$w^{m+1}$ for~$w\in\{\mathtt{0},\mathtt{1}\}^n$.
		The state~$q_0$ is mapped onto the accepting sink state~$q_{n+1}$ by~$w^{m+1}$
		if~$\varphi$ evaluates to \textit{true} for the assignment~$w^{m+1}$
		and onto a non-accepting sink state~$q_{m,n,k}$ for~$-2\le k\le 2$ otherwise.
	\end{mialc}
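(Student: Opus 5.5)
We trace the run of $A_\varphi$ on $w^{m+1}$ block by block, splitting it as $w\cdot w\cdots w$ with $m+1$ copies of $w$. First, rules~\ref{eqn:pref-counter-start} and~\ref{eqn:pref-counter-end} show that the first copy of $w$ only serves as a length counter: $q_0\cdot_{A_\varphi} w=q_n$, whatever the bits of $w$ are. Then we argue by induction on $i\in\{1,\ldots,m\}$ about the state reached after reading $w^{i+1}$. Either some conjunction $C_{i'}$ with $i'\le i$ is satisfied, and the state is $q_{n+1}$. Or no such conjunction is satisfied, and the state is $q_{i,n,k}$ for some $k\in\{-1,0,1,2\}$.

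The inductive step uses Claim~\ref{clm:clause-Assignments-to-accepting-sink} directly. The state before the $(i+1)$-st copy of $w$ is either $q_n$ (if $i=1$) or a state $q_{i-1,n,k}$. We note here that the statement of that claim writes $q_{i-1,1,k}$, but the transitions~\ref{eqn:x1-not-in-ci+1-0}--\ref{eqn:counter-too-small-end} make clear that the intended states are the last-row states $q_{i-1,n,k}$, so we read it that way.

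By the claim, $w$ sends this state to $q_{n+1}$ exactly when $C_i$ is true under $w$. Since $q_{n+1}$ is a sink, all remaining copies of $w$ keep the run there.

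If $C_i$ is false, we still need the run to land in row $n$ of block $i$, i.e.\ in some $q_{i,n,k}$. To see this, observe that every transition out of a state $q_{i,j,\cdot}$ with $j<n$ goes either to $q_{n+1}$ or to some $q_{i,j+1,\cdot}$. The column index may change, but the row index increases by exactly one per letter. The run only reaches $q_{n+1}$ if the conjunction's literal count is completed, and this happens exactly when all its literals are true. So when $C_i$ is false, after the $n$ letters of $w$ the run sits in some $q_{i,n,k}$.

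This bookkeeping, that row index equals the number of letters read within the current copy of $w$ and that falsity of $C_i$ rules out $q_{n+1}$, is the main obstacle. It needs the transitions~\ref{eqn:minus-one-0} and~\ref{eqn:minus-one-1} for column $-1$ and the counter cases~\ref{eqn:inside-count-up-negative}--\ref{eqn:final-1}. It is essentially the converse direction of Claim~\ref{clm:clause-Assignments-to-accepting-sink}, so we would cite that claim rather than redo it.

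Finally, we conclude. If $\varphi$ is true under the assignment, then some $C_i$ is true, and the run reaches $q_{n+1}$ by the time it has read $w^{i+1}$, hence stays there after $w^{m+1}$. If $\varphi$ is false, every $C_i$ is false, so after $w^{m+1}$ the run is in some $q_{m,n,k}$. These states are non-accepting (they are excluded from $F_{A_\varphi}$) and are sinks by the last group of transitions. The range $-2\le k\le 2$ in the statement is then satisfied, since in fact $-1\le k\le 2$.
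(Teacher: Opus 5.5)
Your proof is correct and follows the same block-by-block route as the paper. The first copy of $w$ only drives $q_0$ to $q_n$; Claim~\ref{clm:clause-Assignments-to-accepting-sink} is then applied to each conjunction in turn, read (as you rightly note, and as the paper's own proof implicitly does) with the last-row states $q_{i-1,n,k}$ in place of $q_{i-1,1,k}$; the sink property of $q_{n+1}$ and of the $q_{m,n,k}$ finishes the argument. The only difference is in the false case: the paper pins the run down to exactly $q_{i,n,-1}$, since a falsified literal sends it to column $-1$ and rules~\ref{eqn:minus-one-0} and~\ref{eqn:minus-one-1} keep it there until row $n$, whereas your row-index invariant plus the converse direction of the claim yields only some $q_{i,n,k}$, which suffices because the transitions into block $i+1$ do not depend on $k$.
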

	
	\begin{proof}
		Due to the rules~\ref{eqn:pref-counter-start} and~\ref{eqn:pref-counter-end} we obtain that~$q_0$ is mapped by~$w$ onto~$q_n$.
		Then we can apply Claim~\ref{clm:clause-Assignments-to-accepting-sink} such that~$q_n$ is eventually mapped onto~$q_{n+1}$ or onto~$q_{1,j,-1}$.
		Additionally, each state~$q_{i,j,-1}$ is mapped onto~$q_{i,j+1,-1}$ by~$\mathtt{0}$ and~$\mathtt{1}$ (due to~\Cref{eqn:minus-one-0} and~\ref{eqn:minus-one-1}) and thus finally onto~$q_{i,n,-1}$ by~$w_{j+1},w_{j+2},\ldots, w_n$.
		Hence, for each assignment~$w^m$ which does not evaluate~$C_i$ to \textit{true} we have for~$-1\le k\le 2$ that~$q_{i-1,n,k}$ for~$i>1$ ($q_n$ \resp, for~$i=1$) is mapped by~$w$ onto~$q_{i,n,-1}$.
		By repeating this argument together with Claim~\ref{clm:clause-Assignments-to-accepting-sink} the stated claim follows.
		\iflipics\else\qed\fi
	\end{proof}
	
		Next we consider the witness language induced by the language accepted by~${A_\varphi}$.
	
	\begin{mialc}\label{clm:tautology-equals-only-one-witness}
		Let~$L$ be the language accepted by the \dfa~${A_\varphi}$. Then
		$W^{\mpc}(L({A_\varphi}) \cap \NonRep(n))=\{\lambda\}$ if and only if the Boolean formula~$\varphi$ is a tautology.
	\end{mialc}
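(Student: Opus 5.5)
We argue both directions by characterising the words of $L' = L(A_\varphi)\cap\NonRep(n)$. We first determine $\mpc(L')$ and which words of $L'$ are pumpable, and then tie non-pumpable non-empty words to falsifying assignments via Claim~\ref{clm:varphi-Assignments-to-accepting-sink}.

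The first step is a structural description of $L(A_\varphi)$. By the construction, a non-empty word is accepted in exactly two situations. Either it ends in one of the accepting block-boundary states $q_n$ or $q_{i,n,k}$ with $i<m$; then its length is a positive multiple of $n$ of the form $jn$ with $j\le m$, and it is excluded by $\NonRep(n)$. Or it reaches the accepting sink $q_{n+1}$. Hence every non-empty word of $L'$ has a prefix $u$ with $q_0\cdot u=q_{n+1}$, and its length is not divisible by $n$.

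For the direction ``tautology $\Rightarrow$ $W^{\mpc}(L')=\{\lambda\}$'', take any non-empty $w\in L'$. The goal is to pump $w$ by a factor lying entirely after the point where the run enters $q_{n+1}$. Pumping such a factor does not change membership in $L(A_\varphi)$, but the pumped lengths must also avoid multiples of $n$. I would therefore choose $y$ of length $n$, so that lengths stay in the same residue class modulo $n$. Following the paper's hint, I would take the last $n+(|w|\bmod n)$ letters and use a length-$n$ factor inside them. The prefix before $y$ must already have reached $q_{n+1}$; if $w$ is too short for this, I would instead pump a length-$n$ factor located inside the prefix $q_0\to q_n$ region of the run.

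The part of the argument that needs care is the following. If the run reaches $q_{n+1}$ only in the final block, the word may be too short to have room after the sink. Here the tautology assumption is used. Every assignment $w_0$ satisfies some $C_i$, so by Claim~\ref{clm:clause-Assignments-to-accepting-sink} reading $w_0$ from any block start enters the sink within that block. I would show that replacing or repeating a length-$n$ block of the prefix still yields a sink-reaching word, because each block start is reset as in rules \eqref{eqn:x1-not-in-ci+1-0}--\eqref{eqn:counter-too-small-end}. Concretely, for $y$ I would take the block that contains the sink entry together with its surroundings, and argue case-wise that $xy^tz\in L'$ for all $t$. With this in hand, every non-empty word of $L'$ is pumpable. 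This gives $\mpc(L')=1$, and the witness language is $\{\lambda\}$ by the earlier lemma on $W^K(L)=\{\lambda\}$, noting that $\lambda$ is the shortest word of $L'$.

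For the converse, suppose some assignment $w_0\in\{\mathtt0,\mathtt1\}^n$ falsifies $\varphi$. I would exhibit a non-empty witness, for example a word $v\in L'$ built so that deleting any factor, or repeating one enough times, forces the run through the blocks in the pattern of $w_0^{m+1}$. By Claim~\ref{clm:varphi-Assignments-to-accepting-sink}, that run ends in a non-accepting sink $q_{m,n,k}$. So every decomposition of $v$ admits some $t$ with $xy^tz\notin L'$, and $v$ cannot be pumped. This makes $\mpc(L')>1$, and some non-empty word then lies in $W^{\mpc}(L')$. The main obstacle is this step: choosing the witness and controlling all decompositions, including the length-mod-$n$ effects and the short prefixes. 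I would try a minimal-length argument: take a shortest non-empty word of $L'$ whose pumping would pass through $w_0$-type blocks, and invoke Lemma~\ref{lem:a-greater-word-is-missing}-style reasoning, namely that the pumped-down or pumped-up words leave $L'$.
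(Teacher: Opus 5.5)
\textbf{The tautology direction is false, so your plan for it cannot work.} Write $L'=L(A_\varphi)\cap\NonRep(n)$. Your structural description of $L'$ is correct. However, the direction ``$\varphi$ tautology $\Rightarrow W^{\mpc}(L')=\{\lambda\}$'' is false for $n\ge 2$, and no choice of decompositions can rescue it.

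Let $v$ be a shortest non-empty word of $L'$, and suppose $v=xyz$ with $y\neq\lambda$ and $xy^tz\in L'$ for all $t\ge 0$.
\begin{itemize}
\item Taking $t=0$ gives $xz\in L'$ with $|xz|<|v|$. By minimality $xz=\lambda$, so $y=v$.
\item But $v^n\neq\lambda$ has length $n|v|\equiv 0\pmod n$. Hence $v^n\notin\NonRep(n)\supseteq L'$, a contradiction.
\end{itemize}
So $v\in W^{\mpc}(L')$ and $\mpc(L')>|v|$. In particular, your intermediate goal $\mpc(L')=1$ is unattainable.

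If $\varphi$ is a tautology, Claim~\ref{clm:varphi-Assignments-to-accepting-sink} gives $q_0\cdot u^{m+1}=q_{n+1}$ for every $u\in\{\mathtt{0},\mathtt{1}\}^n$. Then $u^{m+1}\mathtt{0}\in L'$, so $L'\neq\{\lambda\}$ and therefore $W^{\mpc}(L')\neq\{\lambda\}$. For $n=1$ the failure is reversed: $L'=\{\lambda\}$ for every $\varphi$, so the other direction breaks.

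\textbf{Where your argument breaks.} The failing step is your use of Claim~\ref{clm:clause-Assignments-to-accepting-sink}. You assert that since $w_0$ satisfies some $C_i$, reading $w_0$ from any block start enters the sink. The claim only says that reading an assignment from the start of block $i$ enters $q_{n+1}$ iff it satisfies that specific conjunction $C_i$. Deleting or repeating a length-$n$ block shifts all later blocks onto different conjunctions. Repetition can also push the sink entry past block $m$ into the rejecting sinks $q_{m,n,k}$. For the shortest word $v$ above, your case analysis (``the block containing the sink entry together with its surroundings'') must fail whatever you choose.

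\textbf{The other direction.} The minimal-length argument you sketch works whenever $L'\neq\{\lambda\}$. Notice, though, that it never uses the falsifying assignment. That is exactly because it proves $W^{\mpc}(L')\neq\{\lambda\}$ for tautologies too.

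\textbf{The paper's proof has the same defect.} In the forward direction it pumps by a suffix of length $|w|\bmod n$, or $n+(|w|\bmod n)$ in the proof idea. Both lengths are congruent to $|w|$ modulo $n$. So for some $t$ such pumping produces a non-empty word of length divisible by $n$, which lies outside $\NonRep(n)$: already at $t=0$, or at $t=n$ if the suffix is all of $w$. In the converse, the paper treats $w\in\{\mathtt{0},\mathtt{1}\}^n$ as an element of $L'$, but $|w|=n$ excludes it. Hence the claim itself needs repair, not only your argument; for instance the target $L(B)=\{\lambda\}$ or the construction of $A_\varphi$ would have to change.
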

	
	\begin{proof}
		First we observe that Claim~\ref{clm:varphi-Assignments-to-accepting-sink} implies that the words~$w$ of length~$n$ are in~$W(L({A_\varphi}) \cap \NonRep(n))$, i.e., $w$ is not pumpable, if and only if~$w^m$ is an unsatisfying assignment of~$\varphi$.
		
		Additionally, for an satisfying assignment~$w^m$ we have that~$w^m v$ is pumpable by~$w$ for each word~$v\in\{\mathtt{0},\mathtt{1}\}^*$ because there is a conjunction~$C_i$ such that~$w$ is a satisfying assignment for~$C_i$ which implies that~$w^{i+1}$ maps~$q_0$ onto the accepting sink state~$q_{n+1}$.
		If~$\varphi$ is a tautology we obtain that~$L({A_\varphi})\subseteq\{\mathtt{0},\mathtt{1}\}^n\{\mathtt{0},\mathtt{1}\}^*\cup\{\lambda\}$ and each word in~$\{\mathtt{0},\mathtt{1}\}^n\{\mathtt{0},\mathtt{1}\}^*$ is equal to~$wv$ for~$w\in\{\mathtt{0},\mathtt{1}\}^n$ an~$v\in\{\mathtt{0},\mathtt{1}\}^*$ such that~$w^m$ is a satisfying assignment of~$\varphi$.
		Hence, all words in~$(L({A_\varphi})\cap \NonRep(n))\setminus \{\lambda\}$ are pumpable by their suffix of length~$\operatorname{mod} n$ which implies that~$W^{\mpc}(L({A_\varphi})\cap \NonRep(n))=\{\lambda\}$.
		
		On the other hand, let~$w^m$ be an unsatisfying assignment of~$\varphi$. Then~$q_0$
		is mapped by~$w^m$ onto one of the non-accepting sink states~$q_{m,n,k}$ for~$-1\le k\le 2$ due to Claim~\ref{clm:varphi-Assignments-to-accepting-sink}.
		Since~$w\in L({A_\varphi})\cap \NonRep(n)$ and $\lambda$ is the only factor of~$w$ which is in~$L({A_\varphi})\cap \NonRep(n)$, we obtain that~$w$ cannot be pumped by any of its factors.
		Therefore, we have~$w\in W(L({A_\varphi}) \cap \NonRep(n))$ which implies that~$W(L({A_\varphi})\cap \NonRep(n))\ne \{\lambda\}$.
		\iflipics\else\qed\fi
	\end{proof}

	Let~$B$ be the \dfa\ accepting the language~$\{\lambda\}$.
	Therefore, the formula~$\varphi$ is a tautology if and only if~$\langle {A_\varphi},B\rangle$ is a yes instance of the \textsc{Witness-Language-Problem}
	w.r.t. Lemma~\ref{lem:mpc} (Lemma~\ref{lem:mpl}, \resp).
	We observe that the \dfa s~${A_\varphi}$ and~$B$  as well as the \dfa  ~for the language $\NonRep(n)$ can be constructed in polytime from~$\varphi$ and they have~$(m+1)\cdot n+2$, $2$ and $n+1$ states which are polynomials in the size of~$\varphi$.
	Therefore, the previously shown construction\footnote{Since we have~$\mpc(L({A_\varphi}))=1$ the construction re-establishes the \co\NP-hardness of the \textsc{Pumping-Problem}.} is a polytime reduction from \textsc{TAUTOLOGY} onto the \textsc{Witness-Language-Problem}.
	Thus, the \textsc{Witness-Language-Problem} for \dfa s w.r.t. Lemma~\ref{lem:mpc} is \co\NP-hard.
	\iflipics\else\qed\fi
	\end{proof}
\fi

Putting the results together, we get:

\begin{corollary}\label{cor:witness-language-problem-conp-complete}
	Let~$A$ and~$B$ be \dfa s. Then it is \co\NP-complete to decide
	whether~$W(L(A))=L(B)$ holds w.r.t.\ Lemma~\ref{lem:mpc}.\qed
\end{corollary}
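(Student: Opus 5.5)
The corollary combines two results already proved in this section. Membership in \co\NP\ is exactly \Cref{thm:reg-reg-mpc-conp-containment}. \co\NP-hardness is \Cref{thm:reg-reg-mpc-conp-hard}. The only work is to confirm that both statements concern the same decision problem: the input is a pair of \dfa s $\langle A,B\rangle$, the pumping lemma fixed is Lemma~\ref{lem:mpc}, and the question is whether $W(L(A))=L(B)$, where $W$ means $W^{\mpc}$.

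For the upper bound, I would quote the \co\NP\ algorithm. It universally guesses a word $w$ of length polynomially bounded in $\dsc(A)+\dsc(B)$. By \Cref{lem:witness-language-finite} and $\mpc(L(A))\le\dsc(A)$, every $\mpc$-witness has length below $\dsc(A)$. Consequently, any word of length at least $\dsc(A)$ that is accepted by $B$ is itself a counterexample and can be detected on that branch. The algorithm then enumerates the quadratically many decompositions $w=xyz$. For each one it tests $y^*\subseteq L(C)$, where $C$ accepts the quotient $(x^{-1}L(A))z^{-1}$, using the \L\ procedure from Lemma~\ref{lem:bounded-inclusion}. Since $\L\subseteq\P$, this oracle is absorbed into polynomial time and each branch is deterministic polynomial time. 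This places the problem in \co\NP.

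For the lower bound, I would use the reduction from \textsc{Tautology} in \Cref{thm:reg-reg-mpc-conp-hard}. From a 3DNF formula $\varphi$ it builds $A_\varphi$, intersects with the polynomial-size \dfa\ for $\NonRep(n)$ via the product construction (still polynomial size), and takes $B$ with $L(B)=\{\lambda\}$. Claim~\ref{clm:tautology-equals-only-one-witness} then states that $\varphi$ is a tautology if and only if the resulting instance is a yes-instance. Since \textsc{Tautology} is \co\NP-complete, hardness follows.

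The main thing to check, and the only potential obstacle, is uniformity between the two statements. The hardness instance uses the product automaton for $L(A_\varphi)\cap\NonRep(n)$ rather than $A_\varphi$ itself. I would remark that this product is an ordinary \dfa\ of size $O(((m+1)n+2)(n+1))$ that is computable in polynomial time, so it is a legitimate input $A$. With that noted, combining the two results proves \co\NP-completeness.
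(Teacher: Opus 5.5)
Your route is the paper's route: the corollary is the conjunction of the \coNP\ membership theorem and the \coNP-hardness theorem, and your remark that the product automaton for $L(A_\varphi)\cap\NonRep(n)$ is a legitimate polynomial-size input is correct. However, uniformity is not the only obstacle, because the hardness ingredient you quote, Claim~\ref{clm:tautology-equals-only-one-witness}, does not hold.

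Write $L'=L(A_\varphi)\cap\NonRep(n)$; it contains $\lambda$. Suppose $L'\neq\{\lambda\}$ and let $u$ be a shortest non-empty word of $L'$. Take any decomposition $u=xyz$ with $y\neq\lambda$. Either $xz\neq\lambda$, and then $xz\notin L'$ by minimality. Or $u=y$, and then $y^n$ is a non-empty word whose length is a multiple of $n$, so $y^n\notin\NonRep(n)$. Hence $u$ admits no pumping decomposition at all and lies in $W^{\mpc}(L')$.

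Consequently $W^{\mpc}(L')=\{\lambda\}$ holds if and only if $L'=\{\lambda\}$. That is, $A_\varphi$ accepts no word whose length is not a multiple of $n$. By Claim~\ref{clm:clause-Assignments-to-accepting-sink}, for $n\ge 2$ this means that no conjunction of $\varphi$ is satisfiable. So every tautology is mapped to a no-instance, and the property actually tested is decidable in polynomial time.

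The claim's proof breaks exactly here. It uses words of length $n$ as witnesses, although $\NonRep(n)$ excludes them. It also pumps by the suffix of length $|w|\bmod n$, which for $t=0$ yields a non-empty word of length divisible by $n$. This defect is inherited from the paper's hardness proof. Since your argument rests on it, the hardness half needs a repaired reduction before it can be cited.

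On the membership side, your extra reasoning points the right way but must be made explicit, because the algorithm you quote does not do it. Guessing only words with $1\le|w|\le\max\{\dsc(A),\dsc(B)\}$ is insufficient. Take $L(A)=\{\ta\}$ and $L(B)=\ta(\ta^3)^*$, both accepted by $3$-state \dfa s. Every word of length $1$ to $3$ is a witness exactly when it lies in $L(B)$, yet $\ta^4\in L(B)\setminus W^{\mpc}(L(A))$.

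Your sentence that long words of $L(B)$ ``can be detected on that branch'' needs the following observation. If $B$ accepts some word of length at least $\dsc(A)$, then removing cycles of $B$ yields an accepted word of length below $\dsc(A)+\dsc(B)$. So the guessed range must extend up to $\dsc(A)+\dsc(B)-1$. Alternatively, check deterministically in polynomial time that $B$ accepts no word of length at least $\dsc(A)$. The empty word must also be compared separately, since $\lambda\in W^{\mpc}(L(A))$ if and only if $\lambda\in L(A)$.
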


Note that Theorem~\ref{thm:reg-reg-mpc-conp-hard} does not directly apply to $\mpl$ 
 as the reduction relies on the fact that words are pumpable by their suffix of length~$n+(|w|\bmod n)$. This is only possible for~$\mpl$ if~$\mpl(L(A_{\varphi}))\ge|w|$. Since all assignments which fulfil only the last conjunction cannot be pumped by their prefix of length~$nm$ we have~$\mpl(L(A_{\varphi}))\ge nm$ and thus~$W^{\mpl}(L(A_{\varphi}))$ does not contain any words shorter than~$nm$ independently of the choice of~$\varphi$.



	
	\section{Conclusions}
	In this work, we continued the investigation of witness languages first introduced in \cite{Ra25}. 
First, we examined the witness property for powers of words within pumping lemmata for non-regular languages and proved that for every non-regular language with infinite scattered factor distance, there exists a word that witnesses the non-regularity of this language by pumping with $t = 0$ w.r.t.~\Cref{lem:mpl}. 
As a second step, we pursued with research on regular languages. 
We established several language-theoretic properties of witness languages, including the result that every witness language of a regular language is itself regular. 
Furthermore, we derived bounds on the state complexity of witness languages. As the concept of witnesses is rather intricate in the general case, we subsequently restricted our attention to unary witness languages, for which we provided precise descriptions of witness languages alongside combinatorial properties on witnesses. 
In addition, we introduced and investigated the \textsc{Pumping-Witness-Language-Problem}, namely the decision problem whether a given regular language is the witness language of another language, and we proved that this problem is \coNP-complete. From this, we conclude that the decision problems $W(L) \subseteq W(L')$, $W(L) \cap W(L') = \emptyset$, and $W(L) = W(L')$ for regular languages $L,L'$ are all \coNP-complete for \Cref{lem:mpc} (w.r.t.~$\mpc$). 
	
	\newpage

	\bibliography{shortcortana,cortana}

@Preamble{"\def\lastmodified{23.12.2008}"}

@Preamble{"\def\id#1{#1}"}

@article{B,
	author = 	{R. Greenlaw and H. J. Hoover and W. L. Ruzzo},
	title = 	{Limits to Parallel Computation; {P}-Completeness Theory}
}

@book{Br84,
	author = 	{W. Brauer},
	title = 	{Automatentheorie: {E}ine {E}inf{\"u}hrung in die {T}heorie endlicher {A}utomaten},
	publisher = 	{Teubner Stuttgart},
	year = 		1984,
	series = 	{Leitf{\"a}den und {M}onographien der {I}nformatik},
	note =		{(in German)}
}

@book{Ha78,
   author = "M. A. Harrison",
   publisher = "Addison-Wesley",
   title = "Introduction to Formal Language Theory",
   year = "1978"
}

@book{HoUl79,
   	author = "J. E. Hopcroft and J. D. Ullman",
   	publisher = "Addison-Wesley",
   	title = "Introduction to Automata Theory, Languages and Computation",
   	year = "1979"
}

@article{Ni82,
	author = 	{A. Nijholt},
	title = 	{{YABBER}---Yet Another Bibliography: Pumping Lemma's. An Annotated Bibliography of Pumping},
	journal = 	{Bulletin of the European Association for Theoretical Computer Science},
	volume = 	{\textbf{17}},
	pages = 	{35--53},
	year = 		1982
}

@article{RaSc59,
	author = 	{M. O. Rabin and D. Scott},
	title = 	{Finite Automata and Their Decision Problems},
	journal = 	{IBM},
	volume = 	{\textbf{3}},
	pages = 	{114--125},
	year = 		1959
}

@article{Hi52,
	author = 	{G. Higman},
	title = 	{Ordering by Divisibility in Abstract Algebras},
	journal = 	{Proceedings of the London Mathematical Society},
	volume = 	3,
	number = 	2,
	pages = 	{326--336},
	year = 		1952
}

@book{HaWr79,
	author = 	{G. H. Hardy and E. M. Wright},
	title = 	{An Introduction to the Theory of Numbers},
	publisher = 	{Oxford Science Publications},
	year = 		{1979}
}

@inproceedings{GrHoRa24b,
  author       = {H. Gruber and M. Holzer and C. Rauch},
  editor       = {J. D. Day and
                  F. Manea},
  title        = {The Pumping Lemma for Context-Free Languages is Undecidable},
  booktitle    = {{DLT}},
  series       = {LNCS},
  volume       = {\textbf{14791}},
  pages        = {141--155},
  publisher    = {Springer},
  year         = {2024}
}

@inproceedings{GrHoRa24c,
  author       = {H. Gruber and M. Holzer and C. Rauch},
  title        = {On Pumping Problems for Unary Regular Languages},
  editor       = {R. Kr{\'{a}}lovic and
                  V. Kurkov{\'{a}}},
  
  booktitle    = {{SOFSEM}},
  series       = {LNCS},
  volume       = {15539},
  pages        = {3--16},
  publisher    = {Springer},
  year         = {2025}
}

@article{Ja78,
	author = 	{J. Jaffe},
	title = 	{A necessary and sufficient pumping lemma for regular languages},
	journal = 	{SIGACT News},
	volume = 	{\textbf{10}},
	number = 	{2},
	pages = 	{48--49},
	year = 		1978
}

@book{Ko97,
	author = 	{D. C. Kozen},
	title = 	{Automata and Computability},
	publisher = 	{Springer},
	year = 		1997,
	series = 	{Undergraduate Texts in Computer Science}
}

@article{DaJe22,
	author = 	{J. Dassow and I. Jecker},
	title = 	{Operational complexity and pumping lemmas},
	journal = 	{AI},
	volume = 	{\textbf{59}},
	pages = 	{337-355},
	year = 		{2022}
}

@inproceedings{HoRa23a,
        author =        {M. Holzer and C. Rauch},
	title =         {On {J}affe’s Pumping Lemma, Revisited},
        booktitle =     {Proceedings of the $25$th International Conference on Descriptional Complexity of Formal Systems},
        month =         jul,
        year =          2023,
	pages =		{65--78},
        editor =        {H. Bordihn and N. Tran and G. Vaszil},
        address =       {Potsdam, Germany},
        series =        {LNCS},
	number =	13918,
        publisher =     {Springer}
}

@inproceedings{GHR23,
        author =        {H. Gruber and M. Holzer and C. Rauch},
	title =         {The Pumping Lemma for Regular Languages is Hard},
        booktitle =     {Proceedings of the $27$th International Conference on Implementation and Application of Automata},
        month =         sep,
        year =          2023,
	pages =		{128--140},
        address =       {Famagusta, Cyprus},
        series =        {LNCS},
	number =	14151,
        publisher =     {Springer}
}

@inproceedings{HoRa23d,
        author =        {M. Holzer and C. Rauch},
        title =         {On Minimal Pumping Constants for Regular Languages},
        booktitle =     {Proceedings of the $16$th International Conference on Automata and Formal Languages},
        month =         sep,
        year =          2023,
        pages =         {127--141},
        number =        386,
        editor =        {Z. Gazdag and S. Iv{\'a}n and G. Kov{\'a}sznai},
        address =       {Eger, Hungary},
        series =        {EPTCS}
}

@inproceedings{Ka14,
	author = 	{D. Kaloci{\'n}ski},
	title = 	{On Computability and Learnability of the Pumping Lemma Function},
	booktitle = 	{Proceedings of the $8$th International Conference Language and Automata Theory and Applications},
	year = 		2014,
	address = 	{Madrid, Spain},
	pages = 	{433--440},
	series = 	{LNCS},
	number = 	8370,
	publisher = 	{Springer}
}

@article{Ho78,
	author = 	{S. Horv{\'a}th},
	title = 	{The family of languages satisfying {B}ar-{H}illel’s lemma},
	journal = 	{RAIRO},
	volume = 	{\textbf{12}},
	number = 	{3},
	pages = 	{193--199},
	year = 		1978
}

@article{So80,
	author = 	{R. Sommerhalder},
	title = 	{Classes of languages proof against regular pumping},
	journal = 	{RAIRO},
	volume = 	{\textbf{14}},
	number = 	{2},
	pages = 	{169--18},
	year = 		{1980}
}

@inproceedings{MPPS20,
  author       = {F. Mr{\'{a}}z and
                  D. Pardubsk{\'{a}} and
                  M. Pl{\'{a}}tek and
                  J. {\v S}{\'{i}}ma},
  editor       = {M. Holena and
                  T.s Horv{\'{a}}th and
                  A. Kelemenov{\'{a}} and
                  F. Mr{\'{a}}z and
                  D. Pardubsk{\'{a}} and
                  M. Pl{\'{a}}tek and
                  P. Sos{\'{\i}}k},
  title        = {Pumping Deterministic Monotone Restarting Automata and {DCFL}},
  booktitle    = {Proceedings of the 20th Conference Information Technologies - Applications
                  and Theory {(ITAT} 2020), Hotel Tyrapol, Oravsk{\'{a}} Lesn{\'{a}},
                  Slovakia, September 18-22, 2020},
  series       = {{CEUR} Workshop Proceedings},
  volume       = {\textbf{2718}},
  pages        = {51--58},
  publisher    = {CEUR-WS.org},
  year         = {2020}
}

@inproceedings{PMPPS21,
  author       = {M. Pl{\'{a}}tek and
                  F. Mr{\'{a}}z and
                  D. Pardubsk{\'{a}} and
                  D. Pr{\r u}{\v s}a and
                  J. {\v S}{\'{i}}ma},
  editor       = {B. Brejov{\'{a}} and
                  L. Ciencialov{\'{a}} and
                  M. Holena and
                  F. Mr{\'{a}}z and
                  D. Pardubsk{\'{a}} and
                  M. Pl{\'{a}}tek and
                  T. Vinar},
  title        = {On Separations of LR(0)-Grammars by Two Types of Pumping Patterns},
  booktitle    = {Proceedings of the 21st Conference Information Technologies - Applications
                  and Theory {(ITAT} 2021), Hotel He{\v{l}}pa, N{\'{\i}}zke Tatry
                  and Mur{\'{a}}nska planina, Slovakia, September 24-28, 2021},
  series       = {{CEUR} Workshop Proceedings},
  volume       = {\textbf{2962}},
  pages        = {140--146},
  publisher    = {CEUR-WS.org},
  year         = {2021},
}

@book{Ra25,
	author		= {Rauch, C.},
	title		= {On the Complexity on the Pumping},
	year		= {2025},
	school   	= {University of Gie{\ss}en, Germany},
	publisher 	= {Logos Berlin}
}

@inproceedings{HoRa25,
  author       = {Holzer, M. and
                  Rauch, C.},
  editor       = {Castiglione, G. and
                  Mantaci, S.},
  title        = {More on Language Families with a Decidable Pumping-Problem (Extended
                  Abstract)},
  booktitle    = {{CIAA} 2025},
  series       = {LNCS},
  volume       = {15981},
  pages        = {223--236},
  publisher    = {Springer},
  year         = {2025}
}

@article{EhPaRo81,
	author       = {A. Ehrenfeucht and
	R. Parikh and
	G. Rozenberg},
	title        = {Pumping Lemmas for Regular Sets},
	journal      = {{SIAM} J. Comput.},
	volume       = {10},
	number       = {3},
	pages        = {536--541},
	year         = {1981}
}

@Inbook{deLuVa99,
	author="de Luca, A.
	and Varricchio, S.",
	title="Regularity Conditions",
	bookTitle="Finiteness and Regularity in Semigroups and Formal Languages",
	year="1999",
	publisher="Springer Berlin Heidelberg",
	address="Berlin, Heidelberg",
	pages="179--194"
}

@Inbook{GrHo25,
	author="Gruber, H. and Holzer, M.",
	editor="Jim{\'e}nez L{\'o}pez and M. Dolores and Vaszil, Gy{\"o}rgy",
	title="On Pumping Constants and Smallest Grammars for Context-Free Languages",
	bookTitle="Languages of Cooperation and Communication: Essays Dedicated to Erzs{\'e}bet Csuhaj-Varj{\'u} to Celebrate Her Scientific Career",
	year="2025",
	publisher="Springer Nature Switzerland",
	address="Cham",
	pages="54--68",
	isbn="978-3-031-97274-4"
}

@misc{MaMaZe26,
      title={The complexity of downward closures of indexed languages}, 
      author={Richard Mandel and Corto Mascle and Georg Zetzsche},
      year={2026},
      eprint={2601.19466},
      archivePrefix={arXiv},
      primaryClass={cs.FL},
      url={https://arxiv.org/abs/2601.19466}, 
}

@inproceedings{DBLP:conf/stacs/Danieli26,
	author       = {Y. Danieli},
	title        = {A Pumping-Like Lemma for Languages over Infinite Alphabets},
	booktitle    = {{STACS} 2026},
	series       = {LIPIcs},
	volume       = {364},
	pages        = {29:1--29:19},
	publisher    = {Schloss Dagstuhl - Leibniz-Zentrum f{\"{u}}r Informatik},
	year         = {2026}
}

@inproceedings{DBLP:conf/stacs/MazowieckiPS26,
	author       = {F. Mazowiecki and
	A. Puch and
	D. Smertnig},
	title        = {Pumping-Like Results for Copyless Cost Register Automata and Polynomially
	Ambiguous Weighted Automata},
	booktitle    = {{STACS} 2026},
	series       = {LIPIcs},
	volume       = {364},
	pages        = {67:1--67:21},
	publisher    = {Schloss Dagstuhl - Leibniz-Zentrum f{\"{u}}r Informatik},
	year         = {2026},
}

@string{jul	= {Juli}}

@string{sep	= {September}}

@string{jul	= {July}}

@string{AI   = {Acta Informatica}}

@string{IBM = {IBM Journal of Research and Development}}

@string{LNCS = {Lecture Notes in Computer Science}}

@string{LIPIcs = {Leibniz International Proceedings in Informatics}}

@string{STACS = {Annual Symposium on Theoretical Aspects of Computing}}

@string{RAIRO = {RAIRO--Informatique th{\'e}orique et Applications / Theoretical Informatics and Applications}}

@string{Springer = {Springer-Verlag}}

@string{EPTCS  = {Electronic Proceedings in Theoretical Computer Science}}

@string{jul	= {}}

@string{sep	= {}}

@string{AI   = {Acta Inform.}}

@string{IBM = {IBM J. Res. Dev.}}

@string{LNCS = {LNCS}}

@string{LIPIcs = {LIPIcs}}

@string{STACS = {Symposium on Theoretical Aspects of Computing}}

@string{Springer = {Springer}}

@string{EPTCS  = {EPTCS}}
	
	\appendix

	
	\printcollectedappendix
	
	
	%
	
\end{document}
